\documentclass[a4paper,UKenglish,pdfa,thm-restate,numberwithinsect]{article}

\usepackage[utf8]{inputenc}
\usepackage{fullpage}
\usepackage{enumitem}
\usepackage{amsmath, amsthm, amssymb, mathtools, xcolor, framed, tikz}
\usepackage{thmtools}
\usepackage{algorithm}
\usepackage{algpseudocode}
\usetikzlibrary{positioning}
\usetikzlibrary{shapes}
\usetikzlibrary{decorations.pathreplacing}
\usetikzlibrary{calc,patterns,angles,quotes}
\usetikzlibrary{intersections}
\usepackage{hyperref}
\usepackage{float}
\hypersetup{colorlinks=true}
\usepackage[capitalise]{cleveref}
\usepackage{soul}
\usepackage{cite}
\usepackage{mathabx}
\usepackage{regexpatch}

\makeatletter
\@addtoreset{ALG@line}{algorithm}
\makeatother


\newcommand{\Ddt}{\textrm{D}^{dt}}
\newcommand{\Rdt}{\textrm{R}^{dt}}
\newcommand{\DmAdt}{\textrm{D}^{\wedge dt}}
\newcommand{\RmAdt}{\textrm{R}^{\wedge dt}}

\newcommand{\DSize}{\textrm{DSize}^{dt}}
\newcommand{\RSize}{\textrm{RSize}^{dt}}

\newcommand{\s}{\textrm{s}}

\newcommand{\mbs}{\textrm{MBS}}

\newcommand{\spar}{\textrm{spar}}
\newcommand{\gspar}{\textrm{gspar}}
\newcommand{\gdeg}{\textrm{gdeg}}
\newcommand{\aspar}{\widetilde{\textrm{spar}}}
\newcommand{\adeg}{\widetilde{\textrm{deg}}}
\newcommand{\agspar}{\widetilde{\textrm{gspar}}}
\newcommand{\gadeg}{\widetilde{\textrm{gdeg}}}
\newcommand{\OR}{\mbox{{\sc Or}}}
\newcommand{\pOR}{\mbox{{\sc promise-Or}}}
\newcommand{\AND}{\mbox{{\sc And}}}
\newcommand{\XOR}{\mbox{{\sc Xor}}}
\newcommand{\Parity}{\mbox{{\sc Parity}}}
\newcommand{\Thr}{\mbox{{\sc Thr}}}
\renewcommand{\epsilon}{\varepsilon}
\newcommand{\samplerho}{\textsc{MaxDegreeRestriction}}
\newcommand{\samplerhog}{\textsc{MaxSensitivityRestriction}}
\newcommand{\supp}{V}
\newcommand{\vars}{\mathsf{Vars}}
\newcommand{\polylog}{\mathrm{polylog}}
\newcommand{\cF}{\mathcal{F}}
\newcommand{\cP}{\mathcal{P}}
\newcommand{\cN}{\mathcal{N}}
\newcommand{\srho}[1]{|#1|_*}
\newcommand{\dmnorm}[1]{\textrm{wt}({#1})}
\newcommand{\dmnormapprox}[2]{\widetilde{\textrm{wt}}_{#2}({#1})}

\newcommand{\wt}{\textrm{wt}}
\newcommand{\wtapprox}{\widetilde{\textrm{wt}}}

\newcommand{\gdmnorm}[1]{\textrm{gwt}({#1})}
\newcommand{\gdmnormapprox}[1]{\widetilde{\textrm{gwt}}({#1})}


\declaretheorem[name=Theorem,numberwithin=section]{theorem}

\declaretheorem[name=Claim,sibling=theorem]{claim}
\declaretheorem[name=Observation,sibling=theorem]{observation}

\declaretheorem[name=Definition,sibling=theorem]{definition}

\declaretheorem[name=Remark,sibling=theorem]{remark}
\declaretheorem[name=Question,sibling=theorem]{question}


\makeatletter
\xpatchcmd\thmt@restatable{%
  \csname #2\@xa\endcsname\ifx\@nx#1\@nx\else[{#1}]\fi
}{%
  \ifthmt@thisistheone
    \csname #2\@xa\endcsname\ifx\@nx#1\@nx\else[{#1}]\fi
  \else
    \csname #2\@xa\endcsname[{Restated}]
  \fi
}{}{}
\makeatother



\title{Exact versus Approximate Representations of Boolean Functions in the De Morgan Basis}
\author{
Arkadev Chattopadhyay
\thanks{TIFR, Mumbai. Email: {\tt arkadev.c@tifr.res.in}. Supported by the Department of Atomic Energy, Govmt. of India, under project \#RTI4001 and by a Google India Faculty Award.} 
\and 
Yogesh Dahiya 
\thanks{TIFR, Mumbai. Email: {\tt yogesh.dahiya@tifr.res.in}. Supported by the Department of Atomic Energy, Govmt. of India, under project \#RTI4001 and the R. Narasimhan postdoctoral fellowship.} 
\and 
Shachar Lovett 
\thanks{UC San Diego. Email: {\tt slovett@ucsd.edu}. Supported by Simons Investigator Award \#929894 and NSF award CCF-2425349.} 
}
\date{}

\begin{document}
\maketitle
\begin{abstract}
A seminal result of Nisan and Szegedy (STOC, 1992) shows that for any total Boolean function, the degree of the real polynomial that computes the function, and the minimal degree of a real polynomial that point-wise approximates the function, are at most polynomially separated. Extending this result from degree to other complexity measures like sparsity of the polynomial representation, or total weight of the coefficients, remains poorly understood. 

In this work, we consider this problem in the De Morgan basis, and prove an analogous result for the sparsity of the polynomials at a logarithmic scale. Our result further implies that the exact $\ell_1$ norm and its approximate variant are also similarly related to each other at a log scale. This is in contrast to the Fourier basis, where the analog of our results are known to be false. 

Our proof is based on a novel random restriction method. Unlike most existing random restriction methods used in complexity theory, our random restriction process is adaptive and is based on how various complexity measures simplify during the restriction process.
\end{abstract}

\section{Introduction}
Polynomial representations of Boolean functions have been invaluable in theoretical computer science and discrete mathematics. While the representation could use any field, in this work we consider only polynomials over the reals. Two bases are particularly prominent.

The first arises naturally by viewing the domain of Boolean functions as $\{0,1\}^n$, and hence, every multilinear monomial just represents the Boolean $\AND$ of a subset of variables. This is known as the \emph{De Morgan basis}. The other basis comes about by viewing the domain as $\{1,-1\}^n$ which is a simple linear transformation of $\{0,1\}^n$ that maps $0 \mapsto 1$ and $1 \mapsto -1$. In this basis, called the \emph{Fourier basis}, each monomial represents the Boolean parity of a subset of variables. 

Two natural complexity measures show up in either basis: the degree and sparsity of the representation. As every Boolean function has a unique representation in either basis, it is usually quite straightforward to determine the degree and sparsity of the unique representation for a function $f$, which we denote by $\deg(f)$ and $\spar(f)$ in the De Morgan basis, and by $\deg^{\oplus}(f)$, and $\spar^{\oplus}(f)$ in the Fourier basis. The linear invertible mapping from one basis to the other ensures that for every $f$, $\deg(f) = \deg^{\oplus}(f)$. But sparsity can be very sensitive to the basis chosen. For example, the $n$-bit AND function has sparsity 1 in the De Morgan basis and $2^n$ in the Fourier basis; and the $n$-bit Parity function has sparsity $1$ in the Fourier basis and $2^n$ in the De Morgan basis.
We understand reasonably satisfactorily exact polynomial representations of Boolean functions. However, when we turn to approximations, the picture becomes subtler.

Classical approximation theory deals with polynomials that point-wise approximate functions. In an influential work, Nisan and Szegedy \cite{nisan1994degree} introduced this notion to the study of Boolean functions. In particular, they defined the complexity measure of approximate degree of a Boolean function $f$, denoted by $\adeg(f)$, to be the smallest degree needed by a polynomial to point-wise approximate $f$ to within a constant distance that is, by default, taken to be $1/3$. Observe that the same reasoning as applied above for exact degree implies that approximate degree of a function is also a measure that is independent of the basis.  The notion of approximate degree has had tremendous impact in computer science as it is related to many other complexity measures including the randomized and quantum query complexity of $f$~\cite{BuhrmanW02,BealsBCMW01, AaronsonS04, BunKT20,BunT22} and the quantum and classical communication complexity of appropriately lifted functions~\cite{BuhrmanW01, R03, SZ09, Sherstov11, ChattopadhyayA08, LeeS09, BeameH09,Sherstov14}. It has also found applications in learning theory~\cite{klivans2004learning,kalai2008agnostically}, differential privacy \cite{thaler2012faster,chandrasekaran2014faster}, secret sharing~\cite{BIVW16,BMTW19}, and many other areas. Unlike exact degree, getting tight bounds on approximate degree often turns out to be challenging. However, Nisan and Szegedy proved a remarkable structural result that for every total Boolean function $f$, the approximate and exact degree of $f$ are polynomially related to each other. One of the striking applications of this result is the polynomial equivalence of quantum and classical query models for total Boolean functions, first derived by Beals et al. \cite{BealsBCMW01}. In a much more recent work, building upon Huang's breakthrough proof \cite{Huang-19} of the sensitivity conjecture, Aaronson {et al.} \cite{aaronson2021degree} finally gave a tight relationship between the two measures by showing that $\deg(f) = O(\adeg(f))^2$. The tightness is witnessed by the Boolean $\AND$ and $\OR$ functions.

Given Nisan and Szegedy's result, one naturally wonders if approximation could reduce \emph{sparsity} needed for total functions. In the Fourier basis, it is known that approximation does reduce sparsity exponentially. For instance, the Fourier sparsity of the $n$-bit $\AND$ function is $2^n$. However, it can be shown that its Fourier approximate-sparsity is $O(n^2)$ (implicit in Bruck and Smolesnky \cite[appendix]{BS90} and explicit in \cite[Lemma 2.8]{CMS20}).  Surprisingly, the question if approximation helps significantly in the De Morgan basis remained unaddressed.  Very recently, Knop {et al.} \cite{knop2021guest} conjectured that in the De Morgan basis, approximation should not significantly reduce sparsity for any total Boolean function. Our main result, stated below, confirms this conjecture. We denote by $\aspar(f)$, the approximate-sparsity of $f$ in the De Morgan basis.

\begin{restatable}[Main Theorem]{theorem}{thmsparaspar}\label{thm:spar-aspar}
For every total Boolean function $f : \{0,1\}^n \to \{0,1\}$, we have
\[
    \log(\spar(f)) = O\big(\log(\aspar(f))^2 \cdot \log n\big).
\]
\end{restatable}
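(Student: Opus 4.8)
The plan is to reduce the sparsity bound to a statement about the \emph{approximate $\ell_1$ norm} $\wtapprox(f)$ and then prove the contrapositive via a random restriction argument. First I would recall the standard dictionary between sparsity and $\ell_1$ norm in the De Morgan basis: a function with small $\aspar$ can be approximated by a sparse polynomial whose coefficients, after a suitable rounding of negligible terms, have bounded $\ell_1$ norm, so it suffices to show $\log \spar(f) = O(\log \wtapprox(f)^2 \cdot \log n)$ (up to the usual polynomial slack, and using that $\wt(f) \le \spar(f)$ trivially while $\spar(f) \le n^{O(\deg f)}$). The real content is therefore: if $f$ has an exact De Morgan polynomial that is very sparse (equivalently, huge $\wt$, i.e. $\log \wt(f)$ large), then every bounded-error approximating polynomial must also be quite complex, i.e. $\log \wtapprox(f)$ is not too small.

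The core of the argument is an \emph{adaptive random restriction}. I would design a random restriction process \samplerho{} that, at each step, picks a variable to fix (and a value to fix it to) according to how the current complexity measure --- the exact $\ell_1$ weight, or equivalently a potential built from the sparse representation --- behaves on the two subfunctions $f|_{x_i=0}$ and $f|_{x_i=1}$. The key invariant to maintain is that, with decent probability, after restricting down to a sub-cube of dimension roughly $k$, the restricted function $f|_\rho$ still has large exact weight, $\log \wt(f|_\rho) \ge \Omega(\log \wt(f) / \text{something})$, while simultaneously the approximate weight can only decrease under restriction, $\wtapprox(f|_\rho) \le \wtapprox(f)$. The point of choosing the restriction adaptively (rather than the usual $p$-random restriction) is to guarantee that the \emph{gap} between exact and approximate weight does not collapse --- a uniformly random restriction could kill the exact weight just as fast as the approximate weight, which is exactly why the Fourier-basis analog fails and why adaptivity is essential here.

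Once the dimension is driven down far enough, I would invoke a base case: on a low-dimensional cube, a function cannot have exact weight much larger than its approximate weight. Concretely, on $k$ variables one has the trivial bound $\log \wt(f|_\rho) \le O(k)$ (since $\spar \le 2^k$), so if the restriction preserved $\log \wt(f|_\rho) \ge \Omega(\log \wt(f)/\log n)$ while reaching $k \approx \log \aspar(f)$ dimensions, then chaining the inequalities yields $\log \wt(f) \le O(\log\aspar(f) \cdot \log n)$, and one factor of $\log \aspar(f)$ in the final bound comes from the number of restriction phases (each phase costs a $\log\aspar(f)$-type factor, roughly in the spirit of how Nisan–Szegedy iterate), giving the claimed $\log(\aspar(f))^2 \cdot \log n$. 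The main obstacle, and where essentially all the work lies, is proving that the adaptive restriction can be run for enough steps while provably keeping the exact weight large: one needs a quantitative lemma saying that for \emph{some} choice of variable and value, $\wt(f|_{x_i=b})$ is not much smaller than $\wt(f)$ (a ``weight does not split evenly'' statement), and then amortizing these local guarantees across all restriction steps without the failure probabilities accumulating --- this is the analog of the sensitivity/block-sensitivity bookkeeping in Nisan–Szegedy, but for $\ell_1$ weight rather than degree, and it is considerably more delicate because weight, unlike degree, is not subadditive in a way that immediately controls the two branches.
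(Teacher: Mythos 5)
The preliminary reduction already has a gap. You propose to pass from $\aspar(f)$ to the approximate $\ell_1$ norm by claiming that a sparse approximant can, after rounding, be taken to have bounded weight; in the De Morgan basis this is not a standard dictionary and is not true in any obvious sense: by M\"obius inversion a multilinear polynomial with values bounded by $O(1)$ on the cube can have coefficients as large as $2^{\deg}$, so small $\aspar(f)$ does not give small $\dmnormapprox{f}{}$. The known relation (Grolmusz / Bruck--Smolensky, cited in the paper) goes the other way, $\log \aspar(f) = O(\log \dmnormapprox{f}{} + \log n)$, so even if you succeeded in lower-bounding the approximate \emph{weight}, that would not yield the needed lower bound on approximate \emph{sparsity}. (Also, the parenthetical ``$\wt(f) \le \spar(f)$ trivially'' is backwards: for Boolean $f$ integrality of coefficients gives $\spar(f) \le \dmnorm{f}$, which is precisely the inequality the paper uses so that it can start from the weaker hypothesis of large exact weight.)

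The core restriction argument is also missing the mechanism by which the approximant is actually used. The paper's adaptive restriction is run on the \emph{exact} polynomial of $f$ and guarantees three things: with high probability $\Omega(\log \dmnorm{f}/\log n)$ variables stay free; $f|_\rho$ has \emph{full degree} on the free variables (not merely large weight); and every fixed monomial keeps degree at least $t$ with probability at most $2^{-t}$. A union bound over the monomials of a purported approximant of sparsity below $\tfrac{1}{10}2^{\sqrt{\ell/c}}$ then kills all its high-degree monomials, and the contradiction comes from $\deg(g) = O(\adeg(g)^2)$ applied to $f|_\rho$. Your outline never invokes the sparsity of the approximating polynomial anywhere, and your base case --- ``on $k$ variables a function cannot have exact weight much larger than its approximate weight'' --- is either false ($\OR_k$ has exact weight $2^k-1$ but approximate weight $k^{O(\sqrt{k})}$) or is exactly the statement being proved, hence circular; similarly, the invariant ``keep $\log \dmnorm{f|_\rho}$ large'' is too weak, since only full degree of $f|_\rho$ lets the degree-versus-approximate-degree theorem bite. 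So while your instinct that an adaptive, weight-driven restriction is the heart of the matter (and that non-adaptive restrictions and the Fourier analogue fail) matches the paper in spirit, the proposal as written has genuine gaps both at the reduction step and at the final contradiction.
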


Before we continue, let us remark on the tightness of this result.

\begin{remark} \label{remark:main}
The $n$-bit $\OR$ function has sparsity $2^n - 1$ and approximate sparsity $2^{O(\sqrt{n} \log n)}$, showing that exponential gaps may exist between the two measures in the absolute scale. 
It is thus necessary to consider the log scale as done in Theorem~\ref{thm:spar-aspar} for seeking polynomial relationship between the two measures. The example of $\OR$ also demonstrates the tightness of our bound up to poly-logarithmic factors. Finally, the appearance of the ambient dimension $n$ in our result is unavoidable. Consider the function \( \Thr^n_{n-1} : \{0,1\}^n \to \{0,1\} \) defined by
\[
\Thr^n_{n-1}(x) = 1 \quad \text{if and only if} \quad |x| \geq n-1,
\]
namely, the function evaluates to 1 if the input has at most one zero. It's simple to verify that its exact sparsity is \( n + 1 \), %
and we show in section~\ref{subsec:spar-optimality}  that its approximate sparsity is \( O(\log n) \), implying that an additive \( O(\log n) \) or multiplicative \( O\left(\frac{\log n}{\log \log n}\right) \) factor is necessary in \cref{thm:spar-aspar}.

\end{remark}

One of the motivations of Nisan and Szegedy to study approximate degree was to relate this measure with decision tree complexity. Let $\Ddt(f)$ and $\Rdt(f)$ denote the deterministic and randomized bounded-error decision tree complexities of $f$ respectively. Their result, along with the recent improvement of \cite{aaronson2021degree} yields the following relationship.

\begin{theorem}[Nisan-Szegedy {\cite[Theorem 1.5]{nisan1994degree}}  + Aaronson {et al.} \cite{aaronson2021degree}]  \label{thm:NSAaronson}

For every total Boolean function $f$, the following holds:

$$ \adeg(f) \le \Rdt(f) \le \Ddt(f) \le O(\adeg(f)^4).$$

\end{theorem}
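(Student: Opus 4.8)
The plan is to establish the chain one inequality at a time; the two leftmost inequalities are elementary, while the rightmost one repackages the structural results of Nisan--Szegedy~\cite{nisan1994degree} and Aaronson et al.~\cite{aaronson2021degree}. For the first inequality, $\adeg(f)\le\Rdt(f)$, recall that a bounded-error randomized decision tree of cost $d:=\Rdt(f)$ is by definition a distribution $\mu$ over deterministic decision trees of depth at most $d$. For a deterministic tree $T$, the acceptance indicator of $T$ (which equals $1$ precisely when $T$ accepts the input) is computed by the multilinear polynomial $\sum_{\ell}\prod_i \ell_i(x)$, summing over accepting leaves $\ell$, where $\ell_i(x)\in\{x_i,1-x_i\}$ are the literals queried along the root-to-$\ell$ path; this polynomial has degree at most the depth of $T$, hence at most $d$. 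Averaging over $\mu$ produces a single polynomial $p$ with $\deg(p)\le d$ and $p(x)=\Pr_{T\sim\mu}[T\text{ accepts }x]$, so $|p(x)-f(x)|\le 1/3$ for every $x$, i.e.\ $\adeg(f)\le d$. The middle inequality $\Rdt(f)\le\Ddt(f)$ is immediate, since a deterministic tree is a zero-error randomized tree.

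\textbf{The hard inequality $\Ddt(f)=O(\adeg(f)^4)$.} Here I would compose three near-quadratic relations. (i) \emph{Nisan--Szegedy symmetrization}~\cite{nisan1994degree}: restricting an optimal approximating polynomial of $f$ to the $\bs(f)$-dimensional subcube obtained by toggling a maximal family of disjoint sensitive blocks at a point attaining $\bs(f)$, and then symmetrizing over coordinate permutations, yields a univariate polynomial of degree at most $\adeg(f)$ that is bounded on $\{0,1,\dots,\bs(f)\}$ but changes by $\Omega(1)$ between the arguments $0$ and $1$; an Ehlich--Zeller/Rivlin--Cheney type extremal inequality then forces $\bs(f)=O(\adeg(f)^2)$. (ii) \emph{The theorem of Aaronson et al.}~\cite{aaronson2021degree}, which builds on Huang's resolution of the sensitivity conjecture~\cite{Huang-19}, that $\deg(f)=O(\adeg(f)^2)$. (iii) \emph{The classical decision-tree bound} $\Ddt(f)=O(\deg(f)\cdot\bs(f))$ (see the references in~\cite{aaronson2021degree}), obtained by a certificate-chasing argument in which one repeatedly queries a small certificate of the currently restricted subfunction. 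Composing (i)--(iii) gives $\Ddt(f)=O(\deg(f)\cdot\bs(f))=O(\adeg(f)^2\cdot\adeg(f)^2)=O(\adeg(f)^4)$, as required.

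\textbf{Where the difficulty lies.} The genuinely deep step is (ii). With only Nisan--Szegedy and the classical relations among the exact measures $\deg,\bs,\Ddt$, one reaches only a bound with a larger exponent --- the record before \cite{aaronson2021degree} was $\Ddt(f)=O(\adeg(f)^6)$ --- and improving the exponent to $4$ is precisely what Huang's sensitivity theorem, as harnessed in \cite{aaronson2021degree}, buys. Since a self-contained proof would have to reproduce that spectral/combinatorial argument, in practice I would invoke \cite{aaronson2021degree,Huang-19} as a black box and confine the actual work to verifying that the elementary facts in (i) and (iii) compose as claimed; the leftmost two inequalities need no outside input at all.
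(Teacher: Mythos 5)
The paper itself offers no proof of this statement: it is stated as a black-box import, with the content attributed to Nisan--Szegedy and Aaronson et al., so there is no internal argument to compare yours against. Your reconstruction is correct and is essentially the standard literature derivation. The two left inequalities are handled exactly as one should: the acceptance-probability polynomial of a depth-$d$ randomized tree is a degree-$d$ $1/3$-approximant, giving $\adeg(f)\le \Rdt(f)$, and $\Rdt(f)\le\Ddt(f)$ is trivial. For the right inequality, your composition is the one actually used in \cite{aaronson2021degree}: $\bs(f)=O(\adeg(f)^2)$ (Nisan--Szegedy symmetrization plus the Ehlich--Zeller/Rivlin--Cheney inequality), $\deg(f)=O(\adeg(f)^2)$ (Aaronson et al., via Huang), and the classical bound $\Ddt(f)=O(\deg(f)\cdot\bs(f))$ due to Midrijanis, which together give $\Ddt(f)=O(\adeg(f)^4)$. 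One cosmetic inaccuracy: the proof sketch you attach to step (iii) --- repeatedly querying a certificate of the current subfunction --- is the proof of the weaker relation $\Ddt(f)\le C^{(1)}(f)\cdot\bs(f)$, which only yields the old $O(\adeg(f)^6)$ bound; the bound $\Ddt(f)=O(\deg(f)\cdot\bs(f))$ is instead proved by repeatedly querying all variables of a maximal-degree monomial (a ``maxonomial'') of the current restriction. Since you invoke that relation as a known result with references rather than relying on the sketch, this does not affect the correctness of your argument.
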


Just as $\adeg(f)$ lower bounds $\Rdt(f)$, it is straightforward to verify that $\log(\aspar(f))$, up to an additive \( \log n \) term, lower bounds the randomized $\AND$-decision tree (ADT) complexity of $f$. In an ADT, each internal node queries the $\AND$ of a subset of variables. ADT's have connections to combinatorial group testing algorithms and have also been the subject of several recent works \cite{Nisan21,BN21,knop2021log,chattopadhyay2023randomized}. We denote the deterministic and randomized ADT complexities of $f$ by $\DmAdt(f)$ and $\RmAdt(f)$ respectively. 

Combining our main result with the recent result of Knop {et al.} \cite{knop2021log} yields the following ADT analog of Theorem~\ref{thm:NSAaronson}.

\begin{restatable}{theorem}{sparsityADT}\label{thm:sparsity-ADT} 
For every total Boolean function \( f : \{0,1\}^n \to \{0,1\} \), the following holds:
\[
\Omega(\log(\aspar(f)) - \log n) \overset{(1)}{=} \RmAdt(f) \leq \DmAdt(f) \overset{(2)}{=} O\left((\log \aspar(f))^{6} \cdot \log n\right).
\]
\end{restatable}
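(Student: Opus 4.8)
The plan is to deduce \cref{thm:sparsity-ADT} as a fairly direct corollary of \cref{thm:spar-aspar} together with known results, so the work is mostly in assembling the right inequalities and handling the $\log n$ slack carefully.

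\medskip

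\noindent\textbf{The lower bound (1).} First I would establish $\RmAdt(f) = \Omega(\log(\aspar(f)) - \log n)$. The idea is that a randomized ADT of cost $d$ yields a low-sparsity approximating polynomial in the De Morgan basis: each deterministic ADT in the support of the randomized protocol is a decision tree of depth $d$ whose internal nodes query $\AND$s of subsets of variables; such a tree can be written as a real polynomial that is a sum of at most $2^d$ terms, one per leaf, where the leaf indicator is a product of at most $d$ factors each of the form ($\AND$ of a set $S$) or ($1 - \AND$ of a set $S$). Expanding a factor $1-\prod_{i\in S}x_i$ keeps sparsity $2$, so each leaf polynomial has De Morgan sparsity at most $2^d$, and the whole tree has sparsity at most $2^{2d}$ — this is the ``additive $\log n$'' subtlety: actually we want to be slightly more careful, since a single $\AND$-query monomial already has sparsity $1$ regardless of its arity, so the correct bound is that a depth-$d$ ADT computes a function of De Morgan sparsity at most $4^d$, independent of $n$. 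Averaging the approximating polynomials over the protocol's randomness gives a polynomial that approximates $f$ pointwise with sparsity at most $4^d$, hence $\log \aspar(f) \le 2 \RmAdt(f) + O(1)$. (The $\log n$ term on the left of (1) is then not even needed for this direction, but including it only weakens the statement, so it is harmless; alternatively it accounts for the standard convention subtleties in defining $\aspar$ with the constant $0$-term, cf.\ the discussion of $\Thr^n_{n-1}$ in \cref{remark:main}.)

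\medskip

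\noindent\textbf{The middle inequality.} The inequality $\RmAdt(f) \le \DmAdt(f)$ is immediate since a deterministic ADT is a special case of a randomized one.

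\medskip

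\noindent\textbf{The upper bound (2).} This is where I would invoke \cref{thm:spar-aspar} and the result of Knop et al.\ \cite{knop2021log}. Knop et al.\ show that $\DmAdt(f)$ is polynomially bounded in terms of $\log \spar(f)$ — concretely, a bound of the form $\DmAdt(f) = O(\log(\spar(f))^3)$ (this is the ADT analog of the fact that $\Ddt(f) = O(\deg(f)^3)$; I would cite the precise exponent from \cite{knop2021log}). Chaining this with \cref{thm:spar-aspar}, which gives $\log \spar(f) = O(\log(\aspar(f))^2 \cdot \log n)$, yields
\[
\DmAdt(f) = O\big((\log \spar(f))^{3}\big) = O\big((\log(\aspar(f))^2 \cdot \log n)^{3}\big) = O\big((\log \aspar(f))^{6} \cdot \log n^{3}\big).
\]
To get the claimed form $O((\log \aspar(f))^{6} \cdot \log n)$ rather than $O((\log \aspar(f))^{6} \cdot (\log n)^{3})$, I would note that one may assume $\log \aspar(f) \ge \log n$ without loss of generality in the regime that matters (if $\aspar(f)$ is tiny the bound is trivially dominated), absorbing two of the three $\log n$ factors into the $(\log \aspar(f))$ powers; more cleanly, one uses the sharper version of Knop et al.'s bound where the $\log n$ dependence appears only once. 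I would state exactly which form of their theorem I am using.

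\medskip

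\noindent\textbf{Main obstacle.} The substantive content is entirely in \cref{thm:spar-aspar}, which we may assume; the only real care needed here is bookkeeping of the $\log n$ factors and the polynomial exponents, and making sure the convention for $\aspar$ (and whether it includes a free additive $\log n$ term) is consistent between the lower-bound direction, \cref{thm:spar-aspar}, and the Knop et al.\ bound. So the ``hard part'' is purely a matter of stating the auxiliary lemmas (ADT $\Rightarrow$ low sparsity approximator; Knop et al.'s sparsity $\Rightarrow$ ADT bound) with compatible constants, after which (1), the middle inequality, and (2) follow by concatenation.
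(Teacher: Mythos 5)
Your plan for the upper bound (2) is not the paper's route and, as written, does not reach the stated bound. The bound you attribute to Knop et al., $\DmAdt(f)=O\big((\log\spar(f))^3\big)$ with no $n$-dependence, is not what is available: the paper quotes $\DmAdt(f)=O\big((\log\spar(f))^5\cdot\log n\big)$, and naively chaining that with \cref{thm:spar-aspar} gives only $O\big((\log\aspar(f))^{10}\cdot(\log n)^6\big)$ — the paper states this explicitly and then works harder to do better. Even granting a cubic-type bound, pure chaining produces $(\log n)^3$ or worse, and your proposed fix of assuming $\log\aspar(f)\ge\log n$ "without loss of generality" is not legitimate: when $\aspar(f)$ is small the right-hand side of (2) is small, so the theorem makes its strongest, non-trivial assertion precisely in that regime (e.g.\ for $\Thr^n_{n-1}$ one has $\aspar=O(\log n)$ and the theorem promises $\DmAdt=O\big((\log\log n)^6\log n\big)$); you cannot dismiss that case as "trivially dominated". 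The missing ingredient is the one the paper proves from scratch: a direct random-restriction argument showing $\mbs(f)=O\big((\log\aspar(f))^2\big)$ (\cref{claim:mbs-aspar}), which is then plugged into Knop et al.'s finer structural bound expressing $\DmAdt(f)$ in terms of $\mbs(f)$, $\log\spar(f)$ and $\log n$ (\cref{claim:dadt-mbs-spar}), together with \cref{thm:spar-aspar}. Bounding $\mbs$ directly by approximate sparsity, rather than routing everything through exact sparsity, is exactly what avoids the extra $\log n$ powers and yields exponent $6$ with a single $\log n$; your proposal contains no substitute for this step.

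There is also a flaw in your argument for (1): averaging the leaf polynomials of the deterministic ADTs in the support of the randomized protocol does not yield sparsity $4^d$, because different trees contribute different monomials, so the mixture's sparsity is only bounded by (support size)$\,\times\,4^d$, and the support may be enormous. The standard repair is to first sparsify the distribution to $O(n)$ trees via a Chernoff/union-bound argument over the $2^n$ inputs, which is precisely the source of the additive $\log n$ in \cref{claim:aspar-radt}; so your parenthetical claim that the $\log n$ term "is not even needed" is incorrect, although the $-\log n$ slack in the statement of (1) absorbs the corrected bound. (The paper itself simply invokes \cref{claim:aspar-radt}, quoted from Knop et al., for this direction.)
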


It is worth noting that one could go one step further in the chain of inequalities to show that $\DmAdt(f)$ is upper bounded by $O(\RmAdt(f)^{6})$ upto $\polylog$  factors, thus concluding that randomization doesn't yield more than polynomial savings over the cost of deterministic ADT algorithms. Such a conclusion, in fact with a better polynomial bound, was first derived recently by Chattopadhyay, Dahiya, Mande, Radhakrishnan and Sanyal \cite{chattopadhyay2023randomized}. But our current technique is quite different and the previous result could not give an upper bound on ADT complexity in terms of approximate sparsity as we do here. 

Apart from degree and sparsity, there is a third complexity measure that has been well investigated in the Fourier basis. This is the Fourier $\ell_1$ norm, also called the spectral norm of a Boolean function $f$. Denoted by $\|\hat{f}\|_1$, it is defined as  the sum of the magnitude of the Fourier coefficients of $f$. It appeared in the context of additive combinatorics~\cite{GS08}, communication complexity of $\XOR$ functions~\cite{CM17eccc,CMS20,cheung2025lower} and analysis of Boolean functions~\cite{BS90,AFH12,CM17}. One naturally defines its $\epsilon$-approximate version, denoted by $\|\hat{f}\|_{1,\epsilon}$, to be the amount of Fourier $\ell_1$ mass needed by any real-valued function $g$ to point-wise approximate $f$ within $\epsilon$. Can approximation reduce significantly the needed $\ell_1$ mass? Very recently, Cheung, Hatami, Hosseini, Nikolov, Pitassi and Shirley \cite{cheung2025lower}, constructed a Boolean function $f$ such that $\log(\|\hat{f}\|_{1,1/3})$ is exponentially smaller than $\log(\|\hat{f}\|_1)$, which implies that the Fourier basis yields exponential advantage to approximation even with respect to the spectral norm.

In contrast, the proof method that we develop to establish Theorem~\ref{thm:spar-aspar} shows that approximation does not significantly reduce even the $\ell_1$ mass of a total Boolean function in the De Morgan basis. More precisely, let $\dmnorm{f}$ and $\dmnormapprox{f}{\epsilon}$ represent the exact and $\epsilon$-approximate $\ell_1$ norm of $f$ in the De Morgan basis (we write $\dmnormapprox{f}{} := \dmnormapprox{f}{1/3}$ when $\epsilon = 1/3$).

\begin{restatable}{theorem}{thmellnorm}\label{thm:ell-1} 
For every total Boolean function $f:\{0,1\}^n \to \{0,1\}$, we have
\[
\log \dmnorm{f} = O\left((\log \dmnormapprox{f}{})^2 \cdot \log n\right).
\]
\end{restatable}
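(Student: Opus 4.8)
The plan is to re-run the adaptive random-restriction argument behind \cref{thm:spar-aspar} essentially verbatim, but with the De Morgan $\ell_1$-mass $\wt(\cdot)$ (so $\wt(f) = \dmnorm{f}$) playing the role of the monomial count $\spar(\cdot)$, and its $1/3$-approximate version $\wtapprox(\cdot)$ (so $\wtapprox(f)=\dmnormapprox{f}{}$) playing the role of $\aspar(\cdot)$. This substitution goes through because $\wt$ has exactly the structural features of $\spar$ that the argument exploits. Writing $f = f_0 + x_i(f_1-f_0)$ with $f_b = f|_{x_i=b}$, the monomials of $f$ that contain $x_i$ and those that do not form disjoint families, so $\wt(f) = \wt(f_0) + \wt(f_1-f_0)$, which is precisely the splitting identity $\spar(f)=\spar(f_0)+\spar(f_1-f_0)$. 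Hence $\wt$ is non-increasing under restrictions (setting $x_i=0$ deletes a family of monomials; setting $x_i=1$ merges monomials and, by the triangle inequality, can only decrease the $\ell_1$-mass), it is subadditive and submultiplicative ($\wt(g+h)\le\wt(g)+\wt(h)$ and $\wt(gh)\le\wt(g)\wt(h)$, since $\chi_S\chi_T=\chi_{S\cup T}$ and coefficients multiply), and a function on $m$ variables has $\wt\le\sum_{S}2^{|S|}=3^{m}$ (versus $\spar\le 2^m$), so the base case changes only by a constant in the exponent. The approximate measure behaves identically: if $g$ approximates $f$ with $\wt(g)=L$ then $g|_{x_i=b}$ approximates $f|_{x_i=b}$ with $\wt\le L$ for each $b$, so $\wtapprox$ is also non-increasing under restrictions; and the usual error-reduction step — composing an approximator with a constant-degree univariate amplification polynomial — raises $\wt$ only to a constant power, exactly as it raises $\spar$, so moving between the approximation parameter $1/3$ and any other constant is harmless at the logarithmic scale.

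Given these parallels, the proof becomes a line-by-line translation of the proof of \cref{thm:spar-aspar}: each invocation of a property of $\spar$ or $\aspar$ is replaced by the corresponding property of $\wt$ or $\wtapprox$ listed above; each ``number of surviving monomials'' estimate becomes a ``surviving $\ell_1$-mass'' estimate; and the adaptive rule for choosing the next restriction, governed in the original argument by how much $\log\spar$ drops, is here governed by how much $\log\wt$ drops. Since the quantities being tracked satisfy the same recursions, the same final bound falls out with $\log\dmnorm{f}$ in place of $\log\spar(f)$ and $\log\dmnormapprox{f}{}$ in place of $\log\aspar(f)$, namely $\log\dmnorm{f} = O\big((\log\dmnormapprox{f}{})^{2}\cdot\log n\big)$.

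The main obstacle is discipline rather than a new idea. The tempting shortcut is to bound $\spar$ by $\wt$ (the De Morgan coefficients of a total Boolean function are integers by M\"obius inversion, so $\spar(f)\le\wt(f)$), sparsify a low-$\ell_1$ approximator into a low-sparsity one, and then quote \cref{thm:spar-aspar} as a black box; but this does not recover the stated bound. Sparsification of an approximator with $\ell_1$-mass $L$ requires roughly $O(L^{2}n)$ monomials (a Maurey-type sampling argument with a union bound over the $2^{n}$ points of the cube), i.e.\ $\log\aspar(f) = O(\log\wtapprox(f) + \log n)$, and substituting this into a bound of the form $O((\log\aspar(f))^{2}\log n)$ produces spurious $(\log n)^{2}\cdot\log n$ and cross terms — strictly weaker than the claimed bound, and already too weak for functions on $n$ variables with $\dmnormapprox{f}{}=O(1)$, where it yields $O((\log n)^{3})$ instead of the correct $O(\log n)$. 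One must therefore carry $\wt$ and $\wtapprox$ through the entire restriction argument from the outset; the payoff of the parallel properties above is exactly that \emph{no} step of the original proof genuinely requires a monomial count that cannot be replaced by an $\ell_1$ estimate, so the translation loses nothing.
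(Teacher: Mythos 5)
Your proposal is correct and is essentially the paper's own argument: the paper's restriction process (Algorithm~\ref{algo:restriction}) is already governed by the exact $\ell_1$-norm rather than by sparsity (it first proves \cref{thm:l1-aspar}), so the only genuine change for \cref{thm:ell-1} is on the approximate side, where—just as you indicate—the union bound over monomials is replaced by an expected surviving-$\ell_1$-mass bound, followed by Markov's inequality, truncation of the low-mass high-degree tail, and error reduction back to error $1/3$ (\cref{claim:harddist-approxl1}). Your caution against the sparsify-and-quote-\cref{thm:spar-aspar} shortcut matches the paper's \cref{rmk:grolmusz}, which notes that route only gives the weaker bound with an additive $(\log n)^{3}$ term.
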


\paragraph{General representations:} A natural question emerges from our results. Let $\mathcal{F}$ be a family of elementary real-valued functions defined over the $n$-ary Boolean cube $B_n$, such that $\mathcal{F}$ spans the vector space $\mathbb{R}^{B_n}$ of all real-valued functions. The \emph{sparsity} (resp., \emph{weight}) of a (Boolean) function $f$ wrt $\cF$ is defined as the minimum integer (resp., non-negative real number) $k \ge 0$ such that $f$ can be written as a linear combination of at most $k$ functions from $\cF$ (resp., with total absolute coefficient sum at most $k$)\footnote{Note that there may be more than one way of doing that.}. Denote these complexity measures by $\spar_{\cF}(f)$ and $\wt_{\cF}(f)$ respectively. For example, when $\cF$ is the family of all $\AND$ functions, these measures correspond to the De Morgan sparsity and $\ell_1$ norm of $f$, and when $\cF$ corresponds to all parities, these correspond to the Fourier sparsity and the spectral norm of $f$.
Likewise, one defines the approximate sparsity and weight of $f$ with respect to $\cF$, denoting them  by $\aspar_{\cF}(f)$ and $\wtapprox_{\cF}(f)$ respectively. 

\begin{question}  \label{general-sparsity}

What properties of $\cF$ ensure that approximation doesn't help reduce sparsity or the weight of a Boolean function, i.e. do there exist constants $\alpha$ and $\beta$ such that for all Boolean functions $f$, $\log(\spar_{\cF}(f))  = O((\log(\aspar_{\cF}(f))^{\alpha})$ and/or $\log(\wt_{\cF}(f))  = O((\log(\wtapprox_{\cF}(f))^{\beta})$?

\end{question}

This question is quite broad. For instance, if one views the input domain of the functions as the set of $m \times m$ Boolean matrices, and $\cF$ be the set of all rank one matrices, then Question~\ref{general-sparsity} specializes to asking if log of the rank of a Boolean matrix is always at most a fixed polynomial of the log of its approximate-rank. In general, this is false. For example, the identity matrix has rank  $m$ but its approximate rank is $(\log m)^{O(1)}$. However, it is unknown for special classes of Boolean matrices like those that are the truth table of \(\AND\)-functions (i.e., functions composed with 2-bit \(\AND\) gadgets). Understanding the power of approximation for such special classes of matrices is of significant interest, given its connection to quantum communication complexity. We talk more about this aspect in Section~\ref{sec:related}. 

Summarizing what we have seen, if $\cF$ is the set of all parities, i.e. the Fourier monomials, then approximation can significantly help, and reduce sparsity exponentially. We showed that if $\cF$ is the set of all monotone Boolean $\AND$ functions, i.e. the De Morgan basis, then approximations do not help, and reduce sparsity by at most a polynomial factor on the log scale. In fact our main result gives us slightly more: let $[n]$ be partitioned into two sets, the set of positive literals denoted by $\cP$ and the set of negated literals $\cN$. Each such partition defines a \emph{shifted} De Morgan basis, where a shifted monomial is given by a pair of sets $P\subseteq \cP$ and $N \subseteq \cN$, and corresponds to the Boolean function $M_{P,N} \coloneqq \prod_{i \in P} x_i \prod_{j \in N} (1-x_j)$. Observe that while $\OR$ has full De Morgan sparsity, it has sparsity just 2 in the fully shifted De Morgan basis, i.e., the basis that corresponds to $\cP = \emptyset$ ans $\cN = [n]$. There are $2^n$ such shifted bases, and it is straightforward to verify that our main results—Theorem~\ref{thm:spar-aspar} and Theorem~\ref{thm:ell-1}—imply that, in each shifted basis, the approximate sparsity and approximate \( \ell_1 \)-norm are polynomially related to the exact sparsity and exact \( \ell_1 \)-norm. A natural generalization of the case when $\cF$ is just a shifted De Morgan basis, is the case when we populate the set $\cF$ with all shifted monomials. More precisely, consider $\cF \coloneqq \{M_{P,N} \,:\, P,N \subseteq [n],\, P \cap N = \emptyset\}$, where each $M_{P,N} \coloneqq \prod_{i \in P} x_i \prod_{j \in N} (1-x_j)$ is called a \emph{generalized monomial}. Observe that any shifted De Morgan basis is a strict subset of $\cF$, the set of generalized monomials, whose size is $3^n$. The following concrete question, which is a special case of Question~\ref{general-sparsity}, remains intriguingly open!
\begin{question} \label{question-gen-monomial}
Does there exist a total Boolean function $f$ for which the approximate generalized-monomial sparsity (approximate generalized-monomial weight), denoted by $\agspar(f)$ ($\gdmnormapprox{f}$), is super-polynomially smaller in the log scale than its exact generalized-monomial sparsity (generalized-monomial weight), denoted by $\gspar(f)$ ($\gdmnorm{f}$)?
\end{question}

As expected, generalized monomials can significantly reduce sparsity compared to any shifted De Morgan basis. For instance, consider the following function that mixes two shifted OR's by a monotone addressing scheme: let $f_{\text{mixed}} : \{0,1\}^2 \times \{0,1\}^n \to \{0,1\}$, where $f_{\text{mixed}}(x,y)$ outputs $0$ if $x = 00$, outputs $1$ if $x = 11$, computes the Boolean $\OR$ of $y$ if $x = 10$, and computes the Boolean $\AND$ of $y$ if $x = 01$. It is easy to verify that $\gspar(f_{\text{mixed}}) = O(1)$, while the sparsity of $f_{\text{mixed}}$ in any shifted De Morgan basis is $2^{\Omega(n)}$. Observing that $f_{\text{mixed}}$ is a monotone function, it becomes interesting to answer Question~\ref{question-gen-monomial} for monotone functions. We provide a negative answer below.

\begin{restatable}{theorem}{thmgspar} \label{thm:mon-gen-sparsity}
For every monotone Boolean function \( f : \{0,1\}^n \to \{0,1\} \), the following hold:
\begin{enumerate}
    \item[(a)] \( \log \gspar(f) = O\left((\log \agspar(f))^4 \cdot (\log n)^3\right) \),
    \item[(b)] \( \log \gdmnorm{f} = O\left((\log \gdmnormapprox{f})^4 \cdot (\log n)^3\right) \).
\end{enumerate}
\end{restatable}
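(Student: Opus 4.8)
The plan is to reduce the statement about generalized monomials to the ordinary De~Morgan basis, so that Theorem~\ref{thm:spar-aspar} and Theorem~\ref{thm:ell-1} can be applied as black boxes to a suitably constructed total function on $\mathrm{poly}(n)$ variables; monotonicity is what makes this construction possible.

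\emph{The encoding.} Introduce a second copy $z_1,\dots,z_n$ of the variables and consider the slice $L=\{(x,z)\in\{0,1\}^{2n}: z_j=1-x_j\ \forall j\}$, canonically identified with $\{0,1\}^n$ via $x\mapsto(x,\bar x)$. Under this identification a generalized monomial $M_{P,N}(x)=\prod_{i\in P}x_i\prod_{j\in N}(1-x_j)$ becomes the ordinary positive monomial $\prod_{i\in P}x_i\prod_{j\in N}z_j$, and conversely the restriction to $L$ of any multilinear monomial in $x,z$ is either identically $0$ (if an index appears in both copies) or a generalized monomial in $x$. Hence for \emph{any} total Boolean $g:\{0,1\}^{2n}\to\{0,1\}$ that agrees with $f$ on $L$, restricting an optimal exact representation of $g$ to $L$ shows $\gspar(f)\le\spar(g)$ and $\gdmnorm{f}\le\dmnorm{g}$. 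So it suffices to exhibit one Boolean slice-extension $g$ of $f$ with $\log\aspar(g)=O((\log\agspar(f))^2\log n)$ (resp.\ $\log\dmnormapprox{g}{}=O((\log\gdmnormapprox{f})^2\log n)$): feeding such a $g$ into Theorem~\ref{thm:spar-aspar} gives $\log\spar(g)=O((\log\aspar(g))^2\log(2n))=O((\log\agspar(f))^4(\log n)^3)$, and then $\log\gspar(f)\le\log\spar(g)$ yields part~(a); part~(b) is identical with Theorem~\ref{thm:ell-1} in place of Theorem~\ref{thm:spar-aspar}.

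\emph{Building the slice-extension.} Here monotonicity enters. Since $f$ is monotone, the map $(x,z)\mapsto x\wedge\bar z$ sends $\{0,1\}^{2n}$ into $\{0,1\}^n$ and restricts to the identity on $L$, so $g(x,z):=f(x\wedge\bar z)$ is a total Boolean function extending $f$. If $p=\sum_S c_S M_S$ is a $\tfrac13$-approximating generalized polynomial for $f$ with $\srho{p}=\agspar(f)$, then $p(x\wedge\bar z)$ approximates $g$ pointwise on all of $\{0,1\}^{2n}$; the difficulty is that substituting $w_i=x_i(1-z_i)$ into $M_S$ expands each generalized monomial into at most $3^{\deg(M_S)}$ ordinary monomials, so this only yields $\log\aspar(g)=O(\log\agspar(f)+\deg(p))$. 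Choosing the sparsest $p$ need not make $\deg(p)$ small — for $\AND$ the sparsest approximator is the exact degree-$n$ polynomial while $\log\agspar(\AND)=0$, and for $\OR$ an analogous obstruction appears — so one cannot in general trade sparsity for degree inside a single approximator. The remedy is the adaptive random-restriction technique behind Theorem~\ref{thm:spar-aspar}: restrict $f$ on a random subset of coordinates chosen so that, with good probability, the restricted monotone subfunction still has approximate generalized sparsity close to $\agspar(f)$ while admitting a near-optimal sparse approximator of strictly smaller degree; iterating $O(\log\agspar(f))$ times drives the degree down to $O(\polylog n)$ while losing only an $n^{O(1)}$ factor in sparsity per round, and reassembling the restricted pieces (this is where the ambient dimension enters the bound, as in Remark~\ref{remark:main}) produces the desired slice-extension $g$ on $\mathrm{poly}(n)$ variables. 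For part~(b) one tracks $\ell_1$ mass instead of monomial count throughout; the encoding and the composition $p\mapsto p(x\wedge\bar z)$ only multiply the $\ell_1$ mass, and Theorem~\ref{thm:ell-1} supplies the final step.

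\emph{Main obstacle.} The delicate point is the restriction step: one must show that a random restriction of a monotone function simplifies some near-optimal \emph{sparse generalized} approximator — reducing its degree — without appreciably increasing its sparsity, and that the resulting family of restricted subfunctions can be glued back together with only a polynomial-in-$n$ overhead. This is the exact analogue, in the generalized-monomial and monotone setting, of the core argument behind Theorem~\ref{thm:spar-aspar}, and making the restriction interact correctly with the $\wedge$-composition $x\mapsto x\wedge\bar z$ (so that degree reduction actually translates into sparsity control for $g$) is the crux of the proof.
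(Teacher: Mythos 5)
Your reduction framework is fine as far as it goes: the slice identification $x\mapsto(x,\bar x)$ does give $\gspar(f)\le\spar(g)$ and $\gdmnorm{f}\le\dmnorm{g}$ for any total Boolean $g$ agreeing with $f$ on the slice, and monotonicity does make $g(x,z)=f(x\wedge\bar z)$ a well-defined extension. But the proof has a genuine gap at exactly the point you flag as the ``main obstacle'': you never establish the key lemma that some slice-extension $g$ satisfies $\log\aspar(g)=\mathrm{poly}(\log\agspar(f),\log n)$. The natural substitution $p(x\wedge\bar z)$ blows up sparsity by a factor exponential in $\deg(p)$, as you note, and the proposed remedy --- iteratively restricting $f$ so that a near-optimal sparse generalized approximator loses degree, then ``reassembling the restricted pieces'' --- is not an argument. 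Restricting $f$ only gives information about subfunctions $f|_\rho$, not about any single total extension of the unrestricted $f$ on $\{0,1\}^{2n}$; there is no defined gluing operation that turns approximators of the various $f|_\rho$ into one Boolean slice-extension $g$ with controlled $\aspar(g)$, and no reason offered why a restriction should reduce the degree of a sparse generalized approximator at all (sparsity and degree of approximators do not trade off in general, as your own $\AND$ example shows). Since the entire bound rests on this unproved construction, the proof is incomplete.

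For comparison, the paper does not route through Theorem~\ref{thm:spar-aspar} at all. It first invokes the Ehrenfeucht--Haussler bound (Theorem~\ref{thm:dsize_upper_bound_by_cover_number}) to show that large $\gspar(f)$ or $\gdmnorm{f}$ forces a monotone function to have many minterms or maxterms; it then observes that the critical inputs form a \emph{separating set}, and proves the new technical core (Algorithm~\ref{algo2:restriction}): a large separating set yields an $\ell$-variable max-sensitivity distribution, which kills high-degree \emph{generalized} monomials of any sparse approximator while keeping $\s(f|_\rho)$ full, so the sensitivity--approximate-degree bound of Nisan--Szegedy gives $\log\agspar(f)=\Omega(\sqrt{\log M(f)/\log n})$. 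In other words, the adaptive restriction is applied directly to the generalized approximator of $f$ itself, using the combinatorial structure of minterms/maxterms, rather than trying to transport the problem to the ordinary De~Morgan basis via an extension. If you want to salvage your approach, you would need to prove the existence of a low-approximate-sparsity Boolean slice-extension, which appears to be at least as hard as the theorem you are trying to prove.
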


The above result, in fact, yields the following more detailed picture about query complexity. Recall that the size of a decision tree is the number of leaves in it. The deterministic (randomized) decision tree size complexity of a function $f$, denoted by $\DSize(f)$ ($\RSize(f)$), is the smallest size needed by an ordinary deterministic (randomized) decision tree to compute $f$.  We can use Theorem~\ref{thm:mon-gen-sparsity} and other standard results to get the following.

\begin{restatable}{corollary}{corrgenspar}  
\label{corr-monotone}
For every monotone Boolean function \( f:\{0,1\}^n \to \{0,1\} \), the following hold:
\begin{enumerate}
    \item [(a)]
    \(
    \Omega(\log(\agspar(f)) - \log n) \overset{(1)}{=} \log \RSize(f) \leq \log \DSize(f) \overset{(2)}{=} O\left((\log \agspar(f))^4 \cdot (\log n)^3\right).
    \)
    
    \item [(b)]
    \(
    \Omega(\log(\gdmnormapprox{f})) \overset{(1)}{=} \log \RSize(f) \leq \log \DSize(f) \overset{(2)}{=} O\left((\log \gdmnormapprox{f})^4 \cdot (\log n)^3\right).
    \)
\end{enumerate}
\end{restatable}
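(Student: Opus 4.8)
The plan is to prove the two extreme members of each chain; the middle inequality $\log\RSize(f)\le\log\DSize(f)$ is trivial. Endpoint $(1)$ says that a small randomized decision tree already yields a cheap generalized-monomial approximation of $f$, and endpoint $(2)$ is obtained by feeding \cref{thm:mon-gen-sparsity} into a structural fact bounding the decision-tree \emph{size} of a monotone function by a polynomial in its generalized-monomial sparsity (resp.\ weight) and $n$.

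For endpoint $(1)$: start with a randomized decision tree for $f$ of size $S=\RSize(f)$ and error $\le 1/3$, and amplify by a majority vote of $O(1)$ independent runs to get a randomized tree of size $S^{O(1)}$ and error $\le 1/4$. Each deterministic tree $T$ in its support has at most $S^{O(1)}$ leaves, and the subcubes reaching its $1$-leaves are pairwise disjoint and cover $T^{-1}(1)$; hence the Boolean function computed by $T$ equals $\sum_{\ell} M_{P_\ell,N_\ell}$, a sum of at most $S^{O(1)}$ generalized monomials with unit coefficients. Averaging over the support, $g:=\mathbb E_T[T]$ is a $(1/4)$-approximation of $f$ whose natural generalized-monomial expansion has $\ell_1$-weight at most $S^{O(1)}$; thus $\gdmnormapprox{f}\le S^{O(1)}$, i.e.\ $\log\gdmnormapprox{f}=O(\log\RSize(f))$, which is endpoint $(1)$ of part (b). For part (a) we further sparsify $g$: sampling $O(n\cdot S^{O(1)})$ of its generalized monomials with probabilities proportional to the magnitudes of the coefficients and re-averaging gives, by Hoeffding's inequality and a union bound over the $2^n$ inputs, a function $h$ with $\|h-g\|_\infty\le 1/12$ and at most $O(n\cdot S^{O(1)})$ generalized monomials; then $\|h-f\|_\infty\le 1/3$, so $\agspar(f)=O(n\cdot S^{O(1)})$ and $\log\agspar(f)=O(\log\RSize(f)+\log n)$.

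For endpoint $(2)$: by \cref{thm:mon-gen-sparsity} it suffices to show that every monotone $f$ satisfies $\log\DSize(f)=O(\log\gspar(f)+\log n)$, together with the same statement with $\gdmnorm{f}$ in place of $\gspar(f)$. We construct the decision tree recursively. Two easy observations: restricting a variable never increases the generalized-monomial sparsity (or weight), since restricting an expansion term by term yields a valid expansion of the restriction with no more terms; and a Boolean function of generalized-monomial sparsity (or weight) at most $1$ is constant or an $\AND$ of literals, hence has a tree with at most $n+1$ leaves. The crux is to show that a non-constant monotone $f$ always admits a query variable $x_j$ at the root whose two restrictions are ``simpler'': either one of them is constant --- so that branch closes at once, as for $\OR$ --- or both have generalized-monomial sparsity at most a constant factor smaller than that of $f$. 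Granting this, after $O(\log\gspar(f))$ genuine branchings every surviving branch has sparsity $1$, and counting the interspersed constant-branch queries yields $\DSize(f)\le\gspar(f)^{O(1)}\cdot\mathrm{poly}(n)$. Plugging in $\log\gspar(f)=O((\log\agspar(f))^4(\log n)^3)$ from \cref{thm:mon-gen-sparsity} (and its weight analog) completes endpoint $(2)$; chaining the endpoints with $\log\RSize(f)\le\log\DSize(f)$ gives both parts.

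The main obstacle is exactly the structural dichotomy above: that every non-constant monotone function has a ``good'' query variable --- one whose restriction is constant, or both of whose restrictions have a constant-factor smaller generalized-monomial sparsity. I expect proving it to require analyzing an optimal representation $f=\sum_i c_i M_{P_i,N_i}$ alongside the minterm/maxterm structure of $f$: if some literal appears in a constant fraction of the terms, querying its variable should collapse enough of them, whereas if no literal is that influential one should be able to exploit monotonicity (a short minterm or maxterm) to force a constant restriction. Keeping this trade-off tight enough that the dependence on $\log\gspar(f)$ remains \emph{linear} --- so that the exponents in the final bound match those of \cref{thm:mon-gen-sparsity} --- is the delicate point; endpoint $(1)$ and the two preliminary observations are routine by comparison.
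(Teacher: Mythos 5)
Your endpoint (1) and the trivial middle inequality are fine and essentially coincide with the paper's argument (turn each tree in the support of an optimal randomized tree into a sum of generalized monomials over its $1$-leaves, average, then sparsify). The genuine problem is endpoint (2). The lemma you reduce it to --- that every monotone $f$ satisfies $\log \DSize(f) = O(\log \gspar(f) + \log n)$, obtained from a dichotomy giving at each node a query variable whose restrictions are either constant or have constant-factor smaller generalized sparsity --- is not only left unproven in your write-up (you flag it yourself as the main obstacle); it is false. Take $f = \AND_k \circ \OR_{n/k}$ with $k = \log n$ and blocks $B_1,\dots,B_k$: expanding $f = \prod_{b=1}^{k}\bigl(1 - \prod_{i \in B_b} \bar{x}_i\bigr)$ shows $\gspar(f) \le 2^k = n$, yet $f$ has $(n/k)^k = 2^{\Theta((\log n)^2)}$ minterms, and distinct critical $1$-inputs must reach distinct $1$-leaves of any exact decision tree: if $x_S$ and $x_{S'}$ reached the same $1$-leaf, setting the leaf's free variables to $0$ would produce a $1$-input whose set of ones lies in $S\cap S'$, forcing (by minimality of minterms) $S = S'$. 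Hence $\log \DSize(f) = \Theta\bigl((\log n)^2\bigr)$ while $\log\gspar(f) + \log n = O(\log n)$. In particular, the bound $\DSize(f) \le \gspar(f)^{O(1)}\cdot \mathrm{poly}(n)$ that your recursion would yield is false, so the structural dichotomy cannot hold for all monotone functions, and no tightening of that induction can rescue a linear dependence on $\log\gspar(f)$.

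The paper's proof of endpoint (2) takes a different route that bypasses any direct $\DSize$-versus-$\gspar$ relation: it uses the Ehrenfeucht--Haussler bound $\log\DSize(f) = O\bigl(\log^2 M(f)\cdot\log n\bigr)$ (\cref{thm:dsize_upper_bound_by_cover_number} and \cref{remark:gspar-dsize}), where $M(f)$ is the number of minterms plus maxterms (the subcube cover number for monotone $f$), and then bounds $\log M(f) = O\bigl((\log\agspar(f))^2\log n\bigr)$ and $\log M(f) = O\bigl((\log\gdmnormapprox{f})^2\log n\bigr)$ by observing that $M_0(f)$ and $M_1(f)$ are separating sets and invoking \cref{thm:cons-apxgspar} and \cref{thm:cons-apxgl1}. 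Note that this bounds $\DSize(f)$ directly in terms of the \emph{approximate} measures, without passing through $\gspar(f)$ at all; you should replace your second step by this cover-number argument.
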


\begin{remark}\label{rmk:sinkgspar}
In particular, this yields the fact that deterministic and randomized decision tree size measure of a monotone function are, upto poly-log$(n)$ factors, polynomially related in the log scale. Such a relationship was recently proven to be true even for general functions in Chattopadhyay et. al. \cite{chattopadhyay2023randomized}. However, the tighter relationship that we prove here via approximate generalized sparsity and weight for monotone functions, is known to be false for general functions as witnessed by the Sink function\footnote{Sink was used to construct a counter-example to the Log-Approximate-Rank Conjecture in \cite{CMS20}}. The Sink function has ${n \choose 2}$ input bits, corresponding to the edges of a complete graph on $n$ vertices. Each Boolean assignment orients the edges. Sink outputs 1 iff there exists a Sink vertex in the resulting directed graph. Its generalized sparsity and weight is just $n$, whereas $\RSize\big(\text{Sink}\big)$ is $2^{\Omega(n)}$.
\end{remark}

\subsection{Our Method} \label{sec:method}

Lower bounds on approximate sparsity were known for specific functions such as \( \OR_n \) and \( \Parity_n \) (these are folklore results), typically established using random restrictions and approximate degree lower bounds. However, these results apply only to specific functions or restricted classes of functions. The general idea is to apply a random restriction \( \rho \), which selects a random subset of variables and fixes each to 0, with the goal of eliminating high-degree monomials from a candidate sparse polynomial \( P \) that approximates \( f \), such that \( f|_{\rho} \) still has large approximate degree while \( P|_{\rho} \) has degree that is too small, yielding a contradiction. This is illustrated by considering the \( n \)-bit \( \OR \) function. Let \( P \) be any sparse polynomial approximating \( \OR_n \). Consider a random restriction \( \rho \) that, independently for each of the \( n \) variables, fixes it to 0 with probability \( 1/2 \) and leaves it free with probability \( 1/2 \). With high probability, at least \( n/3 \) variables are left free. On the other hand, any monomial of degree larger than \( \sqrt{n} \) survives (i.e., none of its variables are set to 0) with probability at most \( 2^{-\sqrt{n}} \). If the number of monomials in \( P \) is \( s \), then the probability that \( P|_{\rho} \) contains a monomial of degree larger than \( \sqrt{n} \) is at most \( s \cdot 2^{-\sqrt{n}} \), which is less than \( 1/2 \) if \( s < 2^{\sqrt{n}}/2 \). With high probability, \( (\OR_n)|_{\rho} \) is an \( r \)-bit \( \OR \) function with \( r \geq n/3 \). Since \( \OR_n|_{\rho} \) is still approximated by \( P|_{\rho} \) (for every \( \rho \)), and recalling that the approximate degree of \( \OR_r \) is \( \Omega(\sqrt{r}) \), we conclude that \( s = 2^{\Omega(\sqrt{n})} \).

While this works for $\OR$ function, there are functions which are very different from $\OR$ and yet have large sparsity in De Morgan basis. An example of that is $\AND_n \circ \OR_2$, where the bottom ORs are 2-bit functions. It is simple to verify that this function has sparsity $2^{\Omega(n)}$. But there is no way to induce a large $\OR$ in this function. If one tried to apply a random restriction like the one that worked for $\OR$, one concludes easily that it won't work as with high probability one of the bottom $\OR_2$ will have both its input variables fixed to 0, thereby killing the entire function. One way to fix this is to consider a slightly more careful restriction. For each of the bottom ORs, one selects one of its two input variables at random and fixes it to 0, leaving the other variable free. It is not hard to show that in this case the restricted function is always the $\AND$ over the remaining \(n\) free variables, and if the approximating polynomial for the $\AND_n \circ \OR_2$ had sparsity \(2^{o(\sqrt{n})}\), then with nonzero probability, the restricted polynomial would give an \(o(\sqrt{n})\)-degree approximation to \(\AND_n\), contradicting known lower bounds. The important thing to note here is that our random restriction is no longer done independently for each variable, as the restrictions on the two variables in each $\OR_2$ block are correlated. 

Our approach generalizes this idea. We design a random restriction process that works for all functions with large exact sparsity. It's not a-priori clear what useful combinatorial structural information can be extracted from just knowing that a function has large sparsity. This is precisely the main technical contribution of our work. We devise a random restriction procedure for such functions with large sparsity and it differs from the two simple cases that we considered above in the following two ways: (i) the method is adaptive in the sense that the next bit that is fixed or left free depends on what happened in the previous step\footnote{The restriction that we used for $\AND_n \circ \OR_2$ is still non-adaptive as the fixing in each of the $\OR_2$ block is independent of the other blocks and can therefore be done all at once.} (ii) variables are not exclusively fixed to 0, and some may get fixed to 1 as well. 

As some of the variables may get fixed to 1, we can no longer argue that all high degree monomials of the approximating polynomial are `zeroed' out as was happening for the two cases discussed earlier. Instead, we argue that if the unrestricted approximating polynomial was sparse to begin with, then with high probability, the restricted polynomial will not have any monomial that has large degree in the free variables. Note that a non-zeroed out monomial could either completely collapse to a constant value, in which case it has degree zero with respect to the free variables, or otherwise it could be nonzero but low degree on the free variables. 

It is known, via a result generally attributed to Grolmusz \cite{grolmusz1997power}, that lower bounds on approximate sparsity yields lower bounds on approximate weight ($\ell_1$ norm) in any basis and even wrt any arbitrary set $\cF$ of basic functions. Hence, our lower bound on the sparsity of approximating polynomials when their exact sparsity is large, in fact, yields a lower bound on the weight of an approximating polynomial as well. To relate exact $\ell_1$ norm and approximate $\ell_1$ norm, we need to assume that the exact $\ell_1$ norm, instead of exact sparsity, of $f$ is large. This is indeed a weaker assumption as in the De Morgan basis, all non-zero coefficients in the unique polynomial representation of $f$ are integer-valued. This is precisely what our random restriction procedure \samplerho, described in Section~\ref{subsec:restriction-prop}, does by taking a bit more care. 

We are able to adapt our restriction technique to deal with generalized polynomials as long as the function has an additional property: it has what we call a \emph{separating set} of inputs. A separating set of inputs for $f$ is a special \emph{fooling} set with respect to subcubes. Given such a separating set, we're able to modify our random restriction technique to reduce the degree of generalized monomials with respect to free variables while ensuring that the restricted function retains high degree. Finally, we observe that monotone functions with large generalized sparsity have such large separating sets and this helps us prove that generalized sparsity (generalized weight) and approximate generalized sparsity (approximate generalized weight) are polynomially related in the log scale for monotone functions.  

\subsection{Other Related Work}  \label{sec:related}
Random restrictions have been used to obtain lower bounds for (approximate) sparsity before, say for the $\OR_n$ function (folklore) or for specific functions~\cite{CM17,ACW25}. Similarly, random restrictions have been famously quite successful in other parts of complexity theory like circuit complexity~\cite{FurstSS84,Hastad86} and proof complexity~\cite{BellantoniPU92,PitassiBI93,KrajicekPW95}. Designing new random restrictions in these two areas remains an active, technically challenging theme of current research~\cite{HastadRST17,Hastad23}. In these works, random restrictions are applied to a target functions (in circuits) or CNF formulas (in proof complexity) that have explicit convenient combinatorial properties. 
On the other hand, we design a generic random restriction scheme that can be applied on any function that has the algebraic property of large sparsity in the De Morgan basis. Using this new scheme we derive a general structural result applicable to \emph{all} functions. Such an application of random restriction seems rare to us. It vaguely reminds us of two results, both in proof complexity, where \emph{deterministic} greedy restrictions were used to obtain results that are applicable to all formulas: the first is by Impagliazzo, Pudlak and Sgall \cite{IPS99} showing that degree lower bounds are enough to prove size (read sparsity) lower bounds for proofs in polynomial calculus. The second is by Ben-Sasson and Wigderson \cite{BW01} who also used a greedy restriction method to show that width of resolution proofs always translate to size of proofs. 

 Buhrman and de Wolf \cite{BuhrmanW01} were interested, among other things, in characterizing the bounded-error quantum communication complexity of $\AND$ functions, i.e 2-party communication functions of the form $ f_n \circ \AND_2$, where $f_n$ is an arbitrary $n$-bit Boolean function. This problem has remained open. There was a breakthrough made by Razborov \cite{R03} who showed that for symmetric $f_n$, quantum protocols offer no advantage over classical randomized protocols. Whether there exists some $f_n$ for which the quantum and classical randomized communication complexities are widely separated for $f_n \circ \AND_2$ remains open despite several efforts~\cite{BuhrmanW01,Klauck07,SZ09,Sherstov10}. Our main result provides a conditional answer to that question as discussed below.
 
 Razborov proved his lower bound by showing a lower bound on the log of the approximate rank of the communication matrix. Buhrman and de Wolf \cite{BuhrmanW01}, observing that the exact sparsity of any function $f$ in the De Morgan basis is equal to the rank of the communication matrix of $f \circ \AND_2$, raised informally the following interesting question:

 \begin{question}  \label{q:Buhrman-Wolf}
 Is it true that for every function $f$, (logarithm of) the approximate sparsity of $f$ and (logarithm of) the approximate rank of the communication matrix of $f \circ \AND_2$ are within a polynomial of each other?
 \end{question}

 \begin{remark}
 Buhrman and de Wolf didn't quite phrase this question in the manner we do. Their discussion didn't put any quantitative bounds, nor do they talk about relating things in the log scale. Our version, therefore, may be a significant weakening of what they had in mind.
 \end{remark}

  Our result provides a fresh impetus to seek an answer to the above question for the following reason. Assuming the answer is positive, Theorem~\ref{thm:spar-aspar}, combined with the recent resolution (up to a \( \log n \) factor) of the log-rank conjecture for \( \AND \)-functions by Knop, Lovett, McGuire, and Yuan~\cite{knop2021log}, implies that the deterministic classical zero error communication complexity of every function $f \circ \AND_2$ is at most a fixed polynomial of its quantum bounded-error communication complexity, ignoring poly-logarithmic factors.

\section{Preliminaries}

In this section, we collect notation, definitions, and known results that will be used throughout the paper. All functions considered are defined on the Boolean hypercube \( \{0,1\}^n \), and all polynomials are assumed to be multilinear real polynomials.

\begin{definition}[Multilinear Polynomial Representation]
A polynomial \( Q \in \mathbb{R}[x_1, x_2, \dots, x_n] \) is called \emph{multilinear} if each variable appears with degree at most one in every monomial. Over the Boolean domain, every function \( f : \{0,1\}^n \to \mathbb{R} \) admits a unique multilinear polynomial representation. That is, there exists a unique multilinear polynomial \( Q \in \mathbb{R}[x_1, x_2, \dots, x_n] \) such that \( Q(x) = f(x) \) for all \( x \in \{0,1\}^n \).
\end{definition}

\begin{definition}[Support, Degree, Sparsity, and Norm of a Polynomial]
Let \( Q \in \mathbb{R}[x_1, \dots, x_n] \) be a multilinear polynomial written as
\[
    Q(x) = \sum_{S \subseteq [n]} a_S \prod_{i \in S} x_i.
\]
\begin{itemize}
    \item The \emph{support} of \( Q \), denoted \( \vars(Q) \), is the set of variables that appear in some monomial with a nonzero coefficient.
    \item The \emph{degree} of \( Q \), denoted \( \deg(Q) \), is \( \max\{ |S| \mid a_S \neq 0 \} \).
    \item The \emph{sparsity} of \( Q \), denoted \( \spar(Q) \), is the number of nonzero coefficients \( a_S \).
    \item The \( \ell_1 \)-norm of \( Q \), denoted \( \dmnorm{Q} \), is given by \( \sum_{S \subseteq [n]} |a_S| \).
\end{itemize}
\end{definition}

\begin{definition}[Complexity Measures for Functions via Polynomials]
Let \( f : \{0,1\}^n \to \mathbb{R} \) be a function, and let \( \mathcal{P}(f) \) denote its unique multilinear polynomial representation. We define the following complexity measures:
\[
    \deg(f) := \deg(\mathcal{P}(f)), \quad
    \spar(f) := \spar(\mathcal{P}(f)), \quad
    \dmnorm{f} := \dmnorm{\mathcal{P}(f)}.
\]
\end{definition}

\begin{remark}\label{rmk:spar-l1}
For Boolean functions \( f : \{0,1\}^n \to \{0,1\} \), the coefficients in \( \mathcal{P}(f) \) are integers (see for example \cite[Chapter 2]{Jukna-BFCbook}), and hence \( \spar(f) \leq \dmnorm{f} \).
\end{remark}

\begin{definition}[Complexity Measures for Functions via Approximating Polynomials]
Let \( f : \{0,1\}^n \to \mathbb{R} \) and let \( \epsilon > 0 \). We define:
\begin{align*}
    \adeg_\epsilon(f) &:= \min \{ \deg(Q) \mid Q \text{ satisfies } |Q(x) - f(x)| \leq \epsilon \text{ for all } x \in \{0,1\}^n \}, \\
    \aspar_\epsilon(f) &:= \min \{ \spar(Q) \mid Q \text{ satisfies } |Q(x) - f(x)| \leq \epsilon \text{ for all } x \in \{0,1\}^n \}, \\
    \dmnormapprox{f}{\epsilon} &:= \min \{ \dmnorm{Q} \mid Q \text{ satisfies } |Q(x) - f(x)| \leq \epsilon \text{ for all } x \in \{0,1\}^n \}.
\end{align*}
When \( \epsilon = 1/3 \), we write \( \adeg(f) := \adeg_{1/3}(f) \), \( \aspar(f) := \aspar_{1/3}(f) \), and \( \dmnormapprox{f}{} := \dmnormapprox{f}{1/3} \).

\end{definition}

\begin{theorem}[{\cite[Claim 4.3]{diakonikolas2010bounded}}]
\label{thm:error-reduction}
Let \( f : \{0,1\}^n \to \{0,1\} \) be a Boolean function. Then for any \( 0 < \epsilon < \frac{1}{2} \),
\[
    \adeg_\epsilon(f) = O\left(\adeg_{1/3}(f) \cdot \log(1/\epsilon)\right).
\]
\end{theorem}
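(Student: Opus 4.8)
The plan is to prove this by polynomial amplification: compose an optimal-degree $\tfrac13$-approximator of $f$ with a low-degree univariate polynomial that sharpens the accuracy. Let $Q$ be a multilinear polynomial of degree $d := \adeg_{1/3}(f)$ with $|Q(x) - f(x)| \le \tfrac13$ for every $x \in \{0,1\}^n$. Since $f$ is Boolean, this means $Q(x) \in [-\tfrac13, \tfrac13]$ when $f(x) = 0$ and $Q(x) \in [\tfrac23, \tfrac43]$ when $f(x) = 1$; in particular $Q$ maps $\{0,1\}^n$ into $[-\tfrac13, \tfrac43]$. I would first apply the affine map $L(y) := (3y+1)/5$, which sends $[-\tfrac13, \tfrac43]$ onto $[0,1]$, and set $Q' := L \circ Q$; then $\deg(Q') = d$, while $Q'(x) \in [0, \tfrac25]$ when $f(x) = 0$ and $Q'(x) \in [\tfrac35, 1]$ when $f(x) = 1$ --- two windows lying on opposite sides of $\tfrac12$ and bounded away from it.

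The crux is the amplification polynomial. I would take the binomial (Chernoff) polynomial $p_k(t) := \sum_{j > k/2} \binom{k}{j} t^j (1-t)^{k-j}$, which has degree $k$ and, for $t \in [0,1]$, equals $\Pr[\mathrm{Bin}(k,t) > k/2]$. Hoeffding's inequality gives $p_k(t) \le e^{-k/50}$ for $t \le \tfrac25$ and $p_k(t) \ge 1 - e^{-k/50}$ for $t \ge \tfrac35$, so choosing $k = \lceil 50 \ln(1/\epsilon)\rceil = O(\log(1/\epsilon))$ makes $e^{-k/50} \le \epsilon$. Setting $R := p_k \circ Q'$ and then multilinearizing it over $\{0,1\}^n$ (which changes no value on the cube and cannot raise the degree), we obtain a multilinear polynomial with $\deg(R) \le k \cdot d = O(\adeg_{1/3}(f)\cdot \log(1/\epsilon))$ and $|R(x) - f(x)| \le e^{-k/50} \le \epsilon$ for all $x \in \{0,1\}^n$. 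Hence $\adeg_\epsilon(f) \le \deg(R)$, which is exactly the claimed bound.

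The only genuinely technical ingredient is the existence of the degree-$O(\log(1/\epsilon))$ amplifier, and the Chernoff-bound description of $p_k$ makes this routine (a shifted Chebyshev-polynomial construction would serve equally well). The remaining points are bookkeeping: $Q$ overshoots $[0,1]$ slightly, which the affine rescaling $L$ absorbs, and composition destroys multilinearity, which the final multilinearization repairs without increasing the degree.
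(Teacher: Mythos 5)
Your amplification argument is correct and complete: the affine rescaling, the Chernoff/majority polynomial $p_k(t)=\sum_{j>k/2}\binom{k}{j}t^j(1-t)^{k-j}$ with the Hoeffding bound, and the final multilinearization together give $\adeg_\epsilon(f)\le O(\adeg_{1/3}(f)\cdot\log(1/\epsilon))$ exactly as claimed. The paper does not prove this statement itself but cites it (Claim 4.3 of Diakonikolas et al.), and your proof is essentially the standard amplification argument underlying that citation, so there is nothing to flag.
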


\begin{theorem}[{\cite[Theorem 4]{aaronson2021degree}}]
\label{thm:degree-apxdegree}
For every Boolean function \( f : \{0,1\}^n \to \{0,1\} \),
\[
    \deg(f) = O(\adeg(f)^2).
\]
\end{theorem}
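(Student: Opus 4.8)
This is the Aaronson--Ben-David--Kothari--Rao--Tal theorem \cite{aaronson2021degree}, and the route I would take combines Huang's sensitivity theorem with a bound on spectral sensitivity by approximate degree. For a Boolean $f$, let $G_f$ be its \emph{sensitivity graph} on $\{0,1\}^n$ --- the vertices are all inputs, and $\{x,x\oplus e_i\}$ is an edge exactly when $f(x)\neq f(x\oplus e_i)$ --- let $A_f$ be its adjacency matrix, and write $\lambda(f):=\|A_f\|$ for the operator norm (the \emph{spectral sensitivity} of $f$). Since the operator norm of an adjacency matrix is at most the maximum degree of the graph, $\lambda(f)\le\s(f)$. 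The plan is to prove the two inequalities $\deg(f)\le\lambda(f)^2$ and $\lambda(f)=O(\adeg(f))$, and then multiply them.

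For $\deg(f)\le\lambda(f)^2$ I would quote Huang's theorem in its spectral form $\lambda(f)\ge\sqrt{\deg(f)}$. Sketch of why: Huang proves that every induced subgraph of the $m$-dimensional cube on more than $2^{m-1}$ vertices has spectral radius at least $\sqrt m$, and via the Gotsman--Linial reduction this upgrades to $\lambda(h)\ge\sqrt m$ for every full-degree $h:\{0,1\}^m\to\{0,1\}$. Now pick a maximal-degree monomial $\prod_{i\in S}x_i$ of the exact polynomial of $f$ (so $|S|=\deg(f)$) and restrict all variables outside $S$ to $0$; the resulting $g$ on $|S|$ variables has full degree $|S|$, its sensitivity graph is an induced subgraph of $G_f$, and operator norm cannot decrease when passing to a larger graph, so $\lambda(f)\ge\lambda(g)\ge\sqrt{\deg(f)}$.

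The inequality $\lambda(f)=O(\adeg(f))$ is the crux, and it is here that approximation enters. Fix a multilinear $p$ of degree $d:=\adeg(f)$ with $|p(x)-f(x)|\le 1/3$ everywhere; then across every edge $\{x,x\oplus e_i\}$ of $G_f$ we have $|p(x\oplus e_i)-p(x)|\ge 1/3$, since $f$ changes by $1$ there. Writing $f$ in the $\{-1,1\}$ encoding one has the clean operator identity $A_f=\mathrm{diag}(f)\,N\,\mathrm{diag}(f)-N$, where $N$ is the cube number operator ($\widehat{Ng}_S=|S|\widehat g_S$), so that $\langle u,A_f u\rangle=\sum_S|S|(\widehat{fu}_S^2-\widehat u_S^2)$ for every $u$ --- that is, $\lambda(f)$ exactly quantifies how far multiplying by $f$ can move Fourier mass across degree levels. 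The trivial bound $\lambda(f)\le\deg(f)$ falls right out of this; to replace $\deg(f)$ by $\adeg(f)$ one feeds in the approximating polynomial $p$ via the edge inequality above together with a Markov-type estimate on the discrete gradient $(\sum_i(p(x\oplus e_i)-p(x))^2)^{1/2}$ of the bounded degree-$d$ polynomial $p$; I would cite the corresponding lemma of \cite{aaronson2021degree} for this step (it can alternatively be phrased through approximate trace norm / $\gamma_2$).

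Multiplying the two bounds gives $\deg(f)\le\lambda(f)^2=O(\adeg(f)^2)$, as claimed. I expect the second inequality to be the real obstacle: it asks us to control a worst-case quantity, the operator norm of the sensitivity graph, by the approximate degree, and the naive replacement only yields $\s(f)\le\bs(f)=O(\adeg(f)^2)$ and hence the weaker $\deg(f)=O(\adeg(f)^4)$ that appears in \cref{thm:NSAaronson}; squeezing out the extra quadratic factor is exactly what Huang's eigenvalue lemma, combined with the spectral refinement of sensitivity, makes possible.
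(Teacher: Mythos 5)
This statement is not proved in the paper at all: it is imported as a black box from Aaronson, Ben-David, Kothari, Rao and Tal \cite{aaronson2021degree}, so there is no internal argument to compare yours against. Your outline is a faithful reconstruction of that paper's route: Huang's theorem in its spectral form gives $\deg(f)\le\lambda(f)^2$ (your restriction of a maximal monomial to a subcube, monotonicity of the operator norm under passing to an induced subgraph, and the Gotsman--Linial step are all correct), and the identity $A_f=\mathrm{diag}(f)N\,\mathrm{diag}(f)-N$ in the $\pm1$ encoding is right and is indeed the starting point for bounding $\lambda(f)=O(\adeg(f))$. The one caveat is that, as you acknowledge, the entire new content of the theorem beyond Huang lives in that second inequality, and your sketch defers its key step (the Markov/Bernstein-type control of the discrete gradient of a bounded degree-$d$ polynomial) back to \cite{aaronson2021degree}; so what you have is a correct attribution and proof outline rather than a self-contained proof --- which is consistent with how the paper itself uses the result. (Minor quibble: the remark that $\lambda(f)\le\deg(f)$ ``falls right out'' of the quadratic-form identity is not immediate for general test vectors because of cross terms, but nothing in your argument uses it.)
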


\begin{remark}
The bound in Theorem~\ref{thm:degree-apxdegree} is tight; for example, the \( \OR_n \) function satisfies \( \deg(\OR_n) = n \) and \(\adeg(\OR_n) = \Theta(\sqrt{n}) \).
\end{remark}

\begin{definition}[Restrictions]
A \emph{restriction} \( \rho \) on a set of variables \( V \subseteq \{x_1, \dots, x_n\} \) is a partial assignment
\[
    \rho : V \to \{0,1,*\},
\]
where for \( x_i \in V \), \( \rho(x_i) \in \{0,1\} \) indicates that \( x_i \) is fixed, and \( \rho(x_i) = * \) means \( x_i \) is left free. We define:
\begin{align*}
    \mathsf{SetVars}(\rho) &:= \{ x_i \in V \mid \rho(x_i) \in \{0,1\} \}, \\
    \mathsf{FreeVars}(\rho) &:= \{ x_i \in V \mid \rho(x_i) = * \}.
\end{align*}
The \emph{size} of \( \rho \), denoted \( \srho{\rho} \), is the number of free variables:
\[
    \srho{\rho} := |\mathsf{FreeVars}(\rho)|.
\]
\end{definition}

\noindent
Let \( Q \in \mathbb{R}[x_1, \dots, x_n] \) be a polynomial and \( V \subseteq \{x_1, \dots, x_n\} \). For a restriction \( \rho \) on \( V \), we write \( Q|_\rho \) for the polynomial obtained by substituting \( x_i = \rho(x_i) \) for all \( x_i \in \mathsf{SetVars}(\rho) \).

\medskip

\noindent
For an input \( w \in \{0,1\}^V \) and a subset \( T \subseteq V \), we write \( w|_T \in \{0,1\}^T \) to denote the projection of \( w \) to the coordinates in \( T \). For singleton sets, we simply write \( w_i \) for \( w|_{\{x_i\}} \).

\medskip

\noindent
Let \( F \subseteq \{0,1\}^V \) be a set of Boolean assignments and \( \rho \) a restriction on \( V \). The restriction of \( F \) under \( \rho \), denoted \( F|_\rho \), is defined as
\[
    F|_\rho := \left\{ w|_{\mathsf{FreeVars}(\rho)} \,\middle|\, w \in F,\ \forall x_i \in \mathsf{SetVars}(\rho),\ w_i = \rho(x_i) \right\}.
\]

\section{Sparsity vs. Approximate Sparsity}

In this section, we show that for Boolean functions \( f : \{0,1\}^n \to \{0,1\} \), the exact and approximate sparsity are polynomially related on the logarithmic scale. Rather than assuming large sparsity and arguing for large approximate sparsity, we start with the weaker assumption of a large exact \( \ell_1 \)-norm (see \cref{rmk:spar-l1}) and argue large approximate sparsity. This approach yields both \cref{thm:spar-aspar} and \cref{thm:ell-1} in one go, showing that the logarithms of the exact and approximate sparsity, as well as of the exact and approximate \( \ell_1 \)-norm, are polynomially related up to a \( \log n \) factor.

\paragraph{Proof Overview.}
Let \( f \) be a Boolean function with large exact \( \ell_1 \)-norm. We aim to show that any polynomial approximating \( f \) within error \( 1/3 \) must also has large sparsity. The argument proceeds via a carefully constructed random restriction \( \rho \), sampled using Algorithm~\ref{algo:restriction}, which satisfies the following properties:

\begin{enumerate}
    \item With high probability, \( \rho \) leaves \( \ell = \Omega(\log \dmnorm{f}/\log n) \) variables free.
     \item The restricted function \( f|_{\rho} \) has full degree on the variables left free.
    \item For any monomial \( M \) and any $t \ge 1$, the probability that \( \deg(M|_\rho) \geq t \) is at most \( 2^{-t} \).
\end{enumerate}

With the above properties of $\rho$, the reason why the approximate sparsity of \( f \) must be large becomes evident. Let $\ell$ (roughly $\log \dmnorm{f}/\log n$) denote the number of variables left free by $\rho$. Suppose there exists a polynomial \( Q \) approximating \( f \) having sparsity less than \( 2^{\sqrt{\ell / c}} \), for some constant \( c > 0 \) (to be chosen appropriately). Then, by property (3) above and using a probabilistic argument, there exists a restriction \( \rho \) that eliminates all monomials of degree at least \( \sqrt{\ell / c} \) in \( Q \). Consequently, the restricted polynomial \( Q|_{\rho} \) has degree strictly less than \( \sqrt{\ell / c} \).

On the other hand, by property (2), the restricted function \( f|_{\rho} \) has degree \( \ell \). Therefore, \( Q|_{\rho} \) approximates \( f|_{\rho} \), a Boolean function of degree \( \ell \), using a polynomial of degree less than \( \sqrt{\ell / c} \). This contradicts the known relationship between degree and approximate degree for Boolean functions—specifically, that \( \deg(f) \leq c \cdot \adeg(f)^2 \) for some universal constant \( c \) \cite{aaronson2021degree}.

We conclude that any polynomial approximating \( f \) must have sparsity at least \( 2^{\Omega(\sqrt{\ell})} \). Since \( \ell = \Theta(\log \dmnorm{f}/\log n) \), it follows that the logarithms of exact \( \ell_1 \)-norm  and approximate sparsity are related quadratically (up to a \( \log n \) factor).

The novelty of our proof lies in the method of sampling random restrictions that satisfy the key properties outlined above. In contrast to the non-adaptive restrictions commonly used in circuit complexity and related areas, our sampling procedure is adaptive—it takes into account the effects of previous random choices on the hardness measure (in our case, the sparsity of the restricted function). We believe this adaptive approach to sampling restrictions may have applications beyond the present context.

We now abstract the above idea into a general notion of hardness:

\begin{definition}[\( \ell \)-Variable Max-Degree Distribution]\label{defn:harddist}
Let \( f : \{0,1\}^n \to \{0,1\} \), and let \( \mathcal{D} \) be a distribution over restrictions \( \rho : \{x_1, \dots, x_n\} \to \{0,1,*\} \). We say that \( \mathcal{D} \) is an \emph{\( \ell \)-variable max-degree distribution} for \( f \) if:
\begin{enumerate}
    \item With probability at least \( 0.9 \), \( \rho \) leaves at least \( \ell \) variables free.
    \item For every \( \rho \) in the support of \( \mathcal{D} \), we have \( \deg(f|_\rho) = \srho{\rho} \). 
    \item For any monomial \( M \) and any \( t \in \mathbb{N} \), \(\Pr_{\rho \sim \mathcal{D}}\left[ \deg(M|_\rho) \geq t \right] \leq 2^{-t}.\)
\end{enumerate}
\end{definition}

We will show that a large exact \( \ell_1 \)-norm implies the existence of such a distribution, which in turn implies that any polynomial approximating \( f \)  must have large sparsity—thereby connecting exact \( \ell_1 \)-norm and approximate sparsity.

\paragraph{Organization of this section.}
In~\cref{subsec:restriction-prop}, we show how to construct a max-degree distribution when \( f \) has large exact \( \ell_1 \)-norm. In~\cref{subsec:together}, we use this to prove \cref{thm:spar-aspar,thm:ell-1}. In~\cref{subsec:spar-optimality}, we discuss the tightness of our bounds. Finally, in~\cref{subsec:spar-AND}, we explore implications for the $\AND$ query model.

\subsection{The Restriction Process and Its Properties} \label{subsec:restriction-prop}

\begin{algorithm}[H]
\caption{\samplerho}
\label{algo:restriction}
\begin{algorithmic}[1]
\State \textbf{Input:} Non-zero multilinear polynomial \( Q \in \mathbb{R}[x_1, \dots, x_n] \); set \( \supp \subseteq \{x_1, \dots, x_n\} \) with \( \mathrm{vars}(Q) \subseteq \supp \).
\State \textbf{Output:} A restriction \( \rho: \supp \to \{0,1,*\} \).
\If{\( |\supp| = 0 \)} \label{line:baseCase} 
   \State \Return empty \( \rho \) 
\Else \label{line:nontrivial}
    \If{there exists \( x_i \in \supp \), \( u \in \{0,1\} \) such that \( \dmnorm{Q|_{x_i = u}} \geq \left(1 - \frac{1}{n} \right) \cdot \dmnorm{Q} \)} \label{algo:line:cond2}
        \State \( \rho' \gets \samplerho(Q|_{x_i = u}, \supp \setminus \{x_i\}) \) \label{algo:line:recursivecall1}
        \State Set \( \rho(x_i) \gets u \), and for all \( x_j \in \supp \setminus \{x_i\} \), set \( \rho(x_j) \gets \rho'(x_j) \)
    \Else \label{algo:line:cond3}
        \State Choose \( x_i \in \supp \) arbitrarily \label{algo:line:chosenVar}
        \State Express \( Q \) as \( Q = R_1 \cdot x_i + R_0 \)
        \State With probability \( 1/2 \): \label{algo:line:rand0}
            \State \hspace{1.5em} \( \rho_0 \gets \samplerho(R_0, \supp \setminus \{x_i\}) \) \label{algo:line:recursivecall2}
            \State \hspace{1.5em} Set \( \rho(x_i) \gets 0 \), and for all \( x_j \in \supp \setminus \{x_i\} \), set \( \rho(x_j) \gets \rho_0(x_j) \)
        \State Otherwise: \label{algo:line:rand1}
            \State \hspace{1.5em} \( \rho_* \gets \samplerho(R_1, \supp \setminus \{x_i\}) \) \label{algo:line:recursivecall3}
            \State \hspace{1.5em} Set \( \rho(x_i) \gets * \), and for all \( x_j \in \supp \setminus \{x_i\} \), set \( \rho(x_j) \gets \rho_*(x_j) \)
    \EndIf
    \State \Return \( \rho \)
\EndIf
\end{algorithmic}
\end{algorithm}

Algorithm~\ref{algo:restriction} describes a procedure for sampling random restrictions for a given input polynomial \( Q \). When applied to the unique multilinear polynomial \( Q \) that exactly represents a Boolean function \( f \), we will show that the resulting distribution over restrictions is \( \ell \)-variable max-degree for \( f \), where \( \ell = \Omega\left(\frac{\log \dmnorm{f}}{\log n}\right) \). In this subsection, we establish the properties of the distribution induced by this process.

We begin with some observations about Algorithm~\ref{algo:restriction}. First, we claim that if a call to $\samplerho$ on a polynomial \( Q \) reaches line~\ref{algo:line:cond3}, then \( Q \) is balanced with respect to each of its variables. That is, for every \( x_i \in \vars(Q) \), if we write \( Q = R_1 x_i + R_0 \), then both \( \dmnorm{R_0} \) and \( \dmnorm{R_1} \) are at least a \( (1/2n) \)-fraction of \( \dmnorm{Q} \). Formally:

\begin{claim}\label{claim:balanced}
Suppose Algorithm~\ref{algo:restriction} reaches the \texttt{else} branch at line~\ref{algo:line:cond3} on input polynomial \( Q \). Then for every \( x_i \in \vars(Q) \), writing \( Q = R_1 x_i + R_0 \), we have
\[
\dmnorm{R_0} \geq \frac{1}{2n} \dmnorm{Q} \quad \text{and} \quad \dmnorm{R_1} \geq \frac{1}{2n} \dmnorm{Q}.
\]
\end{claim}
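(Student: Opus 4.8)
The plan is to unpack the negation of the condition tested at line~\ref{algo:line:cond2} and combine it with the simple fact that the De Morgan \( \ell_1 \)-norm splits additively along any single variable. Fix \( x_i \in \vars(Q) \subseteq \supp \) and write \( Q = R_1 x_i + R_0 \), where \( R_0, R_1 \) are multilinear polynomials not involving \( x_i \). The first observation is that \( \dmnorm{Q} = \dmnorm{R_0} + \dmnorm{R_1} \): every monomial of \( R_1 x_i \) contains \( x_i \) while no monomial of \( R_0 \) does, so there is no cancellation and the coefficient magnitudes simply add. Note also that \( Q|_{x_i = 0} = R_0 \) and \( Q|_{x_i = 1} = R_0 + R_1 \).

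Since the algorithm reached line~\ref{algo:line:cond3}, the condition at line~\ref{algo:line:cond2} failed for \emph{every} variable and \emph{every} value; in particular \( \dmnorm{Q|_{x_i = 0}} < (1 - \tfrac1n)\dmnorm{Q} \) and \( \dmnorm{Q|_{x_i = 1}} < (1 - \tfrac1n)\dmnorm{Q} \). The first inequality reads \( \dmnorm{R_0} < (1 - \tfrac1n)\dmnorm{Q} \), and combining with \( \dmnorm{R_1} = \dmnorm{Q} - \dmnorm{R_0} \) immediately gives \( \dmnorm{R_1} > \tfrac1n \dmnorm{Q} \geq \tfrac{1}{2n}\dmnorm{Q} \).

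For the lower bound on \( \dmnorm{R_0} \), I would use the \( x_i = 1 \) inequality together with the triangle inequality applied as \( \dmnorm{R_1} = \dmnorm{(R_0 + R_1) - R_0} \leq \dmnorm{R_0 + R_1} + \dmnorm{R_0} \). This yields \( \dmnorm{R_1} - \dmnorm{R_0} < (1 - \tfrac1n)\dmnorm{Q} \); adding \( \dmnorm{R_1} + \dmnorm{R_0} = \dmnorm{Q} \) gives \( 2\dmnorm{R_1} < (2 - \tfrac1n)\dmnorm{Q} \), hence \( \dmnorm{R_0} = \dmnorm{Q} - \dmnorm{R_1} > \tfrac{1}{2n}\dmnorm{Q} \), completing the proof.

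There is no substantive obstacle here; the only point that needs a little care is the direction in which the triangle inequality is invoked. One cannot directly lower-bound \( \dmnorm{R_0 + R_1} \) in terms of \( \dmnorm{R_0} \), because cancellation in \( R_0 + R_1 \) could in principle make it small; writing \( R_1 \) as the difference \( (R_0 + R_1) - R_0 \) sidesteps this and is what makes the clean \( \tfrac{1}{2n} \) bound go through.
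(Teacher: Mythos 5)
Your proof is correct and is essentially the paper's argument: both rest on the additivity $\wt(Q)=\wt(R_0)+\wt(R_1)$, the identities $Q|_{x_i=0}=R_0$ and $Q|_{x_i=1}=R_0+R_1$, and the triangle inequality (your $\wt(R_1)\le\wt(R_0+R_1)+\wt(R_0)$ is the paper's $\wt(Q|_{x_i=1})\ge\wt(R_1)-\wt(R_0)$ rearranged). The only cosmetic difference is that you argue directly rather than by contradiction, incidentally getting the slightly stronger bound $\wt(R_1)\ge\tfrac{1}{n}\wt(Q)$.
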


\begin{proof}
If Algorithm~\ref{algo:restriction} reaches the \texttt{else} branch at line~\ref{algo:line:cond3} on input polynomial \( Q \), then by the condition of that line, we have:
\[
\forall x_i \in \vars(Q),\ \forall u \in \{0,1\},\quad \dmnorm{Q|_{x_i = u}} < (1 - \tfrac{1}{n}) \dmnorm{Q}.
\]

Fix a variable \( x_i \in \vars(Q) \), and write \( Q = R_1 x_i + R_0 \). We claim that both \( \dmnorm{R_0} \geq \tfrac{1}{2n} \dmnorm{Q} \) and \( \dmnorm{R_1} \geq \tfrac{1}{2n} \dmnorm{Q} \). Suppose, for contradiction, that one of these inequalities fails.

\begin{itemize}
    \item \textbf{Case 1:} \( \dmnorm{R_1} < \tfrac{1}{2n} \dmnorm{Q} \). Since \( Q|_{x_i = 0} = R_0 \), we get:
    \[
    \dmnorm{Q} = \dmnorm{R_0} + \dmnorm{R_1} = \dmnorm{Q|_{x_i = 0}} + \dmnorm{R_1} < (1 - \tfrac{1}{n}) \dmnorm{Q} + \tfrac{1}{2n} \dmnorm{Q} < \dmnorm{Q},
    \]
    which is a contradiction.

    \item \textbf{Case 2:} \( \dmnorm{R_0} < \tfrac{1}{2n} \dmnorm{Q} \). Since \( Q|_{x_i = 1} = R_1 + R_0 \), cancellations could occur between monomials in \( R_1 \) and \( R_0 \), but even in the worst case we have: \(
    \dmnorm{Q|_{x_i = 1}} \geq \dmnorm{R_1} - \dmnorm{R_0}.\)
    
    Therefore,
    \[
    \dmnorm{Q|_{x_i = 1}} \geq \dmnorm{R_1} - \dmnorm{R_0} = \dmnorm{Q} - 2\dmnorm{R_0} > \left(1 - \tfrac{1}{n}\right)\dmnorm{Q},
    \]
    which contradicts the assumption that \( \dmnorm{Q|_{x_i = 1}} < \left(1 - \tfrac{1}{n} \right) \dmnorm{Q} \).
\end{itemize}

Hence, both \( \dmnorm{R_0} \geq \tfrac{1}{2n} \dmnorm{Q} \) and \( \dmnorm{R_1} \geq \tfrac{1}{2n} \dmnorm{Q} \) must hold for every \( x_i \in \vars(Q) \).
\end{proof}

\begin{observation}\label{obs:non-z}
If the input polynomial \( Q \) is initially non-zero, then by Claim~\ref{claim:balanced}, all recursive calls in Algorithm~\ref{algo:restriction} continue to operate on non-zero polynomials. Furthermore, if at any point during the recursion the sparsity of the input polynomial becomes 1—i.e., the polynomial consists of a single monomial \( M \)—then, due to line~\ref{algo:line:cond2}, the algorithm deterministically sets all remaining variables in the final restriction \( \rho \) according to the support of \( M \). Specifically, for each \( x_i \in \supp \), we set \( \rho(x_i) \gets 1 \) if \( x_i \in \mathrm{vars}(M) \), and \( \rho(x_i) \gets 0 \) otherwise.
\end{observation}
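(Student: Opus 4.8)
The plan is to verify both assertions by one induction over the recursion tree of Algorithm~\ref{algo:restriction}, maintaining two invariants for every recursive call \( \samplerho(Q',V') \): (i) \( \vars(Q') \subseteq V' \), and (ii) \( Q' \) is non-zero, i.e.\ \( \dmnorm{Q'} > 0 \) (a multilinear polynomial vanishes iff all its coefficients do). Invariant (i) is immediate, since restricting a variable and deleting it from \( V \) only shrinks the variable set, and \( \vars(R_0), \vars(R_1) \subseteq \vars(Q') \setminus \{x_i\} \) in the \texttt{else} branch.

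Invariant (ii) is exactly the first assertion, and I would prove it by splitting on the branch a call on a non-zero \( Q' \) takes. In the \texttt{if} branch (line~\ref{algo:line:cond2}) the child \( Q'|_{x_i=u} \) has \( \dmnorm{Q'|_{x_i=u}} \ge (1-\tfrac1n)\dmnorm{Q'} \ge \tfrac12\dmnorm{Q'} > 0 \) for \( n \ge 2 \) (the cases \( n \le 1 \) being degenerate), hence is non-zero. In the \texttt{else} branch I would first record a fact reused below: there \( V' = \vars(Q') \). Indeed, were there some \( x_j \in V' \setminus \vars(Q') \), then \( Q'|_{x_j=u} = Q' \), so \( \dmnorm{Q'|_{x_j=u}} = \dmnorm{Q'} \ge (1-\tfrac1n)\dmnorm{Q'} \), meaning the test at line~\ref{algo:line:cond2} succeeds — contradicting that line~\ref{algo:line:cond3} was reached. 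Hence the variable \( x_i \) chosen at line~\ref{algo:line:chosenVar} lies in \( \vars(Q') \), so Claim~\ref{claim:balanced} applies and yields \( \dmnorm{R_0}, \dmnorm{R_1} \ge \tfrac{1}{2n}\dmnorm{Q'} > 0 \); whichever of \( R_0, R_1 \) is passed on is therefore non-zero. The base case \( |V'| = 0 \) makes no recursive call, so the induction closes.

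For the second assertion, suppose a call receives a single monomial \( M = c\prod_{i\in S} x_i \) with \( c \ne 0 \) and, by invariant (i), \( S = \vars(M) \subseteq V' \). I would show by induction on \( |V'| \) that every later call also runs on a single monomial and takes the \texttt{if} branch — so no coin is flipped — assigning \( 1 \) to the variables in \( S \) and \( 0 \) to the rest of \( V' \). The key observation is that the test at line~\ref{algo:line:cond2} always holds for such an \( M \): for any \( x_i \in S \) one has \( \dmnorm{M|_{x_i=1}} = |c| = \dmnorm{M} \), and if \( S = \emptyset \) then \( M \) is a constant, unchanged by any substitution; so line~\ref{algo:line:cond3} is never reached. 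For \( x_i \in S \) the choice \( u=0 \) is infeasible (\( \dmnorm{M|_{x_i=0}} = 0 < (1-\tfrac1n)|c| \)), forcing \( \rho(x_i)=1 \) and leaving \( M/x_i \) for the recursion; for \( x_i \in V' \setminus S \) both substitutions leave \( M \) intact, and under the natural tie-break (prefer \( u=0 \)) we get \( \rho(x_i)=0 \). The inductive hypothesis applied to the smaller recursive call finishes the proof.

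The bookkeeping here is routine; the one genuinely delicate point is the value assigned to variables in \( V' \setminus \vars(M) \) in the second assertion, which is determined only once the tie-breaking rule at line~\ref{algo:line:cond2} is fixed. I would address this either by specifying that tie-break (favoring \( u=0 \)), or by remarking that the value of such ``dummy'' assignments is immaterial downstream: it affects neither \( M|_\rho \) nor — in the settings where the Observation is applied — \( f|_\rho \). Everything else reduces to Claim~\ref{claim:balanced} and elementary identities for \( \ell_1 \)-norms under variable substitution.
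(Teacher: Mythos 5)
Your proof is correct and takes essentially the route the paper intends: the if-branch preserves at least a \((1-\tfrac1n)\)-fraction of the \( \ell_1 \)-norm, the else-branch reduces to Claim~\ref{claim:balanced} (and your observation that reaching line~\ref{algo:line:cond3} forces \( \supp = \vars(Q) \) is precisely the detail that makes Claim~\ref{claim:balanced} applicable to the arbitrarily chosen \( x_i \)), and once the polynomial is a single monomial the test at line~\ref{algo:line:cond2} rules out \( u=0 \) for variables of \( M \), so the if-branch is taken deterministically thereafter. Your caveat about variables outside \( \vars(M) \) is well taken: the paper's ``\( \rho(x_i) \gets 0 \) otherwise'' is implicitly a tie-breaking convention for the arbitrary choice at line~\ref{algo:line:cond2}, and, as you note, which value such dummy variables receive is immaterial for how the observation is used.
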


We next observe how the range of the input polynomial over \( \{0,1\}^{\supp} \) evolves during recursion. Let \( Q \) be the input at some stage. The following cases arise in the recursion:
\begin{enumerate}
    \item If the recursion proceeds via line~\ref{algo:line:recursivecall1}, the next polynomial is \( Q|_{x_i = u} \) for some \( x_i \in \supp \), \( u \in \{0,1\} \).
    \item If via line~\ref{algo:line:recursivecall2}, we recurse on \( R_0 = Q|_{x_i = 0} \), where \( x_i \) is chosen in line~\ref{algo:line:chosenVar}.
   \item If via line~\ref{algo:line:recursivecall3}, we write \( Q = R_1 \cdot x_i + R_0 \), where \( x_i \) is chosen in line~\ref{algo:line:chosenVar}, and recurse on \( R_1 = \partial_{x_i} Q \).
\end{enumerate}
In cases (1) and (2), the recursive call uses a restriction of \( Q \), so the range of values on Boolean inputs does not increase. In case (3), the derivative \( R_1 \) may have a larger range. However, as a discrete derivative, its range is controlled—it lies within twice the range of \( Q \).

\begin{claim}\label{claim:rangeQ}
Let \( Q \in \mathbb{R}[x_1, \dots, x_n] \) with \( \vars(Q) \subseteq \supp \), and suppose \( Q = R_1 \cdot x_i + R_0 \) for some \( x_i \in \supp \), such that the range of \( Q \) over \( \{0,1\}^{\supp} \) is contained in \( [a,b] \). Then the range of \( R_1 \) over \( \{0,1\}^{\supp \setminus \{x_i\}} \) is contained in \( [-(b-a), b-a] \).
\end{claim}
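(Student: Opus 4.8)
The plan is to identify $R_1$ with the discrete derivative of $Q$ in the direction $x_i$ and then bound a difference of two values of $Q$, each of which lies in $[a,b]$. Concretely, since $Q$ is multilinear and $Q = R_1 \cdot x_i + R_0$, both $R_1$ and $R_0$ are polynomials over the variables in $\supp \setminus \{x_i\}$ only (the coefficient of $x_i^0$ and of $x_i^1$ in the unique multilinear representation). Hence, as polynomials over $\{0,1\}^{\supp \setminus \{x_i\}}$,
\[
Q|_{x_i = 1} = R_1 + R_0 \qquad \text{and} \qquad Q|_{x_i = 0} = R_0,
\]
so that $R_1 = Q|_{x_i = 1} - Q|_{x_i = 0}$.

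Next I would fix an arbitrary Boolean assignment $w \in \{0,1\}^{\supp \setminus \{x_i\}}$ and evaluate. Extending $w$ by $x_i = 1$ and by $x_i = 0$ gives two points of $\{0,1\}^{\supp}$, so by hypothesis $Q|_{x_i=1}(w) \in [a,b]$ and $Q|_{x_i=0}(w) \in [a,b]$. Therefore
\[
R_1(w) = Q|_{x_i=1}(w) - Q|_{x_i=0}(w) \in [a - b,\ b - a] = [-(b-a),\ b-a].
\]
Since $w$ was arbitrary, the range of $R_1$ over $\{0,1\}^{\supp \setminus \{x_i\}}$ is contained in $[-(b-a), b-a]$, as claimed.

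There is essentially no obstacle here; the only point requiring a word of care is the first one, namely that the decomposition $Q = R_1 x_i + R_0$ has $R_1, R_0$ free of $x_i$, which is exactly what multilinearity of $Q$ guarantees and is how the decomposition is used at line~\ref{algo:line:recursivecall3} of Algorithm~\ref{algo:restriction}. Everything else is the one-line identity $R_1 = Q|_{x_i=1} - Q|_{x_i=0}$ together with the triangle-type bound on the difference of two numbers lying in a common interval of length $b-a$.
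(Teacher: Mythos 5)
Your proposal is correct and follows the same argument as the paper: writing $R_1 = Q|_{x_i=1} - Q|_{x_i=0}$ (via the extensions of an arbitrary $w$ with $x_i=1$ and $x_i=0$) and bounding the difference of two values lying in $[a,b]$. No issues.
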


\begin{proof}
Fix any \( w \in \{0,1\}^{\supp \setminus \{x_i\}} \), and let \( w_0, w_1 \in \{0,1\}^{\supp} \) be its extensions with \( x_i = 0 \) and \( x_i = 1 \), respectively. Then \( R_1(w) = Q(w_1) - R_0(w) = Q(w_1) - Q(w_0) \). Since both \( Q(w_0) \) and \( Q(w_1) \) lie in \( [a, b] \), their difference lies in \( [-(b - a), b - a] \), as claimed. 
\end{proof}

In particular, if we start with a polynomial \( Q \) representing a Boolean function and take \( k \) successive derivatives, the range of the resulting polynomial is contained in \( [-2^{k-1}, 2^{k-1}] \) by repeated application of the claim.

We now show that the restriction \( \rho \) returned by Algorithm~\ref{algo:restriction} leaves a significant number of variables free. The algorithm proceeds recursively and follows one of three possible execution paths:

\begin{itemize}
    \item If \( |\supp| = 0 \), the recursion terminates and the algorithm backtracks.
    \item If the condition on line~\ref{algo:line:cond2} holds, the algorithm makes a single recursive call (line~\ref{algo:line:recursivecall1}).
    \item Otherwise, the condition on line~\ref{algo:line:cond3} holds, and the algorithm makes one of two recursive calls (lines~\ref{algo:line:recursivecall2}, \ref{algo:line:recursivecall3}) depending on the outcome of a coin toss.
\end{itemize}

The recursion continues while \( |\supp| \geq 1 \), and halts when \( |\supp| = 0 \), after which the final restriction is assembled by backtracking.

We classify recursive calls into two types: a call is \emph{passive} if the condition on line~\ref{algo:line:cond2} is satisfied, and \emph{active} if the condition on line~\ref{algo:line:cond3} is satisfied.

We argue that any execution of the algorithm must involve a substantial number of active recursive calls. In each such call, a variable is left free with probability \( 1/2 \). Therefore, if the algorithm makes \( \ell \) active calls, the expected number of variables left free in the final restriction is \( \ell/2 \). 

Moreover, since the decision to leave a variable free (i.e., assign it the value \(*\)) in an active call is independent of the choices made in previous calls, the number of free variables in the final restriction is tightly concentrated around its expectation. By a standard Chernoff bound, with high probability, at least a constant fraction of these \( \ell \) active calls will indeed result in variables being left free.

This leads to the following formal statement:

\begin{claim}\label{claim:sizeofrho}
Let \( Q \) be the multilinear polynomial representing a Boolean function with \( \dmnorm{Q} \geq 10(4n)^{40} \), and let \( \rho = \samplerho(Q, \{x_1, \dots, x_n\}) \) be the restriction output by Algorithm~\ref{algo:restriction}. Then, with probability at least \( 0.9 \), the restriction \( \rho \) leaves at least \( \Omega\left(\frac{\log \dmnorm{Q}}{\log n}\right) \) variables free.
\end{claim}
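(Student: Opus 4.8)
The plan is to separate the argument into a \emph{deterministic} lower bound on the number of active recursive calls and a routine \emph{probabilistic} concentration step. I would fix an arbitrary execution of $\samplerho(Q,\{x_1,\dots,x_n\})$ and let $Q=Q^{(0)},Q^{(1)},\dots,Q^{(n)}$ be the polynomials on which the successive recursive calls operate, so that $Q^{(n)}$ is a constant; write $a$ for the number of active calls (those reaching line~\ref{algo:line:cond3}) and $p=n-a$ for the number of passive ones. The first goal is to show that $a=\Omega(\log\dmnorm{Q}/\log n)$ for \emph{every} sequence of coin tosses.

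To do this I would run two competing estimates on $\dmnorm{Q^{(n)}}$. For a lower bound, I track how the $\ell_1$-norm decays along the path: each $Q^{(k)}$ is nonzero by Observation~\ref{obs:non-z}; a passive step loses at most a $(1-1/n)$ factor by the test on line~\ref{algo:line:cond2}, and an active step (writing $Q^{(k)}=R_1x_i+R_0$, it recurses on $R_0$ or on $R_1$) loses at most a $1/(2n)$ factor by the balancedness Claim~\ref{claim:balanced}. Telescoping logarithms over all $n$ steps, with $\log\frac{1}{1-1/n}\le 3/n$ and $p\le n$, gives $\log\dmnorm{Q^{(n)}}\ge \log\dmnorm{Q}-3-a\log(2n)$. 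For an upper bound, observe that each active call either restricts its variable to $0$ (recursing on $R_0$) or leaves it free and differentiates (recursing on $R_1=\partial_{x_i}Q^{(k)}$), while passive calls only restrict; hence $Q^{(n)}$ arises from $\mathcal{P}(f)$ by restrictions together with exactly $d\le a$ discrete derivatives, where $d=\srho{\rho}$. Since $f$ is Boolean, $\mathcal{P}(f)$ has range in $[0,1]$, restrictions do not enlarge the range, and each derivative at most doubles the maximum absolute value by Claim~\ref{claim:rangeQ}; so $\dmnorm{Q^{(n)}}=|Q^{(n)}|\le 2^{d}\le 2^{a}$, i.e.\ $\log\dmnorm{Q^{(n)}}\le a$. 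Combining the two estimates yields $a\log(4n)\ge\log\dmnorm{Q}-3$; feeding in $\dmnorm{Q}\ge 10(4n)^{40}$ (so $\log\dmnorm{Q}>3+40\log(4n)$) then gives $a\ge 40$ and $a=\Omega(\log\dmnorm{Q}/\log n)$, on every path.

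For concentration I would model the algorithm's randomness as an i.i.d.\ stream of fair bits $b_1,b_2,\dots$, with $b_j$ read at the $j$-th active call and $b_j=1$ meaning ``leave free'' (line~\ref{algo:line:rand1}). Then $\srho{\rho}=\sum_{j=1}^{a}b_j$, and since $a\ge A:=\lfloor (\log\dmnorm{Q}-3)/\log(4n) \rfloor$ holds deterministically, $\srho{\rho}\ge\sum_{j=1}^{A}b_j\sim\mathrm{Bin}(A,1/2)$. As $A\ge 40$, a standard Chernoff bound gives $\Pr[\mathrm{Bin}(A,1/2)<A/4]\le e^{-A/16}<0.1$, and $A/4=\Omega(\log\dmnorm{Q}/\log n)$ — which is exactly the claim. (The exponent $40$ in the hypothesis is precisely what makes $A$ large enough for this bound, and the case $n\le 1$ is vacuous since a one-variable Boolean function has $\ell_1$-norm at most $2$.)

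I expect the main obstacle to be the deterministic lower bound on $a$. The subtlety is that neither of the two estimates alone suffices — along an unlucky branch the $\ell_1$-norm could a priori collapse within a single active step — and one must pit the slow-decay lower bound (from Claim~\ref{claim:balanced}) against the boundedness $\dmnorm{Q^{(n)}}\le 2^{\#\text{derivatives}}$ (from Claim~\ref{claim:rangeQ}) to conclude that many active steps must in fact have taken place. Once $a$ is pinned down, the concentration step is routine; the only care needed is to decouple ``which calls are active'' (coin-dependent) from ``the bits used at the first $A$ active calls'' (genuinely i.i.d.), which is legitimate precisely because $a\ge A$ holds pointwise.
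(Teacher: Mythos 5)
Your proposal is correct and follows essentially the same route as the paper's proof: a deterministic lower bound on the number of active calls obtained by playing the $\ell_1$-norm decay bounds (the $(1-1/n)$ passive factor and the $1/(2n)$ balancedness factor from Claim~\ref{claim:balanced}) against the range upper bound from Claim~\ref{claim:rangeQ} on the final constant, followed by a Chernoff bound on the independent free/zero coins at the active calls. Your telescoping of logarithms, the slightly sharper $2^{d}$ bound with $d=\srho{\rho}$, and the explicit i.i.d.-stream justification of independence are only cosmetic refinements of the paper's argument.
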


\begin{proof}
We begin by showing that any execution of the algorithm must involve a substantial number of \emph{active} recursive calls. Let the algorithm make \( t \) total recursive calls, of which \( \ell \) are active. Since the size of \( \supp \) decreases by 1 in each call and the recursion terminates when \( |\supp| = 0 \), we have \( t \leq n \).

Now, observe how the \( \ell_1 \)-norm and the range of the polynomial (when evaluated on inputs in \( \{0,1\}^{\supp} \)) evolve during the recursion:

\begin{itemize}
    \item In a \emph{passive} call (i.e., when line~\ref{algo:line:cond2} is satisfied), the polynomial in the next step is of the form \( Q|_{x_i = u} \), whose \( \ell_1 \)-norm is at least \( (1 - 1/n) \cdot \dmnorm{Q} \), and which takes the same range of values on Boolean inputs as \( Q \).
    
    \item In an \emph{active} call (i.e., when line~\ref{algo:line:cond3} is satisfied), the \( \ell_1 \)-norm of the next polynomial drops by at most a factor \( 1/(2n) \), by Claim~\ref{claim:balanced}. If recursion proceeds via line~\ref{algo:line:recursivecall2}, the value set over Boolean inputs does not increase. If via line~\ref{algo:line:recursivecall3}, the next polynomial is the discrete derivative \( \partial_{x_i} Q \), whose range on Boolean inputs increases by at most a factor of 2 (Claim~\ref{claim:rangeQ}).
\end{itemize}

The recursion terminates with \( |\supp| = 0 \) and a nonzero constant polynomial. Since each active call can at most double the range of values on Boolean inputs, and the initial polynomial \( Q \) takes values in \( \{0,1\} \), the final constant must lie in \( [-2^\ell, 2^\ell] \). Therefore, the total shrinkage in the \( \ell_1 \)-norm over the course of the recursion satisfies:

\begin{align*}
2^{\ell} &\geq (1 - 1/n)^{t - \ell} \cdot (1/2n)^{\ell} \cdot \dmnorm{Q} \\
  &\geq (1 - 1/n)^n \cdot (1/2n)^{\ell} \cdot \dmnorm{Q}
  && \text{(since \( t - \ell \leq n \))} \\
  &\geq \frac{1}{10} \cdot \left(\frac{1}{2n}\right)^{\ell} \cdot \dmnorm{Q}
  && \text{(using \( (1 - 1/n)^n \geq 1/10 \) for \( n \geq 2 \))}.
\end{align*}
Taking logarithms and rearranging, we obtain:
\[
\ell \geq \frac{\log(\dmnorm{Q}/10)}{\log(4n)}.
\]
Define \( \ell^* := \frac{\log(\dmnorm{Q}/10)}{\log(4n)} \). Thus, every run of the algorithm contains at least \( \ell^* \) active recursive calls.

Let \( X_1, X_2, \dots, X_{\ell^*} \) be indicator random variables, where \( X_i = 1 \) if the variable chosen in the \( i \)-th active call is left free (which occurs with probability \( 1/2 \)), and \( 0 \) otherwise. These variables are independent by construction, and \( \mathbb{E}\left[\sum_{i=1}^{\ell^*} X_i\right] = \frac{\ell^*}{2}.\) By a standard Chernoff bound, we get:
\[
\Pr\left(\sum_{i=1}^{\ell^*} X_i \leq \frac{\ell^*}{4}\right) \leq e^{-\ell^*/16} \leq 0.1,
\]
where the last inequality follows from the assumption \( \dmnorm{Q} \geq 10(4n)^{40} \).

Since the number of variables left free in the final restriction is at least \( \sum_{i=1}^{\ell^*} X_i \), we conclude that with probability at least \( 0.9 \), the algorithm leaves at least \( \ell^*/4 = \Omega\left(\frac{\log \dmnorm{Q}}{\log n}\right) \) variables free.
\end{proof}

Finally, we show that for any input polynomial $Q$, the restriction $\rho$ returned by Algorithm~\ref{algo:restriction} has the following properties: the restricted polynomial \( Q|_\rho \) has full degree—that is, its degree equals the number of variables left free by \( \rho \); and for any monomial \( M \), the degree of \( M|_\rho \) exhibits exponential tail decay: the probability that \( \deg(M|_\rho) \geq t \) is at most \( 2^{-t} \).

\begin{claim}\label{claim:rhoProp}
Let \( Q \in \mathbb{R}[x_1, \dots, x_n] \) be a non-zero polynomial with \( \vars(Q) \subseteq \supp \). Let \( \rho \) be the restriction returned by \( \samplerho(Q, \supp) \), as described in Algorithm~\ref{algo:restriction}. Then:
\begin{enumerate}[label=(\alph*)]
    \item The restricted polynomial \( Q|_\rho \) is non-zero and has full degree; that is, \( \deg(Q|_\rho) = \srho{\rho} \).
    \item For any monomial \( M \) with \( \vars(M) \subseteq \supp \), and any \( t \in \mathbb{N} \), the degree of the restricted monomial satisfies:
    \[
    \Pr_\rho\left(\deg(M|_\rho) \geq t\right) \leq 2^{-t}.
    \]
\end{enumerate}
\end{claim}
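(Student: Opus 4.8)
The plan is to prove both parts by induction on $|\supp|$, following the recursive structure of Algorithm~\ref{algo:restriction}. The base case $|\supp| = 0$ is trivial: $\rho$ is empty, $Q|_\rho = Q$ is a nonzero constant of degree $0 = \srho{\rho}$, and for part (b), $\deg(M|_\rho) = 0$ for the empty monomial, so the bound holds.

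For part (a), the induction splits along the three recursive branches. If the algorithm takes the passive branch (line~\ref{algo:line:recursivecall1}), then $x_i$ is set to a constant, $Q|_{x_i=u}$ is nonzero by Claim~\ref{claim:balanced}/Observation~\ref{obs:non-z}, and $\srho{\rho} = \srho{\rho'}$; the inductive hypothesis applied to $Q|_{x_i=u}$ on $\supp \setminus \{x_i\}$ gives $\deg(Q|_\rho) = \deg((Q|_{x_i=u})|_{\rho'}) = \srho{\rho'} = \srho{\rho}$. If the algorithm takes line~\ref{algo:line:recursivecall2}, then again $x_i \gets 0$, we recurse on $R_0 = Q|_{x_i=0}$ (nonzero by Claim~\ref{claim:balanced}), and the same argument applies. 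The interesting case is line~\ref{algo:line:recursivecall3}: here $x_i \gets *$, so $\srho{\rho} = \srho{\rho_*} + 1$, and we recurse on $R_1 = \partial_{x_i} Q$ (nonzero by Claim~\ref{claim:balanced}). The inductive hypothesis gives $\deg(R_1|_{\rho_*}) = \srho{\rho_*}$. Now $Q|_\rho = (R_1|_{\rho_*}) \cdot x_i + (R_0|_{\rho_*})$ as a polynomial in the free variables including $x_i$; since $R_1|_{\rho_*}$ is a nonzero polynomial in the remaining free variables of degree $\srho{\rho_*}$, and $x_i$ does not appear in $R_0|_{\rho_*}$, the monomial structure shows $\deg(Q|_\rho) = \deg(R_1|_{\rho_*}) + 1 = \srho{\rho_*} + 1 = \srho{\rho}$, and $Q|_\rho$ is nonzero (its top-degree part survives). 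This is the step to be careful about: we need that multiplying by the fresh free variable $x_i$ genuinely raises the degree, which holds precisely because $R_1|_{\rho_*} \ne 0$ and $x_i \notin \vars(R_0|_{\rho_*})$.

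For part (b), fix a monomial $M$ with $\vars(M) \subseteq \supp$ and argue by induction on $|\supp|$, conditioning on the algorithm's branch. In the passive branch and in the line~\ref{algo:line:recursivecall2} branch, $x_i$ is set to a constant $u \in \{0,1\}$: if $x_i \in \vars(M)$ and $u = 0$ then $M|_\rho = 0$ and $\deg(M|_\rho) = 0$, so the bound holds for all $t \geq 1$; otherwise $M|_{x_i=u}$ is a monomial on $\supp \setminus \{x_i\}$ (possibly with $x_i$ removed from its support, possibly identically the constant $1$ if $M = x_i$ and $u=1$), and $\deg(M|_\rho) \leq \deg((M|_{x_i=u})|_{\rho'})$, so the inductive hypothesis gives $\Pr[\deg(M|_\rho) \geq t] \leq 2^{-t}$ with room to spare. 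The crucial branch is line~\ref{algo:line:cond3}, where with probability $1/2$ we go to line~\ref{algo:line:recursivecall2} (setting $x_i \gets 0$) and with probability $1/2$ to line~\ref{algo:line:recursivecall3} (setting $x_i \gets *$ and recursing on $R_1 = \partial_{x_i} Q$). Here we split on whether $x_i \in \vars(M)$. If $x_i \notin \vars(M)$, then in both sub-branches $M$ passes unchanged to the recursive call on $\supp \setminus \{x_i\}$ (in the $R_1$ branch, $\vars(M) \subseteq \supp \setminus \{x_i\}$ since $x_i \notin \vars(M)$), so conditioning on either outcome the inductive hypothesis gives $\Pr[\deg(M|_\rho) \geq t] \leq 2^{-t}$, hence the same unconditionally. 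If $x_i \in \vars(M)$, write $M = x_i M'$ with $\vars(M') \subseteq \supp \setminus \{x_i\}$. In the line~\ref{algo:line:recursivecall2} sub-branch, $x_i \gets 0$ kills $M$, so $\deg(M|_\rho) = 0$. In the line~\ref{algo:line:recursivecall3} sub-branch, $x_i \gets *$ and we recurse on $R_1$ with the monomial $M'$ on $\supp \setminus \{x_i\}$; since $\rho(x_i) = *$, we have $\deg(M|_\rho) = \deg(M'|_{\rho_*}) + 1$. Combining,
\[
\Pr_\rho\left(\deg(M|_\rho) \geq t\right) = \tfrac12 \cdot 0 + \tfrac12 \cdot \Pr_{\rho_*}\left(\deg(M'|_{\rho_*}) \geq t - 1\right) \leq \tfrac12 \cdot 2^{-(t-1)} = 2^{-t},
\]
where the inequality uses the inductive hypothesis for $M'$ on the smaller support (for $t \geq 1$; for $t = 0$ the bound $2^0 = 1$ is trivial).

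The main obstacle is bookkeeping in part (b): one must track that $M'$ is a legitimate monomial on the reduced support $\supp \setminus \{x_i\}$ before invoking the inductive hypothesis, and handle the degenerate cases ($M$ becoming the constant $1$, or $M$ becoming $0$) so that the claimed tail bound is not violated — but in all degenerate cases the restricted degree only drops, so the bound is preserved. The key structural fact making everything work is that the single place a free variable is introduced (line~\ref{algo:line:recursivecall3}) is exactly the place where the recursion passes to the derivative $R_1$, which is where a variable $x_i \in \vars(M)$ "pays" its factor of $1/2$ and simultaneously loses one unit toward the threshold — this is what makes the recursion $\Pr[\deg(M|_\rho) \ge t] \le \tfrac12 \Pr[\deg(M'|_{\rho_*}) \ge t-1]$ close up cleanly.
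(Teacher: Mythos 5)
Your proposal is correct and follows essentially the same route as the paper: a simultaneous induction on $|\supp|$ along the algorithm's branches, with the active branch giving $\deg(Q|_\rho)=1+\deg(R_1|_{\rho_*})=\srho{\rho}$ for part (a) and the recursion $\Pr[\deg(M|_\rho)\ge t]=\tfrac12\Pr[\deg(M'|_{\rho_*})\ge t-1]\le 2^{-t}$ for part (b). The only cosmetic difference is that you argue full degree via survival of the top-degree terms (since $x_i\notin\vars(R_0|_{\rho_*})$), while the paper notes the lower bound $\deg(Q|_\rho)\ge 1+\srho{\rho_*}$ and caps it by $\srho{\rho}$; both are sound.
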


\begin{proof}
Claim (b) is trivial when \( t = 0 \), so assume \( t > 0 \). We prove both parts simultaneously by induction on \( |\supp| \).

\textbf{Base Case (\( |\supp| = 0 \)):}
Here, \( Q \) must be a non-zero constant polynomial. The claim holds trivially.

\textbf{Inductive Step (\(|\supp| \geq 1\)):}  
We consider the two possible branches of the algorithm, depending on which condition is satisfied at runtime (line~\ref{algo:line:cond2} or line~\ref{algo:line:cond3}):

\begin{enumerate}

\item \textbf{Case where the ``if'' condition (line~\ref{algo:line:cond2}) is satisfied:}  
Suppose the condition is satisfied for some variable \( x_i \in \supp \) and some value \( u \in \{0,1\} \). The algorithm then returns the restriction \( \rho = \rho' \cup \{x_i \gets u\} \), where \( \rho' \) is obtained from a recursive call with a strictly smaller support set. By the induction hypothesis, the restricted polynomial \( (Q|_{x_i = u})|_{\rho'} \) is non-zero and has full degree. Hence,
\[
\deg(Q|_\rho) = \deg\left((Q|_{x_i = u})|_{\rho'}\right) \overset{(1)}{=} \srho{\rho'} \overset{(2)}{=} \srho{\rho},
\]
where (1) follows from the induction hypothesis, and (2) holds because \( \rho \) and \( \rho' \) leave the same number of variables free. Therefore, \( Q|_\rho \) is non-zero and has full degree.   

Moreover, for any monomial \( M \), we have: 
\[
\Pr_\rho\left(\deg(M|_\rho) \geq t\right) \leq  \Pr_{\rho'}\left(\deg(M|_{\rho'}) \geq t\right) \leq 2^{-t},
\]
where the final inequality follows by the induction hypothesis.

\item \textbf{Case where the else clause at line~\ref{algo:line:cond3} is executed:}  
Let \( x_i \in \supp \) be the variable chosen in line~\ref{algo:line:chosenVar}. Then with probability \(1/2\), the algorithm sets \( x_i \gets 0 \), and returns \( \rho = \rho_0 \cup \{x_i \gets 0\} \); with the remaining probability \(1/2\), it leaves \( x_i \) free (denoted by \( * \)) and returns \( \rho = \rho_* \cup \{x_i \gets *\} \). 

     \begin{itemize}
        \item \textbf{(a) Degree of \( Q|_\rho \):}  If \( x_i \) is set to 0 in \( \rho \), then by induction
        \[
        \deg(Q|_\rho) = \deg\big((Q|_{x_i = 0})|_{\rho_0}\big) = \srho{\rho_0} = \srho{\rho}.
        \]
        If \( x_i \) is left free, then \( Q|_\rho = x_i \cdot (R_1|_{\rho_*}) + (R_0|_{\rho_*}) \), where \( Q = x_i R_1 + R_0 \). Hence:
        \[
        \deg(Q|_\rho) \overset{(1)}= \max\big(1 + \deg(R_1|_{\rho_*}), \deg(R_0|_{\rho_*})\big) \geq 1 + \deg(R_1|_{\rho_*}) \overset{(2)}{=} 1 + \srho{\rho_*} = \srho{\rho},\]

        where (1) uses the fact that \( R_1|_{\rho_*} \) is non-zero, and (2) applies the induction hypothesis to \( R_1 \). Since $\deg(Q|_\rho)$ cannot exceed $\srho{\rho}$, the inequality above must in fact be an equality. 
        Moreover, since both \( (Q|_{x_i = 0})|_{\rho_0} \) and \( R_1|_{\rho_*} \) are non-zero by the induction hypothesis, it follows that \( Q|_\rho \) is also non-zero.
        
        \item \textbf{(b) Degree of any monomial under restriction:}  
        Let \( M \) be any monomial.
        \begin{itemize}
        \item If \( x_i \notin \vars(M) \),  
        then: 
        \[
        \Pr_{\rho}(\deg(M|_{\rho}) \geq t) = \frac{1}{2} \Pr_{\rho_0}(\deg(M|_{\rho_0}) \geq t) + \frac{1}{2} \Pr_{\rho_*}(\deg(M|_{\rho_*}) \geq t) \leq \frac{2^{-t}}{2} + \frac{2^{-t}}{2} = 2^{-t},
        \]
        by the inductive hypothesis applied to both \( \rho_0 \) and \( \rho_* \).
        
        \item If \( x_i \in \vars(M) \), then  
        with probability \(1/2\), \( x_i \gets 0 \), so \( M|_{\rho} = 0 \). With the remaining probability \(1/2\), \( x_i \) remains free, and \( M|_\rho = x_i \cdot M'|_{\rho_*} \), where \( M' = M / x_i \). Hence:
                    \[
                    \Pr_\rho(\deg(M|_\rho) \geq t) = \tfrac{1}{2} \Pr_{\rho_*}(\deg(M'|_{\rho_*}) \geq t - 1) \leq \tfrac{1}{2} \cdot 2^{-(t - 1)} = 2^{-t}.
                    \]
        again using the inductive hypothesis.
        \end{itemize}

\end{itemize}
\end{enumerate}
This completes the proof.
\end{proof}

Applying Algorithm~\ref{algo:restriction} to the polynomial \( Q \) representing a Boolean function \( f \), and combining \cref{claim:sizeofrho} with \cref{claim:rhoProp}, we conclude that a large \( \ell_1 \)-norm implies the existence of a max-degree distribution.

\begin{theorem}\label{thm:sparsity-harddist}
Let \( f : \{0,1\}^n \to \{0,1\} \) be a Boolean function with \( \dmnorm{f} \geq 10(4n)^{40} \). Then there exists an \( \ell \)-variable max-degree distribution for \( f \), where \( \ell = \Omega\left(\frac{\log \dmnorm{f}}{\log n}\right) \).
\end{theorem}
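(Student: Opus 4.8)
The plan is straightforward: \cref{thm:sparsity-harddist} is essentially just the assembly of the two preceding claims, so I would prove it by instantiating Algorithm~\ref{algo:restriction} on the right object and checking the three bullets of \cref{defn:harddist} one by one. Let $Q = \mathcal{P}(f)$ be the unique multilinear polynomial representing $f$, and let $\mathcal{D}$ be the distribution of $\rho = \samplerho(Q, \{x_1,\dots,x_n\})$. Note $\dmnorm{Q} = \dmnorm{f} \geq 10(4n)^{40}$, so in particular $Q$ is non-zero (indeed $f$ is non-constant, and the hypothesis is exactly the numerical condition needed below), and $\vars(Q) \subseteq \{x_1,\dots,x_n\}$, so Algorithm~\ref{algo:restriction} applies.

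First I would verify property (1). This is immediate from \cref{claim:sizeofrho}: with probability at least $0.9$, $\rho$ leaves at least $\Omega(\log\dmnorm{Q}/\log n) = \Omega(\log\dmnorm{f}/\log n)$ variables free, and we simply set $\ell$ to be this quantity (or any constant-fraction lower bound of it), matching the claimed bound. Next, properties (2) and (3): both come from \cref{claim:rhoProp}, applied to the non-zero polynomial $Q$. Part (a) of that claim says $Q|_\rho$ is non-zero with $\deg(Q|_\rho) = \srho{\rho}$ for \emph{every} $\rho$ in the support; since $Q|_\rho$ is the multilinear polynomial representing $f|_\rho$ (restriction commutes with taking the multilinear representation), this gives $\deg(f|_\rho) = \srho{\rho}$, which is property (2). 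Part (b) of \cref{claim:rhoProp} states exactly that for any monomial $M$ and any $t \in \mathbb{N}$, $\Pr_\rho[\deg(M|_\rho) \geq t] \leq 2^{-t}$, which is property (3) verbatim.

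The one point that needs a sentence of care is the interface between "$Q$ is the multilinear polynomial of $f$" and "$f|_\rho$": I want $\mathcal{P}(f|_\rho) = (\mathcal{P}(f))|_\rho$, i.e. that substituting Boolean constants into the multilinear representation and then reading off the polynomial is the same as restricting $f$ first and then taking its representation. This holds because substituting constants from $\{0,1\}$ into a multilinear polynomial yields a multilinear polynomial in the remaining variables, and it agrees with $f|_\rho$ on all of $\{0,1\}^{\mathsf{FreeVars}(\rho)}$, so by uniqueness of the multilinear representation it \emph{is} $\mathcal{P}(f|_\rho)$. Consequently $\deg(f|_\rho) = \deg(Q|_\rho)$ and the translation from \cref{claim:rhoProp}(a) to property (2) is legitimate. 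I do not anticipate a genuine obstacle here; the only thing to be mildly careful about is making sure the $\Omega(\cdot)$ in property (1) is stated with the same constant as in \cref{claim:sizeofrho} and that the hypothesis $\dmnorm{f} \geq 10(4n)^{40}$ is exactly what \cref{claim:sizeofrho} needs (it is, by inspection of its statement). So the proof is a three-line citation of \cref{claim:sizeofrho} and \cref{claim:rhoProp} together with the one-sentence remark about restrictions commuting with multilinear representations.
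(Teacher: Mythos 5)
Your proposal is correct and is essentially identical to the paper's proof: the paper also defines $\mathcal{D}$ as the output distribution of \samplerho{} on $\mathcal{P}(f)$ and cites \cref{claim:sizeofrho} and \cref{claim:rhoProp} for the three properties of \cref{defn:harddist}. Your extra remark that $\mathcal{P}(f|_\rho) = (\mathcal{P}(f))|_\rho$ (by uniqueness of the multilinear representation) is a point the paper leaves implicit, and it is argued correctly.
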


\begin{proof}
Let \( Q \) be the unique multilinear real polynomial that exactly computes \( f \), and let \( \supp = \{x_1, \dots, x_n\} \). Consider the distribution over restrictions \( \rho \sim \samplerho(Q, \supp) \) generated by Algorithm~\ref{algo:restriction}. By \cref{claim:sizeofrho} and \cref{claim:rhoProp}, this distribution satisfies all three conditions of an \( \ell \)-variable max-degree distribution for \( \ell = \Omega\left(\frac{\log \dmnorm{f}}{\log n}\right) \).
\end{proof}

\subsection{Putting Everything Together} \label{subsec:together}

We begin by showing that the existence of a max-degree distribution for a Boolean function \( f \) implies that any polynomial approximating \( f \) must have large sparsity and large \( \ell_1 \)-norm. Combined with \cref{thm:sparsity-harddist}, which guarantees such a distribution when \( \dmnorm{f} \) is large, this implies the following: on a logarithmic scale, the exact sparsity and exact \( \ell_1 \)-norm are at most quadratically larger than their approximate counterparts, up to a \( \log n \) factor—proving \cref{thm:spar-aspar} and \cref{thm:ell-1}.

\begin{claim}\label{claim:harddist-sparsity}
Let \( f : \{0,1\}^n \to \{0,1\} \), and suppose there exists an \( \ell \)-variable max-degree distribution \( \mathcal{D} \) for \( f \). Then,
\[
    \log \aspar(f) = \Omega(\sqrt{\ell}).
\]
\end{claim}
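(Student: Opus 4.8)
The plan is to argue by contradiction: suppose $\aspar(f) < 2^{\sqrt{\ell/c}}$ for a suitable constant $c > 0$ to be fixed at the end. Let $Q$ be a multilinear polynomial with $\spar(Q) = s < 2^{\sqrt{\ell/c}}$ and $|Q(x) - f(x)| \le 1/3$ for all $x \in \{0,1\}^n$. Sample $\rho \sim \mathcal{D}$. The key point is to show that with positive probability $\rho$ simultaneously (i) leaves at least $\ell$ variables free, and (ii) kills every ``high-degree'' monomial of $Q$, meaning that after restriction no surviving monomial has degree $\ge t := \sqrt{\ell/c}$ in the free variables.

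For (ii) I would use property (3) of the max-degree distribution together with a union bound: for each of the $s$ monomials $M$ appearing in $Q$, $\Pr_{\rho}[\deg(M|_\rho) \ge t] \le 2^{-t}$, so $\Pr_{\rho}[\exists M : \deg(M|_\rho) \ge t] \le s \cdot 2^{-t} < 2^{\sqrt{\ell/c}} \cdot 2^{-\sqrt{\ell/c}} = 1$; choosing the constant in $t$ (or equivalently $c$) a hair more generously makes this probability at most, say, $0.05$. Combined with property (1), which says $\rho$ leaves $\ge \ell$ variables free with probability $\ge 0.9$, a union bound gives that with probability $\ge 0.85 > 0$ there is a restriction $\rho^\star$ in the support of $\mathcal{D}$ that leaves $\ge \ell$ variables free and for which $\deg(Q|_{\rho^\star}) < t = \sqrt{\ell/c}$ (since $Q|_{\rho^\star}$ is a sum of restricted monomials each of degree $< t$ in the free variables).

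Now fix that $\rho^\star$. Since restriction preserves pointwise approximation, $Q|_{\rho^\star}$ approximates $f|_{\rho^\star}$ to within $1/3$ on $\{0,1\}^{\mathsf{FreeVars}(\rho^\star)}$, so $\adeg(f|_{\rho^\star}) \le \deg(Q|_{\rho^\star}) < \sqrt{\ell/c}$. On the other hand, property (2) of the max-degree distribution gives $\deg(f|_{\rho^\star}) = \srho{\rho^\star} \ge \ell$. By \cref{thm:degree-apxdegree}, $\deg(f|_{\rho^\star}) \le c \cdot \adeg(f|_{\rho^\star})^2$ for the universal constant $c$ from that theorem, hence $\ell \le \deg(f|_{\rho^\star}) \le c \cdot \adeg(f|_{\rho^\star})^2 < c \cdot (\ell/c) = \ell$, a contradiction. (This is exactly why $c$ should be taken to be the constant in \cref{thm:degree-apxdegree}, with the union-bound slack absorbed into a slightly larger choice.) Therefore $\aspar(f) \ge 2^{\sqrt{\ell/c}}$, i.e. $\log \aspar(f) = \Omega(\sqrt{\ell})$, as claimed.

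The only mildly delicate point — and the one I would be most careful about — is the bookkeeping of constants: the threshold $t$ must be large enough that $s \cdot 2^{-t} + 0.1 < 1$ yet small enough that $t^2 < \ell / c$ with $c$ the constant of \cref{thm:degree-apxdegree}, and one should check the small-$\ell$ regime separately (for $\ell$ below an absolute constant the bound $\log\aspar(f) = \Omega(\sqrt\ell)$ is trivial since $\aspar(f) \ge 1$, or one absorbs it into the hidden constant). Everything else is a routine union bound plus the cited degree/approximate-degree relation, so I do not anticipate a genuine obstacle beyond getting these inequalities to line up.
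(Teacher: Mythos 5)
Your proposal is correct and is essentially the paper's own argument: a union bound over the monomials of a hypothetical sparse approximator via property (3), intersected with the free-variable guarantee of property (1), followed by the full-degree guarantee of property (2) and the degree vs.\ approximate-degree relation (\cref{thm:degree-apxdegree}) to reach a contradiction. The only differences are cosmetic choices of constants (the paper assumes $\spar(Q)\le \tfrac{1}{10}2^{k}$ outright rather than absorbing the slack afterwards), so there is nothing further to add.
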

\begin{proof}
Let \( \mathcal{D} \) be an \( \ell \)-variable max-degree distribution for \( f \). Suppose, for the sake of contradiction, that the claim does not hold. Let \( k = \sqrt{\ell / c} \), where \( c > 0 \) is a constant to be chosen later. Assume there exists a real polynomial \( Q \) that \( 1/3 \)-approximates \( f \) and has sparsity
\[
\spar(Q) \leq \frac{1}{10} \cdot 2^{k}.
\]
We will argue that such a polynomial cannot exist, thereby proving the claim.

Sample a restriction \( \rho \sim \mathcal{D} \), and consider the restricted polynomial \( Q|_\rho \). By property (3) of \( \mathcal{D} \), the probability that any fixed monomial in \( Q \) has degree at least \( k \) under \( \rho \) is at most \( 2^{-k} \). Applying a union bound over all monomials in \( Q \), we have
\[
\Pr_\rho\left(\deg(Q|_\rho) \geq k \right) \leq \spar(Q) \cdot 2^{-k} \leq \frac{1}{10}.
\]

By property (1) of \( \mathcal{D} \), with probability at least \( 0.9 \), \( \rho \) leaves at least \( \ell \) variables free. Thus, with probability at least \( 0.8 \), both of the following hold:
\[
\srho{\rho} \geq \ell \quad \text{and} \quad \deg(Q|_\rho) < k.
\]

Fix such a restriction \( \rho \). Then \( Q|_\rho \) is a polynomial of degree less than \( k \) that \( 1/3 \)-approximates \( f|_\rho \). By property (2) of \( \mathcal{D} \), we have:
\[
\deg(f|_\rho) = \srho{\rho} \geq \ell = c \cdot k^2 > c \cdot \left( \deg(Q|_\rho) \right)^2 \geq c \cdot \adeg(f|_\rho)^2.
\]
This contradicts the known relationship between degree and approximate degree for Boolean functions, namely that for all \( g \),
$\deg(g) \leq c\cdot \adeg(g)^2$ for some universal constant \( c \) (see \cref{thm:degree-apxdegree}).

Hence, our assumption was false, and the claim follows.
\end{proof}

\begin{claim}\label{claim:harddist-approxl1}
Let \( f : \{0,1\}^n \to \{0,1\} \), and suppose there exists an \( \ell \)-variable max-degree distribution \( \mathcal{D} \) for \( f \). Then,
\[
    \log \dmnormapprox{f}{} = \Omega(\sqrt{\ell}).
\]
\end{claim}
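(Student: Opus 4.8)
I would run the argument of \cref{claim:harddist-sparsity} almost verbatim, replacing the single place where \emph{sparsity} was used — the union bound over the monomials of the approximating polynomial — by a first-moment (Markov) estimate over monomials weighted by the absolute values of their coefficients. This uses only the $\ell_1$ norm and, importantly, avoids the $\log n$ loss a union bound would incur; this is also why I would \emph{not} simply route through \cref{claim:harddist-sparsity} via Grolmusz-style sparsification, which turns an $\ell_1$ bound $w$ into a sparsity bound costing a factor of $n$ and would only give $\log\dmnormapprox{f}{} = \Omega(\sqrt{\ell}) - O(\log n)$.

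\textbf{Setup.} Let $Q$ be a polynomial with $\dmnorm{Q} = \dmnormapprox{f}{} =: w$ that $1/3$-approximates $f$; a routine check handles $w$ below a fixed constant, so assume $w$ is large. First I would preprocess: compose $Q$ with a fixed constant-degree univariate amplification polynomial $A$ so that $\tilde Q := A\circ Q$ approximates $f$ within, say, $1/6$; since $\dmnorm{\tilde Q}\le\sum_j|A_j|\,\dmnorm{Q}^j=\mathrm{poly}(w)$, the $\ell_1$ norm stays polynomially bounded and $\log\dmnorm{\tilde Q}=O(\log w)$. Write $w':=\dmnorm{\tilde Q}$ and set $k:=\lceil\log(600\,w')\rceil$, so that $10\,w'\,2^{-k}\le 1/60$.

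\textbf{The estimate.} Sample $\rho\sim\mathcal D$ and decompose $\tilde Q|_\rho=P_{<k}+P_{\ge k}$ by degree (note $\tilde Q|_\rho$ is a polynomial in the free variables only). For each monomial $M$ of $\tilde Q$ with nonzero coefficient $a_M$, property (3) of $\mathcal D$ gives $\Pr_\rho[\deg(M|_\rho)\ge k]\le 2^{-k}$, hence $\mathbb E_\rho\big[\sum_{M:\,\deg(M|_\rho)\ge k}|a_M|\big]\le w'2^{-k}$, so by Markov's inequality this sum is at most $10\,w'2^{-k}$ with probability at least $0.9$. A monomial $M$ with $\deg(M|_\rho)\ge k$ contributes to $P_{\ge k}$ exactly one monomial, with coefficient of magnitude at most $|a_M|$ (several such $M$ may collapse onto the same monomial under $\rho$, which only shrinks the $\ell_1$ norm), so on that event $\dmnorm{P_{\ge k}}\le 10\,w'2^{-k}\le 1/60$, whence $\|P_{\ge k}\|_\infty\le 1/60$ on $\{0,1\}^{\mathsf{FreeVars}(\rho)}$. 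Intersecting with property (1) of $\mathcal D$, with positive probability there is a restriction $\rho$ with $\srho\rho\ge\ell$ and $\dmnorm{P_{\ge k}}\le 1/60$; I fix such a $\rho$. Then $P_{<k}$ has degree $<k$ and approximates $f|_\rho$ within $1/6+1/60<1/3$, so $\adeg(f|_\rho)<k$. Combining with property (2), $\ell\le\srho\rho=\deg(f|_\rho)=O(\adeg(f|_\rho)^2)=O(k^2)$ by \cref{thm:degree-apxdegree}, i.e. $\log\dmnormapprox{f}{}=\log w=\Omega(\sqrt{\ell})$, which is the claim.

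\textbf{Where the care goes.} There is no deep obstacle, but two points need attention. The first is the coefficient-collapse bookkeeping when splitting $\tilde Q|_\rho$ by degree: one must verify that $\dmnorm{P_{\ge k}}\le\sum_{M:\,\deg(M|_\rho)\ge k}|a_M|$, which holds because substituting $0$ or $1$ for a variable can only merge monomials and never increases the $\ell_1$ norm. The second is that \cref{thm:degree-apxdegree} is stated for error exactly $1/3$, whereas without preprocessing the low-degree part $P_{<k}$ approximates $f|_\rho$ with error slightly \emph{above} $1/3$; the amplification step (at a polynomial cost in the $\ell_1$ norm, hence only a constant-factor cost in $k=O(\log w')$) buys exactly the slack needed to apply the theorem directly.
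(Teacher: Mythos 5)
Your proposal is correct and follows essentially the same route as the paper's proof: the identical first-moment/Markov bound on the $\ell_1$-mass of monomials that remain high-degree under $\rho$ (property (3)), intersected with the event of property (1), and then the contradiction with $\deg(f|_\rho)=\srho{\rho}\ge\ell$ via \cref{thm:degree-apxdegree}. The only cosmetic difference is where the error slack is created — you amplify up front by composing $Q$ with a constant-degree univariate amplifier (at polynomial cost in the $\ell_1$-norm, hence a constant factor in $k$), whereas the paper first discards the low-weight tail to get a $0.44$-approximator of $f|_\rho$ and then applies standard error reduction (\cref{thm:error-reduction}); both incur only a constant-factor loss and yield the same bound.
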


\begin{proof}
Let \( \mathcal{D} \) be an \( \ell \)-variable max-degree distribution for \( f \). Suppose, for the sake of contradiction, that the claim does not hold. Let \( k = (1 / c_1) \cdot \sqrt{\ell/c} \) for appropriate positive constants \( c \) and \( c_1 \) to be determined later. Assume there exists a real polynomial \( Q = \sum_{S \subseteq [n]} q_S \prod_{i \in S} x_i \) that \( 1/3 \)-approximates \( f \) and has \( \ell_1 \)-norm
\[
\dmnorm{Q} \leq \frac{1}{100} \cdot 2^k.
\]
We will argue that such a polynomial cannot exist, thereby proving the claim.

Sample a restriction \( \rho \) from \( \mathcal{D} \), and consider the restricted polynomial \( Q|_\rho \). We analyze the expected \( \ell_1 \)-mass of high-degree monomials in \( Q|_\rho \). For any polynomial \( P = \sum_{S \subseteq [n]} a_S \prod_{i \in S} x_i \), define the degree-\( d \) tail of its \( \ell_1 \)-norm as
\[
\dmnorm{P}^{\geq d} := \sum_{\substack{S \subseteq [n] \\ |S| \geq d}} |a_S|.
\]
Using property (3) of the distribution \( \mathcal{D} \), we get:
\[
\mathbb{E}_{\rho}\left[\dmnorm{Q|_\rho}^{\geq k}\right]
\leq \sum_{\substack{S \subseteq [n] \\ |S| \geq k}} |q_S| \cdot \Pr_{\rho}\left(\deg\left(\prod_{i \in S} x_i|_\rho\right) \geq k\right)
\leq \dmnorm{Q} \cdot 2^{-k}
\leq \frac{1}{100}.
\]

By Markov's inequality, with probability at least \( 0.9 \), we have \( \dmnorm{Q|_\rho}^{\geq k} < 0.1 \). Combining this with property (1) of \( \mathcal{D} \), which ensures \( \srho{\rho} \geq \ell \) with probability at least \( 0.9 \), we conclude that with probability at least \( 0.8 \), a random restriction \( \rho \sim \mathcal{D} \) satisfies both:
\[
\srho{\rho} \geq \ell
\quad \text{and} \quad
\dmnorm{Q|_\rho}^{\geq k} < 0.1.
\]

Fix such a restriction \( \rho \). Let \( \bar{Q} \) be the polynomial obtained from \( Q|_\rho \) by discarding all monomials of degree at least \( k \). Since \( Q|_\rho \) \( 1/3 \)-approximates \( f|_\rho \) and the total weight of the discarded tail is at most \( 0.1 \), it follows that \( \bar{Q} \) \( 0.44 \)-approximates \( f|_\rho \), with \( \deg(\bar{Q}) < k \).

By standard error reduction (see \cref{thm:error-reduction}), we can boost the success probability of \( \bar{Q} \) to obtain a polynomial that \( 1/3 \)-approximates \( f|_\rho \) with degree at most \( c_1 k \). Thus, \(\adeg(f|_\rho) < c_1 k = \sqrt{\ell / c} \). On the other hand, by property (2) of \( \mathcal{D} \), we have \( \deg(f|_\rho) = \srho{\rho} \geq \ell \). But this contradicts the known relationship between degree and approximate degree for Boolean functions,  which asserts that for any Boolean function \( g \), \( \deg(g) \leq c \cdot \adeg(g)^2 \) for some universal constant \( c \) (see \cref{thm:degree-apxdegree}).

Hence, our assumption was false, and the claim follows.
\end{proof}

\begin{theorem}\label{thm:l1-aspar}
For every total Boolean function $f : \{0,1\}^n \to \{0,1\}$, we have
\[
    \log(\dmnorm{f}) = O\big(\log(\aspar(f))^2 \cdot \log n\big).
\]
\end{theorem}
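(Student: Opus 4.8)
The plan is to combine the two structural results already established in this section. Theorem~\ref{thm:sparsity-harddist} says that if $\dmnorm{f}$ is large, then $f$ admits an $\ell$-variable max-degree distribution with $\ell = \Omega(\log\dmnorm{f}/\log n)$; Claim~\ref{claim:harddist-sparsity} says that the existence of such a distribution forces $\log\aspar(f) = \Omega(\sqrt{\ell})$. Chaining these gives $\log\aspar(f) = \Omega\!\big(\sqrt{\log\dmnorm{f}/\log n}\big)$, and rearranging yields exactly the claimed bound $\log\dmnorm{f} = O\big((\log\aspar(f))^2 \cdot \log n\big)$.

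More concretely, I would first dispose of the easy regime: if $\dmnorm{f} < 10(4n)^{40}$, then $\log\dmnorm{f} = O(\log n)$, and since $\aspar(f) \ge 1$ always and in fact $\aspar(f) \ge 2$ for any non-constant $f$ (a constant polynomial cannot $1/3$-approximate a function taking both values $0$ and $1$), the bound $\log\dmnorm{f} = O(\log n) = O((\log\aspar(f))^2 \cdot \log n)$ holds trivially; constant functions are handled separately since then $\dmnorm{f} \le 1$. So assume $\dmnorm{f} \ge 10(4n)^{40}$. Apply Theorem~\ref{thm:sparsity-harddist} to obtain an $\ell$-variable max-degree distribution $\mathcal{D}$ for $f$ with $\ell = \Omega(\log\dmnorm{f}/\log n)$. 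Then apply Claim~\ref{claim:harddist-sparsity} to $\mathcal{D}$ to get $\log\aspar(f) = \Omega(\sqrt{\ell})$, hence $(\log\aspar(f))^2 = \Omega(\ell) = \Omega(\log\dmnorm{f}/\log n)$. Multiplying through by $\log n$ and absorbing constants gives $\log\dmnorm{f} = O((\log\aspar(f))^2 \cdot \log n)$.

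There is essentially no obstacle here — the theorem is a bookkeeping composition of the two preceding results, and the only care needed is in the boundary case $\dmnorm{f} < 10(4n)^{40}$ where Theorem~\ref{thm:sparsity-harddist} does not apply, which is handled by the trivial estimate above. (As a sanity check, Theorem~\ref{thm:spar-aspar} will then follow immediately from this theorem together with Remark~\ref{rmk:spar-l1}, which gives $\spar(f) \le \dmnorm{f}$ for Boolean $f$, so that $\log\spar(f) \le \log\dmnorm{f} = O((\log\aspar(f))^2 \cdot \log n)$.)
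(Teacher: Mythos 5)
Your proposal is correct and follows essentially the same route as the paper's proof, which likewise dismisses the case $\dmnorm{f} < 10(4n)^{40}$ as trivial and then chains Theorem~\ref{thm:sparsity-harddist} with Claim~\ref{claim:harddist-sparsity} to get $\log \aspar(f) = \Omega\bigl(\sqrt{\log \dmnorm{f}/\log n}\bigr)$. One tiny inaccuracy in your trivial-case discussion: $\aspar(f) \ge 2$ does not hold for every non-constant $f$ (e.g.\ $f = x_1$ has $\aspar(f) = 1$), but whenever $\aspar(f) = 1$ the function is a conjunction or a constant, so $\dmnorm{f} \le 1$ and the bound still holds.
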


\begin{proof}
Assume \( \dmnorm{f} \geq 10(4n)^{40} \), as the claim is trivial otherwise. By \cref{thm:sparsity-harddist}, there exists an \( \ell \)-variable max-degree distribution \( \mathcal{D} \) for \( f \), where \( \ell = \Omega\left(\frac{\log \dmnorm{f}}{\log n}\right) \). Applying \cref{claim:harddist-sparsity} to \( \mathcal{D} \), we obtain
\[
\log \aspar(f) = \Omega(\sqrt{\ell}) = \Omega\left(\sqrt{\frac{\log \dmnorm{f}}{\log n}}\right). \qedhere
\]
\end{proof}

Since \( \spar(f) \leq \dmnorm{f} \) for any Boolean function \( f \) (see \cref{rmk:spar-l1}), the above theorem implies that for all total Boolean functions \( f \),
\[
    \log(\spar(f)) = O\big(\log^2(\aspar(f)) \cdot \log n\big),
\]
thereby proving \cref{thm:spar-aspar}.

Combining \cref{thm:sparsity-harddist} with \cref{claim:harddist-approxl1}, we also obtain:

\thmellnorm*

\begin{proof}
Assume \( \dmnorm{f} \geq 10(4n)^{40} \), as the claim is trivial otherwise. By \cref{thm:sparsity-harddist}, there exists an \( \ell \)-variable max-degree distribution \( \mathcal{D} \) for \( f \), where \( \ell = \Omega\left(\frac{\log \dmnorm{f}}{\log n}\right) \). Applying \cref{claim:harddist-approxl1} to \( \mathcal{D} \) yields the desired bound.
\end{proof}

\begin{remark}\label{rmk:grolmusz}
It is known that \( \log \aspar(f) = O(\log \dmnormapprox{f}{} + \log n) \), a result referred to as Grolmusz's theorem~\cite{grolmusz1997power,zhang2014efficient}, which has its roots in a paper by Bruck and Smolensky~\cite{bruck1992polynomial}. The works~\cite{grolmusz1997power,zhang2014efficient} show this bound for the Fourier basis, but the underlying proof technique is more general and, in fact, provides a method for converting a weighting measure to a counting measure. In particular, it applies to the De Morgan basis as well. Therefore, by combining this relationship with \cref{thm:l1-aspar}, we could have directly obtained a bound relating the exact and approximate \( \ell_1 \)-norms, without relying on \cref{claim:harddist-approxl1}. However, this approach yields a slightly weaker result, incurring an additive \( (\log n)^{O(1)} \) loss. Specifically, we would get
\(
\log \dmnorm{f} = O\left((\log \dmnormapprox{f}{})^2 \cdot \log n + (\log n)^3\right).
\)
\end{remark}

Finally, combining \cref{thm:l1-aspar,thm:spar-aspar,thm:ell-1}, we conclude that, on a logarithmic scale, the exact sparsity, approximate sparsity, exact \( \ell_1 \)-norm, and approximate \( \ell_1 \)-norm of any Boolean function are all polynomially related, up to polylogarithmic factors in \( n \).

\begin{remark}
Although we state our result for Boolean functions, the proof relies on minimal properties specific to Boolean-valuedness. In fact, the argument extends to any class of functions over the Boolean domain that is closed under variable fixing and satisfies a universal relationship between degree and approximate degree. In such settings, this relationship can be lifted to one between the logarithm of sparsity and the logarithm of approximate sparsity. In particular, since Boolean functions are closed under variable fixing and their degree and approximate degree are polynomially related, it follows that their sparsity and approximate sparsity are polynomially related on a logarithmic scale.
\end{remark}

\subsection{Discussion on the Optimality of Our Results}\label{subsec:spar-optimality}

\paragraph{Optimality of the bound in \cref{thm:spar-aspar}.}
Our result in \cref{thm:spar-aspar} is optimal up to polynomial factors in \( \log n \), as witnessed by the \( \OR_n \) function. The exact sparsity of \( \OR_n \) is \( 2^n - 1 \), while its approximate degree is \( \Theta(\sqrt{n}) \)~\cite{nisan1994degree}, implying that its approximate sparsity is at most \( n^{O(\sqrt{n})} \). Thus, 
$\log \spar(\OR_n) = \Theta(n)$ and  $\log \aspar(\OR_n) = O(\sqrt{n} \log n)$,  showing that the upper bound in \cref{thm:spar-aspar} is essentially tight up to polynomial factors in \( \log n \).

The dependence on \( n \) is also unavoidable. Consider the function \( \Thr^n_{n-1} : \{0,1\}^n \to \{0,1\} \), defined as
\[
\Thr^n_{n-1}(x) = 1 \quad \text{iff} \quad |x| \geq n-1,
\]
namely, the function evaluates to 1 if the input has at most one zero. Its exact sparsity is \( n + 1 \), via
\[
\Thr^n_{n-1}(x) = \sum_{\substack{S \subseteq [n] \\ |S| = n-1}} \prod_{i \in S} x_i - (n-1) \prod_{i \in [n]} x_i,
\]
while we show its approximate sparsity is only \( O(\log n) \), implying that an additive \( O(\log n) \) or multiplicative \( O\left(\frac{\log n}{\log \log n}\right) \) factor is necessary in \cref{thm:spar-aspar}.

\begin{claim}\label{claim:aspar-thr}
\( \aspar(\Thr^n_{n-1}) = O(\log n).\)
\end{claim}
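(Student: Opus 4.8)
The plan is to exhibit an explicit polynomial $Q$ with $O(\log n)$ monomials in the De Morgan basis that approximates $\Thr^n_{n-1}$ pointwise within $1/3$. The guiding observation is that $\Thr^n_{n-1}(x)$ depends only on the set $Z=\{i:x_i=0\}$ of zero-coordinates, being $1$ when $|Z|\le 1$ and $0$ when $|Z|\ge 2$, and that an affine function of $O(\log n)$ carefully chosen monomials already separates these cases on all the inputs that matter.

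First I would fix $m=O(\log n)$ together with an injection $c:[n]\to\{0,1\}^m$ such that any two of its values differ in at least $2m/5$ coordinates; a uniformly random map has this property with high probability (Chernoff plus a union bound over the $\binom{n}{2}$ pairs, which goes through once $m=\Omega(\log n)$), so such a $c$ exists. For each coordinate $b\in[m]$ and bit $v\in\{0,1\}$ set $S_{b,v}=\{i\in[n]:c(i)_b=v\}$ and define $L(x)=\sum_{b=1}^{m}\sum_{v\in\{0,1\}}\prod_{i\in S_{b,v}}x_i$, a linear combination of at most $2m$ monomials. Since $\prod_{i\in S_{b,v}}x_i=1$ exactly when $S_{b,v}\cap Z=\emptyset$, the next step is a short case analysis of the value of $L$: one gets $L(x)=2m$ when $Z=\emptyset$; $L(x)=m$ when $|Z|=1$ (for each $b$ exactly one of the two monomials survives); and $L(x)=\#\{b:Z\text{ is constant on coordinate }b\text{ of }c\}\le (1-2/5)m=3m/5$ when $|Z|\ge 2$, the last bound coming from applying the distance property of $c$ to any two members of $Z$.

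It then remains to threshold $L$, and the point is that the value $m$ lies above the whole interval $[0,3m/5]$ by a constant factor, so a \emph{degree-one} map suffices: taking $p(z)=\tfrac{10}{9m}z-\tfrac13$ sends $[0,3m/5]$ into $[-\tfrac13,\tfrac13]$ and $m$ to $\tfrac79\in[\tfrac23,\tfrac43]$. The only input not yet handled is $x=1^n$ (the case $Z=\emptyset$), where $p(L)=\tfrac{17}{9}$; this is fixed by subtracting $\tfrac89\prod_{i\in[n]}x_i$, a single monomial that vanishes on every input with a zero. So $Q(x):=p(L(x))-\tfrac89\prod_{i\in[n]}x_i$ approximates $\Thr^n_{n-1}$ within $1/3$ and has at most $2m+2=O(\log n)$ monomials.

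The only real obstacle is forcing the three value-regimes of $L$ to be separated on the \emph{multiplicative} scale. With a naive indexing — say $c(i)$ the binary representation of $i$ and $m=\log_2 n$ — two elements of $Z$ can agree on $m-1$ coordinates, so the $|Z|\ge 2$ values of $L$ reach $m-1$, adjacent to the value $m$ taken when $|Z|=1$; separating adjacent integers would require a thresholding step of degree at least $2$, and that is fatal since $L$ already has $\Theta(\log n)$ terms and, for instance, $L^2$ expands into $\Theta(\log^2 n)$ distinct monomials. Choosing $c$ with linear minimum distance is exactly what pushes the $|Z|\ge 2$ values of $L$ down to a constant fraction of $m$, letting a single affine $p$ do the thresholding and keeping the sparsity at $O(\log n)$ rather than $\mathrm{polylog}(n)$.
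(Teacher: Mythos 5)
Your construction is correct: the value analysis of $L$ (equal to $2m$ when there are no zeros, exactly $m$ when there is one zero, and equal to the number of coordinates on which the zero set is constant, hence at most $3m/5$ by the distance property, when there are two or more zeros) is right, the affine map $p$ places the two regimes within $1/3$ of the correct outputs, the extra monomial $\tfrac{8}{9}\prod_i x_i$ fixes the all-ones input without disturbing anything else, and the sparsity count $2m+2=O(\log n)$ is valid. The route differs from the paper's in how the same combinatorial seed is exploited. The paper also draws $O(\log n)$ random subsets together with their complements to test ``is there a zero on both sides of $S$,'' but it packages each test (in fact, a pair of tests $(S_1,S_2)$, so that a fixed pair of zero coordinates is separated with probability $3/4$) into a constant-sparsity $\{0,1\}$-valued product $f_{(S_1,S_2)}$ and takes the plain average over a $2/3$-separating collection; the approximator is then \emph{exactly} $1$ on all $1$-inputs and at most $1/3$ on $0$-inputs, with no thresholding or correction term needed. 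You instead sum the $2m$ raw monomials into a linear form $L$ and must separate its value regimes multiplicatively so that a degree-one map suffices; this is why you need the code to have linear minimum distance (a naive binary indexing fails, as you note), and why you need the separate patch at $1^n$. Your version buys a slightly more ``algebraic'' approximator (an affine function of a single sparse linear form, with one-sided exactness only at $1^n$), at the cost of the distance requirement and the correction monomial; the paper's version buys one-sided exactness on all $1$-inputs and a softer combinatorial requirement (a $2/3$-fraction of separating pairs rather than pairwise linear distance), at the cost of using products of set/complement gadgets. Both yield $O(\log n)$ sparsity, so the difference is methodological rather than quantitative.
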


\begin{proof}
We begin by introducing a combinatorial structure that underlies our construction.

\paragraph{Separating collections.}
Let \( \{i, j\} \in \binom{[n]}{2} \) be an unordered pair of distinct indices. A set \( S \subseteq [n] \) is said to \emph{separate} \( \{i,j\} \) if exactly one of \( i \) or \( j \) belongs to \( S \). A pair \( (S_1, S_2) \in 2^{[n]} \times 2^{[n]} \) is said to separate \( \{i,j\} \) if at least one of \( S_1 \) or \( S_2 \) separates it.

We say that a collection \( F \subseteq 2^{[n]} \times 2^{[n]} \) is \emph{\( \delta \)-separating} if, for every pair \( \{i,j\} \in \binom{[n]}{2} \), at least a \( \delta \)-fraction of the elements in \( F \) separate it. Formally,
\[
\forall \{i,j\} \in \binom{[n]}{2}, \quad \left| \left\{ (S_1, S_2) \in F : (S_1, S_2) \text{ separates } \{i,j\} \right\} \right| \geq \delta |F|.
\]

We will show that there exists a \( 2/3 \)-separating collection \( F \) of size \( O(\log n) \). Assuming such a collection exists, we describe a low-sparsity approximator for \( \Thr^n_{n-1} \).

\paragraph{Approximator construction.}
Let \( F \subseteq 2^{[n]} \times 2^{[n]} \) be a \( 2/3 \)-separating collection of size \( O(\log n) \). For each pair \( (S_1, S_2) \in F \), define
\[
f_{(S_1, S_2)}(x) = \left(1 - (1 - \prod_{i \in S_1} x_i)(1 - \prod_{i \notin S_1} x_i)\right) \cdot \left(1 - (1 - \prod_{i \in S_2} x_i)(1 - \prod_{i \notin S_2} x_i)\right).
\]
Each function \( f_{(S_1, S_2)} \) evaluates to 1 if the input \( x \in \{0,1\}^n \) contains at most one zero, and evaluates to 0 if, for some \( S \in \{S_1, S_2\} \), the input \( x \) contains zeros in both \( S \) and its complement. Define

\[
g(x) := \frac{1}{|F|} \sum_{(S_1, S_2) \in F} f_{(S_1, S_2)}(x).
\]
We claim that \( g \) is a \( 1/3 \)-approximator for \( \Thr^n_{n-1} \).

\begin{itemize}
\item If \( x \) is a 1-input, i.e., \( x \) has at most one zero, then for every \( S \subseteq [n] \), either \( \prod_{i \in S} x_i = 1 \) or \( \prod_{i \notin S} x_i = 1 \). Thus, each term \( f_{(S_1, S_2)}(x) = 1 \), so \( g(x) = 1 \).

\item If \( x \) is a 0-input, i.e., it contains at least two zeros, let \( i, j \in [n] \) be distinct positions where \( x_i = x_j = 0 \). For any \( (S_1, S_2) \in F \) that separates \( \{i, j\} \), one of \( S_1 \) or \( S_2 \) contains exactly one of \( i \), \( j \), so one of the products in the corresponding \( f_{(S_1, S_2)}(x) \) vanishes, and hence \( f_{(S_1, S_2)}(x) = 0 \). Since \( F \) is \( 2/3 \)-separating, at least \( 2/3 \) of the terms in the sum are 0, so \( g(x) \leq 1/3 \).
\end{itemize}

Hence, \( g \) is a \( 1/3 \)-approximator for \( \Thr^n_{n-1} \). Each function \( f_{(S_1, S_2)} \) has constant sparsity, and there are \( O(\log n) \) such terms in the sum, so the total sparsity of \( g \) is \( O(\log n) \).

\paragraph{Existence of separating collections.}
It remains to show that a \( 2/3 \)-separating collection of size \( O(\log n) \) exists. We do this via the probabilistic method.

Let \( t = 216 \ln(n^2) = O(\log n) \), and sample \( F = \{(S_1^{(k)}, S_2^{(k)})\}_{k=1}^t \), where each set \( S_u^{(k)} \subseteq [n] \) (for \( u \in \{1,2\} \)) is formed by including each element independently with probability \( 1/2 \). We show that with positive probability, \( F \) is \( 2/3 \)-separating. 

Fix a pair \( \{i,j\} \in \binom{[n]}{2} \), and let \( X_i \) be the indicator that \( (S_1^{(i)}, S_2^{(i)}) \) separates \( \{i,j\} \).  Each \( X_i \) has expectation \( \mathbb{E}[X_i] = 3/4 \), so the sum \( \sum_{i=1}^t X_i \) has expectation \( \frac{3t}{4} \). By a Chernoff bound,
\[
\Pr\left[\sum_{i=1}^t X_i \leq \frac{2t}{3}\right] \leq e^{-t/216} \leq \frac{1}{n^2}.
\]
Taking a union bound over all \( \binom{n}{2} < n^2 \) pairs, the probability that \( F \) fails to be \( 2/3 \)-separating for some pair is less than \( 1/2 \). Hence, with positive probability, a \( 2/3 \)-separating set \( F \) of size \( t = O(\log n) \) exists.
\end{proof}

\paragraph{Optimality of the bound in \cref{thm:ell-1}.}
The bound in \cref{thm:ell-1} is also tight up to polynomial factors in \( \log n \), again witnessed by \( \OR_n \). Its exact \( \ell_1 \)-norm is \( 2^n - 1 \), while its approximate \( \ell_1 \)-norm is at most \( n^{O(\sqrt{n})} \), as shown via a standard Chebyshev polynomial approximator.

\begin{observation}[\( \dmnormapprox{\OR_n}{} \leq n^{O(\sqrt{n})} \)]
Let \( T_d \) denote the degree-\( d \) Chebyshev polynomial defined recursively by \( T_0(z) = 1 \), \( T_1(z) = z \), and \( T_d(z) = 2zT_{d-1}(z) - T_{d-2}(z) \) for \( d \geq 2 \). For \( d = 2\sqrt{n} \), define
\[
p(z) = 1 - \frac{T_d\left(\frac{n - z}{n - 1}\right)}{T_d\left(\frac{n}{n - 1}\right)}, \quad \text{and} \quad q(x_1, \ldots, x_n) = p\left(\sum_{i=1}^n x_i\right).
\]
Then \( q \) \( 1/3 \)-approximates \( \OR_n \) (see \cite[Example 2]{nisan1994degree}). Furthermore, using the recursive definition, it is easy to verify that the coefficients of \( T_d \) are bounded in absolute value by \( 3^d \). Therefore, for \( d = 2\sqrt{n} \), the \( \ell_1 \)-norm of \( q \) is at most \( n^{O(\sqrt{n})} \).
\end{observation}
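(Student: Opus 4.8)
The plan is to bound the De Morgan $\ell_1$-norm of the explicit polynomial $q$ appearing in the statement; since $q$ already $1/3$-approximates $\OR_n$ by \cite[Example~2]{nisan1994degree}, this immediately gives $\dmnormapprox{\OR_n}{}\le\wt(q)$. Fix $d=2\sqrt n$ (rounded up if necessary) and write the univariate polynomial as $p(z)=\sum_{m=0}^d a_m z^m$, so that $q(x)=p\big(\sum_{i=1}^n x_i\big)$. Note $q$ is not multilinear, but $\wt$ is defined through its unique multilinear representative; since multilinearization is linear and agrees with $q$ on $\{0,1\}^n$, the multilinearized $q$ still $1/3$-approximates $\OR_n$, and by the triangle inequality $\wt(q)\le\sum_{m=0}^d |a_m|\cdot\wt\!\big((\textstyle\sum_i x_i)^m\big)$. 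So the task splits into bounding the $a_m$ and bounding $\wt\big((\sum_i x_i)^m\big)$.

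For the coefficients $a_m$: writing $T_d(z)=\sum_{k=0}^d c_k z^k$, the recurrence $T_d=2zT_{d-1}-T_{d-2}$ shows the coefficient-sum $L_d:=\sum_k|c_k|$ satisfies $L_d\le 2L_{d-1}+L_{d-2}$, hence $L_d\le(1+\sqrt 2)^d=2^{O(\sqrt n)}$ (cleaner than, but implied by, the $|c_k|\le 3^d$ bound quoted in the statement). Substituting the affine map $z\mapsto\frac{n-z}{n-1}$ and expanding $(n-z)^k=\sum_m\binom km n^{k-m}(-z)^m$, the coefficient of $z^m$ in $T_d\!\big(\tfrac{n-z}{n-1}\big)$ has absolute value at most
\[
\sum_{k\ge m}|c_k|\binom km\frac{n^{k-m}}{(n-1)^k}\le\frac1{n^m}\Big(\tfrac n{n-1}\Big)^{d}2^{d}\sum_k|c_k|\le\frac{2^{O(\sqrt n)}}{n^m},
\]
using $\binom km\le 2^k\le 2^d$, $k\le d$, and $\big(\tfrac n{n-1}\big)^{d}\le e^{d/(n-1)}=O(1)$ for $d=2\sqrt n$. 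Dividing by the normalizer $T_d\!\big(\tfrac n{n-1}\big)$, which the Nisan--Szegedy argument for the $1/3$-approximation guarantee shows is at least $3$, only decreases magnitudes, and subtracting from $1$ affects only $a_0$; hence $|a_m|\le 2^{O(\sqrt n)}/n^m$ for every $0\le m\le d$.

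For $\wt\big((\sum_i x_i)^m\big)$: over $\{0,1\}^n$ one has the multilinear identity $\big(\sum_{i=1}^n x_i\big)^m=\sum_{\emptyset\neq S\subseteq[n],\,|S|\le m}\mathrm{surj}(m,|S|)\prod_{i\in S}x_i$ (and the constant $1$ when $m=0$), where $\mathrm{surj}(m,s)$ counts surjections $[m]\twoheadrightarrow[s]$; all coefficients are nonnegative, and $\sum_s\binom ns\mathrm{surj}(m,s)$ counts all functions $[m]\to[n]$, so $\wt\big((\sum_i x_i)^m\big)=n^m$. Combining, $\wt(q)\le\sum_{m=0}^d|a_m|\,n^m\le\sum_{m=0}^d 2^{O(\sqrt n)}=(d+1)\cdot 2^{O(\sqrt n)}=2^{O(\sqrt n)}\le n^{O(\sqrt n)}$, which completes the proof. (Even the crude bound $|c_k|\le 3^d$, ignoring the $1/(n-1)^k$ savings, yields the same conclusion, since then the coefficient of $z^m$ is at most $3^d 2^d n^d=n^{O(\sqrt n)}$ and $\wt(q)\le(d+1)\, n^d\cdot n^{O(\sqrt n)}=n^{O(\sqrt n)}$.)

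The step I expect to be the main obstacle is the coefficient bookkeeping in the second paragraph: one must verify that the $1/(n-1)^k$ factors from the affine substitution combine with the $n^{k-m}$ factors from the binomial expansion to produce exactly an $n^{-m}$ factor in $|a_m|$ — this $n^{-m}$ is precisely what cancels the $n^m$ generated when $(\sum_i x_i)^m$ is multilinearized, and getting it wrong inflates the bound by an $n^{\Theta(\sqrt n)}$ factor (harmless for the stated $n^{O(\sqrt n)}$ target, but worth doing carefully). Everything else — the Chebyshev coefficient recurrence, the surjection count $\sum_s\binom ns\mathrm{surj}(m,s)=n^m$, and the lower bound $T_d(n/(n-1))\ge 3$ — is routine or already covered by the citation to \cite{nisan1994degree}.
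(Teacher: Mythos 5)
Your proposal is correct and follows essentially the same route as the paper: bound the Chebyshev coefficients via the recurrence, push the bound through the affine substitution $z\mapsto\frac{n-z}{n-1}$, and multilinearize $\big(\sum_i x_i\big)^m$ to control the $\ell_1$-norm of $q$; the paper merely asserts this bookkeeping, which you carry out explicitly. Your careful tracking of the $n^{-m}$ factor against the weight $n^m$ of the multilinearized power sum even yields the sharper bound $2^{O(\sqrt n)}$, which of course implies the stated $n^{O(\sqrt n)}$.
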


As with sparsity, the dependence on \( n \) in \cref{thm:ell-1} cannot be avoided as well. For \( \Thr^n_{n-1} \), the exact \( \ell_1 \)-norm is \( 2n - 1 \), while the approximator from Claim~\ref{claim:aspar-thr} has constant \( \ell_1 \)-norm. Thus, a \( \log n \) factor—either additive or multiplicative—is necessary.

\subsection{Implications for the $\AND$ Query Model} \label{subsec:spar-AND}
The measure \( \log \spar(f) \) naturally connects to the \(\AND\)-query model—a variant of the standard decision tree model where each query computes the $\AND$ of an arbitrary subset of input bits. Just as polynomial degree characterizes ordinary deterministic query complexity up to polynomial loss, Knop et al.~\cite{knop2021log} showed that \( \log \spar(f) \) characterizes deterministic query complexity in the \(\AND\)-query model, up to polynomial loss and polylogarithmic factors in \( n \).

In the randomized setting, it is easy to see that \( \log \aspar(f) \) lower bounds randomized \(\AND\)-query complexity. Let \( \RmAdt(f) \) denote the randomized \(\AND\)-query complexity of \( f \). The following is easy to verify:

\begin{claim}[{\cite[Claim 3.20]{knop2021guest}}]\label{claim:aspar-radt}
For every total Boolean function \( f : \{0,1\}^n \to \{0,1\} \), we have
\[
\log \aspar(f) = O(\RmAdt(f) + \log n).
\]
\end{claim}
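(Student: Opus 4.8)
The plan is to unfold the definition of $\RmAdt(f)=c$ as a distribution over deterministic $\AND$-decision trees, convert each deterministic tree of depth $d$ into a multilinear polynomial of sparsity $2^{O(d)}$, and then extract from the distribution a single sparse polynomial that $1/3$-approximates $f$ via an Adleman-style sampling argument, paying an extra additive $O(\log n)$ in the exponent. For Step~1, fix a deterministic $\AND$-decision tree $T$ of depth $d$. For a leaf $\ell$ reached along a path that queries the conjunctions $\AND(S_1),\dots,\AND(S_k)$ (with $k\le d$) and receives answers $a_1,\dots,a_k\in\{0,1\}$, the reaching indicator over the Boolean cube is
\[
\mathbf{1}[x \text{ reaches } \ell] \;=\; \prod_{j:\,a_j=1}\Big(\prod_{i\in S_j} x_i\Big)\;\cdot\;\prod_{j:\,a_j=0}\Big(1-\prod_{i\in S_j}x_i\Big).
\]
The first factor collapses (using $x_i^2=x_i$) to the single monomial $\prod_{i\in P}x_i$ with $P=\bigcup_{j:a_j=1}S_j$; expanding the second factor yields at most $2^{|\{j:a_j=0\}|}\le 2^d$ signed monomials, so after multilinearizing the leaf indicator has sparsity at most $2^{d}$. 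Summing over the at most $2^{d}$ leaves labeled $1$, the polynomial computing $f_T$ has sparsity at most $2^{2d}$. The salient point is that ``answer $=1$'' branches do not increase sparsity at all — they merely enlarge one monomial — so the whole blow-up is controlled by the ``answer $=0$'' branches and hence by the depth.

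For Step~2, first amplify: repeating the randomized tree a constant number of times and taking majority gives a randomized $\AND$-decision tree of depth $O(c)$ with error at most $1/10$ (and if $\RmAdt$ is defined via expected rather than worst-case cost, a preliminary truncation by Markov's inequality reduces to the worst-case case). Let $\mu$ be the resulting distribution over deterministic trees, so $g'(x):=\mathbb{E}_{T\sim\mu}[f_T(x)]$ satisfies $|g'(x)-f(x)|\le 1/10$ for all $x$. Draw $m=O(n)$ trees $T_1,\dots,T_m$ i.i.d.\ from $\mu$ and put $\tilde g:=\frac1m\sum_{i=1}^m f_{T_i}$. For each fixed $x$ the values $f_{T_i}(x)\in\{0,1\}$ are i.i.d.\ with mean $g'(x)$, so Hoeffding's inequality gives $\Pr\big[|\tilde g(x)-g'(x)|>1/5\big]<2^{-n}$ once $m$ is a sufficiently large multiple of $n$; a union bound over the $2^n$ inputs shows that with positive probability the sample satisfies $|\tilde g(x)-g'(x)|\le 1/5$, and hence $|\tilde g(x)-f(x)|\le 3/10<1/3$, for all $x$ at once. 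Fixing such a sample, $\tilde g$ is a $1/3$-approximator of $f$ whose sparsity is at most $m\cdot\max_i\spar(f_{T_i})\le O(n)\cdot 2^{O(c)}=2^{O(c+\log n)}$, so $\log\aspar(f)\le\log\spar(\tilde g)=O(\RmAdt(f)+\log n)$.

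The one real subtlety is the passage from the randomized tree to a genuinely sparse approximator: the obvious candidate $g'=\mathbb{E}_{T\sim\mu}[f_T]$ need not have small sparsity at all, since the monomials occurring across the various $f_T$ can range over all $2^n$ subsets and averaging does not shrink the support. The derandomization by sampling $O(n)$ trees is exactly what cures this, and it is the source of the additive $\log n$ term appearing in the claim. The only thing one must be careful about is the error budget — the raw $1/3$ error of the $\AND$-decision tree together with the sampling error would push past $1/3$ — which is why a constant-factor amplification is performed first; everything else (Step~1 and the Chernoff/union-bound bookkeeping) is routine.
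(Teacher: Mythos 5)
Your proposal is correct. The paper does not prove this claim itself but imports it from Knop et al.\ (their Claim 3.20), and your argument is exactly the standard one behind that citation: each deterministic $\AND$-decision tree of depth $d$ yields a multilinear polynomial of sparsity $2^{O(d)}$ (the $1$-answers merge into a single monomial, the $0$-answers cause the $2^{O(d)}$ blow-up), constant amplification brings the error well below $1/3$ so that the sampling slack fits in the budget, and the Adleman-style selection of $O(n)$ trees with a Chernoff--union bound is precisely what produces the additive $\log n$ term in the bound.
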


However, it was unknown whether it also characterizes the randomized query complexity up to polynomial loss. Our results, combined with those of~\cite{knop2021log}, establish that this is indeed the case.

Knop et al.\ showed that for any Boolean function \( f \),
\[
\DmAdt(f) = O\left((\log \spar(f))^5 \cdot \log n\right),
\]
which, when combined with \cref{thm:spar-aspar}, implies
\[
\RmAdt(f) \leq \DmAdt(f) = O\left((\log \aspar(f))^{10} \cdot (\log n)^6\right).
\]

A tighter bound can be obtained using a structural result of Knop et al., which relates deterministic \(\AND\)-query complexity to sparsity and a combinatorial measure called \emph{monotone block sensitivity}:

\begin{definition}[Monotone Block Sensitivity]
The \emph{monotone block sensitivity} of a Boolean function \( f : \{0,1\}^n \to \{0,1\} \), denoted \( \mbs(f) \), is a variant of block sensitivity that only considers flipping 0’s to 1’s. A subset \( B \subseteq [n] \) is called a \emph{sensitive 0-block of \( f \) at input \( x \)} if \( x_i = 0 \) for all \( i \in B \), and \( f(x) \neq f(x \oplus 1_B) \), where \( x \oplus 1_B \) denotes the input obtained by flipping all bits in \( B \) from 0 to 1. For an input \( x \in \{0,1\}^n \), let \( \mbs(f,x) \) denote the maximum number of pairwise disjoint sensitive 0-blocks of \( f \) at \( x \). Then, \(\mbs(f) = \max_{x \in \{0,1\}^n} \mbs(f,x).
\)
\end{definition}

\begin{claim}[{\cite[Lemma 3.2, Claim 4.4, Lemma 4.6]{knop2021log}}]\label{claim:dadt-mbs-spar}
For any Boolean function \( f \),
\[
    \DmAdt(f) = O\left((\log \mbs(f))^2 \cdot \log \spar(f) \cdot \log n\right).
\]
\end{claim}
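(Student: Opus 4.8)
The plan is to derive this bound by assembling the three cited components of Knop et al.\ into a single adaptive $\AND$-query algorithm. The algorithm maintains a subcube restriction $\rho$ recording everything learned so far, and uses $\Phi(\rho) := \log\spar(f|_\rho)$ as a progress potential, with the goal of reaching a subfunction that is constant. Before anything else I would record the monotonicity facts that let one analyze this with global budgets: for every subcube restriction $\rho$ one has $\spar(f|_\rho)\le\spar(f)$ (substituting a variable by a constant in a multilinear polynomial never increases the number of surviving monomials) and $\mbs(f|_\rho)\le\mbs(f)$ (any family of pairwise-disjoint sensitive $0$-blocks of $f|_\rho$ lifts coordinate-wise to such a family for $f$). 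Hence $\log\spar(f)$ and $\log\mbs(f)$ are quantities that never grow as the algorithm descends, and may be used uniformly.

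The core is the per-phase loop, where the three lemmas play complementary roles. Lemma~4.6 of Knop et al.\ is the combinatorial heart: from a low-sparsity representation of the current subfunction it extracts a ``good'' $\AND$-query, a set $T$ for which querying $\bigwedge_{i\in T}x_i$ yields a useful dichotomy. If the answer is $1$, all of $T$ is pinned to $1$, $\rho$ refines to a strictly smaller subcube, and $\Phi$ drops by a constant — the phase ends. If the answer is $0$, we learn only $\bigvee_{i\in T}\overline{x_i}$, which is \emph{not} a subcube constraint; this is exactly the feature that makes $\AND$-decision trees delicate, and it is where $\mbs$ enters. Claim~4.4 supplies the bookkeeping on the $0$-side: each such answer can be converted, by $O(\log n)$ single-block binary-search queries inside $T$, into the discovery of an actual $0$-coordinate and hence a genuine restriction; and the bound $\mbs(f|_\rho)\le\mbs(f)$ forces that after $O(\log\mbs(f))$ of these steps either $f|_\rho$ has become constant or $\Phi$ has already dropped by a constant. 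Lemma~3.2 then packages the whole thing, absorbing one further $O(\log\mbs(f))$ factor needed to turn progress on the auxiliary combinatorial measure into progress on $\spar$ itself. The net effect: one phase spends $O((\log\mbs(f))^{2}\cdot\log n)$ $\AND$-queries and decreases $\Phi$ by a constant, so summing over the $O(\log\spar(f))$ phases gives $\DmAdt(f)=O((\log\mbs(f))^{2}\cdot\log\spar(f)\cdot\log n)$.

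\textbf{The step I expect to be the main obstacle} is Lemma~4.6 together with the treatment of the $0$-branch. For an ordinary decision tree, binary search directly on the monomials of a sparse polynomial shrinks the support, and there is no subtlety; for $\AND$-queries, a $0$-answer returns only a non-subcube constraint, so there is no monotone way to reduce $\spar$, and the entire point of the argument is to show that the purely algebraic hypothesis ``$\spar$ is small'' still guarantees the existence of a query whose $1$-branch makes progress, while ``$\mbs$ is small'' bounds how long the algorithm can be trapped bouncing along $0$-branches before the subfunction collapses. Making both of these quantitative — and verifying that the $O(\log n)$ overhead for binary search inside a query, the $O(\log\mbs)$ for the $0$-branch recursion, and the $O(\log\mbs)$ for the sparsity-reduction recursion compose into exactly the claimed product — is the technical crux; the rest is the routine phase-counting sketched above.
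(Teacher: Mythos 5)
The paper does not prove this claim at all: it is imported verbatim from Knop, Lovett, McGuire and Yuan \cite{knop2021log} (their Lemma 3.2, Claim 4.4 and Lemma 4.6), so the only question is whether your blind reconstruction stands on its own — and it does not. What you have written is an assembly plan in which every quantitatively load-bearing step is attributed to one of the three cited results whose statements you are guessing at, not proving and not quoting. Concretely: (i) the existence, for an arbitrary subfunction of low sparsity, of a ``good'' $\AND$-query whose $1$-branch decreases $\log \spar(f|_\rho)$ by a constant is asserted, not established (setting a block of variables to $1$ merges monomials but there is no a priori reason a constant fraction of them disappear); (ii) the claim that after a $0$-answer, $O(\log n)$ binary-search queries plus the bound $\mbs(f|_\rho)\le\mbs(f)$ force the algorithm to make potential progress within $O(\log \mbs(f))$ such rounds is exactly the heart of the matter and is given no argument — finding a coordinate with $x_i=0$ is easy, but killing the monomials through $x_i$ need not shrink sparsity appreciably, and nothing you wrote explains why monotone block sensitivity limits how long one can ``bounce along $0$-branches''; (iii) the final composition into the product $(\log\mbs(f))^2\cdot\log\spar(f)\cdot\log n$ is then phase-counting on top of unproven per-phase guarantees. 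You flag (i)–(ii) yourself as ``the main obstacle,'' which is an accurate self-assessment: those are precisely the contents of the cited lemmas, so the proposal is in effect a restatement of the claim with the citations redistributed, not a proof.

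The preliminary monotonicity facts you record ($\spar(f|_\rho)\le\spar(f)$ and $\mbs(f|_\rho)\le\mbs(f)$ under subcube restrictions) are correct and harmless, and the $O(\log n)$ binary search inside a falsified $\AND$-query is fine, but none of this touches the crux. If you want a genuine proof you must either reproduce the actual arguments of \cite{knop2021log} — in particular their structural lemma showing how small $\mbs$ constrains the monomial set system of a sparse representation and thereby yields a query with guaranteed progress on \emph{both} branches — or find a substitute for it; the present text does neither. As used in this paper, the claim is simply a black-box citation, and your write-up should either do the same explicitly or supply the missing lemma proofs.
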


Intuitively, a large value of \( \mbs(f) \) indicates that a large-arity \( \pOR \) function can be embedded into \( f \) via suitable restrictions and identifications of variables.

To tighten our upper bound on \( \RmAdt(f) \), we now upper bound \( \mbs(f) \) in terms of \( \log \aspar(f) \). While Knop et al.\ showed \( \mbs(f) = O((\log \spar(f))^2) \), the same proof idea gives a similar bound in terms of approximate sparsity:

\begin{claim}\label{claim:mbs-aspar}
For any Boolean function \( f \),
\[
    \mbs(f) = O\left((\log \aspar(f))^2\right).
\]
\end{claim}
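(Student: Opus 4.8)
The plan is to mimic the argument that $\mbs(f) = O((\log\spar(f))^2)$ from Knop et al., substituting the exact-sparsity bound with an approximate-sparsity bound at the one place where sparsity is used. The overall strategy is: if $\mbs(f) = b$, then a large-arity promise-$\OR$ can be embedded into $f$ by restricting and identifying variables, and a promise-$\OR$ on many inputs has large approximate degree; combining this with a random-restriction argument that kills high-degree monomials of any sparse approximating polynomial forces $\log\aspar(f)$ to be $\Omega(\sqrt{b})$.

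Concretely, I would proceed as follows. First, fix an input $x$ achieving $\mbs(f,x) = b$, with pairwise disjoint sensitive $0$-blocks $B_1,\dots,B_b$ (so $x$ is $0$ on each $B_k$, and flipping all of $B_k$ to $1$ changes the value of $f$). By restricting all coordinates outside $\bigcup_k B_k$ to their values in $x$, and within each block $B_k$ identifying all the variables to a single new variable $y_k$, we obtain a function $g : \{0,1\}^b \to \{0,1\}$ on the variables $y_1,\dots,y_b$; on the all-zeros input $g$ takes one value, and flipping any single $y_k$ from $0$ to $1$ flips the value, so $g$ agrees with (a shift of) $\OR_b$ or $\AND_b$ at least on inputs of Hamming weight $\le 1$ — i.e., $g$ computes a promise-$\OR_b$ on the relevant slice. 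Since restriction and variable identification can only decrease sparsity of a polynomial (identifying variables and then re-expanding multilinearly does not increase the number of monomials), any polynomial $Q$ that $1/3$-approximates $f$ yields, after the same operations, a polynomial $Q'$ that $1/3$-approximates $g$ with $\spar(Q') \le \spar(Q)$, hence $\aspar(g) \le \aspar(f)$.

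It then remains to show $\log\aspar(g) = \Omega(\sqrt{b})$, which I would get by the same random-restriction + approximate-degree argument already developed in this section. Apply to the $b$ surviving variables $y_1,\dots,y_b$ the simple random restriction that, independently for each variable, fixes it to $0$ with probability $1/2$ and leaves it free otherwise (this is the non-adaptive special case, which already satisfies the analogues of properties (1)–(3): with high probability $\Omega(b)$ variables are free, the restricted promise-$\OR$ still has full degree on the free variables since promise-$\OR_r$ has degree $r$, and any monomial of degree $\ge t$ survives with probability $\le 2^{-t}$). If $\aspar(g) < \tfrac{1}{10}\cdot 2^{\sqrt{b/c}}$, then with positive probability there is a restriction $\rho$ leaving $\Omega(b)$ variables free under which the approximating polynomial has degree $< \sqrt{b/c}$, giving a too-good approximate-degree bound for a function of degree $\Omega(b)$, contradicting $\deg \le c\cdot\adeg^2$ (\cref{thm:degree-apxdegree}). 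Hence $\log\aspar(g) = \Omega(\sqrt b)$, so $b = \mbs(f) = O((\log\aspar(f))^2)$.

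The main obstacle I anticipate is the embedding step: verifying carefully that after restricting outside $\bigcup_k B_k$ and collapsing each block to a single variable, the resulting function is genuinely hard for low approximate degree — i.e., that it really does behave like a promise-$\OR_b$ (or promise-$\AND_b$) on weight-$\le 1$ inputs, so that known $\Omega(\sqrt b)$ approximate-degree lower bounds apply, and that the approximate-sparsity monotonicity under identification of variables holds with no hidden blow-up. Everything else is a routine re-run of the claims already proved in \cref{subsec:restriction-prop,subsec:together} specialized to the non-adaptive restriction, so I expect the write-up to be short once the embedding is pinned down.
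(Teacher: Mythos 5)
Your proposal follows the same route as the paper's proof: the same embedding (fix a witness $z$ for $\mbs(f)=b$, restrict outside the blocks, identify each block to one variable $y_k$ to get $g:\{0,1\}^b\to\{0,1\}$), the same observation that restrictions and identifications do not increase approximate sparsity, and the same non-adaptive restriction (each $y_k$ set to $0$ or left free with probability $1/2$) with a Chernoff bound for the number of free variables and a union bound killing all monomials of degree $\geq \sqrt{b}/O(1)$.

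However, the final contradiction as you wrote it has a genuine gap. You assert that the restricted function ``still has full degree on the free variables since promise-$\OR_r$ has degree $r$,'' and then invoke $\deg \le c\cdot\adeg^2$ (\cref{thm:degree-apxdegree}). This is not justified: the partial promise-$\OR_r$ has exact degree $1$ (the polynomial $\sum_i y_i$ agrees with it on all weight-$\le 1$ inputs), and the total function $g$ (hence $g|_\rho$) is only known to agree with a promise-$\OR$ on the all-zeros and weight-one inputs -- its values elsewhere are arbitrary, so nothing forces $\deg(g|_\rho)=\srho{\rho}$. All that is guaranteed is that $g|_\rho$ has \emph{sensitivity} $\srho{\rho}$ at the all-zeros input, which only gives $\deg(g|_\rho) = \Omega(\sqrt{\srho{\rho}})$; feeding that into \cref{thm:degree-apxdegree} would yield only $\mbs(f)=O((\log\aspar(f))^4)$, not the claimed quadratic bound. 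The repair is exactly what the paper does (and what you gesture at in your opening and closing paragraphs but do not actually use): apply the Nisan--Szegedy sensitivity lower bound $\adeg(h)\ge\sqrt{\s(h)/6}$ (\cref{thm:sens-adegree}), equivalently the $\Omega(\sqrt{r})$ approximate-degree lower bound for promise-$\OR_r$, to $g|_\rho$, which has full sensitivity at $0$. With that substitution your argument goes through verbatim and coincides with the paper's proof.
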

\begin{proof}
Assume \( \mbs(f) = k \geq 40 \); otherwise, the claim is trivial. Let this be witnessed by an input \( z \in \{0,1\}^n \) and disjoint $0$-blocks \( B_1, \dots, B_k \subseteq [n] \), such that \( f(z) \neq f(z \oplus 1_{B_i}) \) for all \( i \in [k] \).

Define \( g : \{0,1\}^k \to \{0,1\} \) by identifying variables within each \( B_i \), fixing all others according to \( z \), and letting \( g \) be the resulting function. Then \( g(0^k) = f(z) \) and \( g(x) \neq f(z) \) for all \( x \) with Hamming weight 1. Thus, \( g \) has sensitivity \( k \) at \( 0^k \). Since restrictions and identifications do not increase approximate sparsity, we have \( \aspar(g) \leq \aspar(f) \), so it suffices to show \( \aspar(g) \) is large.

Suppose, for contradiction, that \( g \) is \( 1/3 \)-approximated by a polynomial \( Q \) of sparsity
\[
\spar(Q) \leq \frac{1}{10} \cdot 2^{\ell},
\]
for \( \ell = c \cdot \sqrt{k/4} \), where \( c > 0 \) is a constant to be fixed later. We will argue that such a polynomial cannot exist, thereby proving the claim.

Define a distribution \( \mathcal{D} \) over restrictions \( \rho : \{x_1, \dots, x_k\} \to \{0,1,*\} \), where each variable is independently set to 0 with probability \( 1/2 \) and left free with probability \( 1/2 \). This distribution satisfies the following:

\begin{enumerate}
    \item By a standard Chernoff bound,
    \[
    \Pr_\rho\left[\srho{\rho} \leq k/4 \right] \leq e^{-k/16} \leq 0.1,
    \]
    where the last inequality uses \( k \geq 40 \). Thus, with probability at least \( 0.9 \), \( \srho{\rho} \geq k/4 \).

    \item For every \( \rho \) in the support of \( \mathcal{D} \), the restricted function \( g|_\rho \) has sensitivity \( \srho{\rho} \) at the all-zero input. Hence, by Theorem~\ref{thm:sens-adegree}, \( \adeg(g|_\rho) \geq c \cdot \sqrt{\srho{\rho}} \).
    \item For any monomial \( M \) over \( \{x_1, \dots, x_k\} \), we have \( \Pr_\rho[\deg(M|_\rho) > 0] = 2^{-\deg(M)}. \)
\end{enumerate}

Now consider the restricted polynomial \( Q|_\rho \). By property (3) of \( \mathcal{D} \), the probability that any fixed monomial in \( Q \) of degree at least \( \ell \) survives is at most \( 2^{-\ell} \), so by a union bound:
\[
\Pr_\rho[\deg(Q|_\rho) \geq \ell] \leq \spar(Q) \cdot 2^{-\ell} \leq \frac{1}{10}.
\]
By property (1) of \( \mathcal{D} \), with probability at least \( 0.9 \), \( \rho \) leaves at least \( k/4 \) variables free. Thus, with probability at least \( 0.8 \), both of the following hold:
\[
\srho{\rho} \geq k/4 \quad \text{and} \quad \deg(Q|_\rho) < \ell.
\]

Fix such a restriction \( \rho \). Then \( Q|_\rho \) is a polynomial of degree less than \( \ell \) that \( 1/3 \)-approximates \( f|_\rho \). Hence, \(\adeg(f|_{\rho}) < \ell = c \cdot \sqrt{k / 4} \leq c \cdot \sqrt{\srho{\rho}}\), contradicting property (2) of \( \mathcal{D} \). Hence, our assumption was false, and the claim follows.
\end{proof}

\sparsityADT*
\begin{proof}
The bound in (1) follows from \cref{claim:aspar-radt}. For (2), combining \cref{claim:dadt-mbs-spar}, \cref{claim:mbs-aspar}, and \cref{thm:spar-aspar}, we get:
\[
\DmAdt(f) = O\left((\log \mbs(f))^2 \cdot \log \spar(f) \cdot \log n\right) 
= O\left((\log \aspar(f))^6 \cdot \log n\right). \qedhere
\]
\end{proof}
This parallels the classical setting, where deterministic and randomized query complexity, degree, and approximate degree are all polynomially related. In the \(\AND\)-query model, \( \log \spar(f) \) plays the role of degree, while \( \log \aspar(f) \) plays the role of approximate degree. Combined with the results of~\cite{knop2021log}, our work shows that deterministic and randomized \(\AND\)-query complexities, log sparsity, and log approximate sparsity are all polynomially related—up to polylogarithmic loss factors.

\section{Exact vs Approximate Generalized Representations}

In this section, we present another application of our adaptive restriction technique. 
In the previous section, we showed that approximate polynomial representations of Boolean functions do not offer substantially more succinct representations than exact ones. Here, we study an analogous question in the setting of \emph{generalized polynomials}: 

\begin{quote}
\emph{Does allowing approximation lead to significantly sparser representations when using generalized polynomials?}
\end{quote}

We show that for \emph{monotone functions}, the answer is negative: their exact and approximate generalized sparsity and \( \ell_1 \)-norm are polynomially related on the logarithmic scale.

Unlike our result for standard polynomials, where we worked directly with polynomials exhibiting large exact sparsity or \(\ell_1\)-norm, it is unclear how to apply similar reasoning to generalized polynomials. A key obstacle is the non-uniqueness of generalized representations. For example, \( \OR_n(x_1, \ldots, x_n) \) can be expressed as \( \sum_{\emptyset \neq S \subseteq [n]} (-1)^{|S|} \prod_{i \in S} x_i \), which contains \( 2^n - 1 \) generalized monomials, or equivalently as \( 1 - \prod_{i = 1}^n (1 - x_i) \), which involves only 2 generalized monomials. This non-uniqueness makes it difficult to reason directly via generalized complexity measures. Therefore, our approach here proceeds differently: we identify a combinatorial structure that bridges the gap between exact and approximate generalized measures. Our argument proceeds in two steps:

\begin{itemize}
    \item First, we show that if a monotone function \( f \) has large exact generalized sparsity or \(\ell_1\)-norm, then it must have either a large set of maxterms or a large set of minterms.
    \item Second, we show that the existence of such a large set—either maxterms or minterms—implies large approximate generalized sparsity and large approximate generalized \( \ell_1 \)-norm.
\end{itemize}

Together, these implications show that for monotone functions, the exact and approximate generalized sparsity and $\ell_1$-norm cannot be too far apart.

The first implication is straightforward and follows from existing results. The second implication is our main technical contribution and is where we apply our adaptive restriction technique. In fact, we prove a more general result: we introduce a combinatorial notion called \emph{separating set}—a structural property of a set of inputs with respect to a function \( f \)—and show that any sufficiently large separating set necessitates high approximate generalized sparsity and \(\ell_1\)-norm. The case of a large set of minterms (or maxterms) is captured as a special instance of this framework.

\paragraph{Organization of this section.}
In~\cref{subsec:preliminaries}, we present the necessary preliminaries: we define generalized polynomials and their associated complexity measures, and recall some basic properties of monotone functions. In~\cref{subsec:separating}, we define the notion of separating set. In~\cref{subsec:separating-to-approx}, we establish that the existence of a separating set implies large approximate generalized sparsity and \(\ell_1\)-norm. In~\cref{subsec:sparg-asparg-monotone}, we put things together to conclude that the exact and approximate generalized measures are polynomially related on the logarithmic scale for monotone functions. Finally, in~\cref{subsec:gspar-impl}, we discuss implications for the decision tree size in the ordinary query model.

\subsection{Preliminaries}\label{subsec:preliminaries}

\paragraph{Generalized Polynomials.}

In standard polynomial representations, even simple functions like \( \OR_n = 1 - \prod_{i=1}^n (1 - x_i) \) can have high sparsity: the standard expansion of \( \OR_n \) contains \( 2^n - 1 \) monomials. To address this and allow for more compact representations, we consider \emph{generalized polynomials}, which extend standard polynomials by introducing formal complements \( \bar{x}_i \) for each variable \( x_i \). For example, \( \OR_n \) can be written more succinctly as
\[
\OR_n(x_1, \dots, x_n) = 1 - \prod_{i=1}^n \bar{x}_i,
\]
where each \( \bar{x}_i \) acts as a stand-in for \( 1 - x_i \). This representation uses only two monomials, offering exponential savings in sparsity.

We now define generalized polynomials formally.

\begin{definition}[Generalized Polynomial]
A \emph{generalized polynomial} is a polynomial over the ring
\[
    \mathbb{R}[x_1, \dots, x_n, \bar{x}_1, \dots, \bar{x}_n]/I,
\]
where \( \bar{x}_i \) denotes the formal complement of \( x_i \), and \( I \) is the ideal generated by the relations:
\[
    x_i^2 - x_i = 0 \quad \text{and} \quad x_i + \bar{x}_i - 1 = 0 \quad \text{for all } i \in [n].
\]
\end{definition}

\begin{definition}[Generalized Representation of Boolean Functions]
A generalized polynomial \( Q \in \mathbb{R}[x_1, \dots, x_n, \bar{x}_1, \dots, \bar{x}_n]/I \) \emph{represents} a function \( f : \{0,1\}^n \to \mathbb{R} \) if \( Q(x, \bar{x}) = f(x) \) for all \( x \in \{0,1\}^n \), where \( \bar{x}_i = 1 - x_i \).
\end{definition}

\begin{definition}[Generalized Complexity Measures]
As in the standard case, one can define the degree, sparsity, and \( \ell_1 \)-norm of a generalized polynomial. For a function \( f : \{0,1\}^n \to \mathbb{R} \), we define \( \gdeg(f) \), \( \gspar(f) \), and \( \gdmnorm{f} \) as the minimum degree, sparsity, and \( \ell_1 \)-norm, respectively, over all generalized polynomials that represent \( f \) exactly.

Analogously, the approximate measures \( \gadeg(f) \), \( \agspar(f) \), and \( \gdmnormapprox{f} \) denote the minimum degree, sparsity, and \( \ell_1 \)-norm among all generalized polynomials that approximate \( f \) pointwise within error \( 1/3 \).
\end{definition}
\begin{remark}\label{remark:gpoly}
Using generalized polynomials offers no advantage in terms of degree. Indeed, each dual variable \( \bar{x}_i \) can be replaced by \( 1 - x_i \), yielding a standard polynomial of the same degree. Therefore, \( \deg(f) = \gdeg(f) \) and \( \adeg(f) = \gadeg(f) \).
Since the degree measures coincide, we will simply write \( \deg(f) \) and \( \adeg(f) \) and avoid using the generalized notation \( \gdeg(f) \) and \( \gadeg(f) \).

On the other hand, generalized representations are not unique and can be exponentially more succinct. As discussed above, \( \OR_n \) has a generalized representation with just two monomials, while its standard representation requires \( 2^n - 1 \).
\end{remark}

\paragraph{Monotone Functions.}

For $x, y \in \{0,1\}^n$, we write $x \le y$ if $x_i \le y_i$ for all $i \in [n]$. A Boolean function $f : \{0,1\}^n \to \{0,1\}$ is \emph{monotone} if $x \le y$ implies $f(x) \le f(y)$.

\begin{definition}[Maxterms, Minterms, and Critical Inputs]
Let $f$ be a monotone Boolean function.
\begin{itemize}
    \item A \emph{maxterm} of $f$ is a minimal set $S \subseteq [n]$ such that setting all variables in $S$ to $0$ forces $f$ to output $0$. The associated \emph{critical 0-input} is the input $x_S \in \{0,1\}^n$ with $x_i = 0$ for $i \in S$ and $x_i = 1$ otherwise. Let $M_0(f) = \{x_S \mid S \text{ is a maxterm of } f\}$ denote the set of all such critical 0-inputs.
    
    \item A \emph{minterm} of $f$ is a minimal set $S \subseteq [n]$ such that setting all variables in $S$ to $1$ forces $f$ to output $1$. The associated \emph{critical 1-input} is the input $x_S$ with $x_i = 1$ for $i \in S$ and $x_i = 0$ otherwise. Let $M_1(f) = \{x_S \mid S \text{ is a minterm of } f\}$ denote the set of all such critical 1-inputs.
\end{itemize}
\end{definition}

\begin{observation}
For any monotone function $f$, every $x \in M_1(f)$ is sensitive on all $1$-bits: flipping any $1$ to $0$ changes $f(x)$ from $1$ to $0$. Similarly, every $x \in M_0(f)$ is sensitive on all $0$-bits.
\end{observation}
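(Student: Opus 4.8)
The plan is to prove the two assertions symmetrically, using only monotonicity together with the \emph{minimality} clause built into the definitions of minterm and maxterm. I will carry out the $M_1$ case and then observe that the $M_0$ case is entirely dual.

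First I would fix $x = x_S \in M_1(f)$, where $S$ is a minterm of $f$, so that the $1$-coordinates of $x$ are exactly the elements of $S$. By definition of minterm, setting all variables in $S$ to $1$ forces $f$ to output $1$; in particular $f(x_S) = 1$. Now pick any $i \in S$ and let $x'$ be the input obtained from $x$ by flipping coordinate $i$ from $1$ to $0$. The key observation is that $x'$ is precisely the input that has $1$'s exactly on $S \setminus \{i\}$ and $0$'s elsewhere. Since $S$ is a \emph{minimal} set forcing $f = 1$, the smaller set $S \setminus \{i\}$ does not force $f = 1$; that is, there exists $z \in \{0,1\}^n$ with $z_j = 1$ for all $j \in S \setminus \{i\}$ and $f(z) = 0$. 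Because the $1$-coordinates of $x'$ are contained in the $1$-coordinates of $z$, we have $x' \le z$ coordinatewise, so monotonicity gives $f(x') \le f(z) = 0$, hence $f(x') = 0$. Thus $f(x) = 1 \ne 0 = f(x')$, so $x$ is sensitive at coordinate $i$; since $i \in S$ was arbitrary, $x$ is sensitive on all of its $1$-bits.

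The $M_0$ case follows by reversing all inequalities: for a maxterm $S$ with critical $0$-input $x_S$ (whose $0$-coordinates are exactly $S$), minimality of $S$ yields, for each $i \in S$, some $z$ with $z_j = 0$ for all $j \in S \setminus \{i\}$ and $f(z) = 1$. Writing $x''$ for the input obtained from $x_S$ by flipping coordinate $i$ from $0$ to $1$ (equivalently, the input with $0$'s exactly on $S \setminus \{i\}$), we have $z \le x''$, so monotonicity gives $f(x'') \ge f(z) = 1$, while $f(x_S) = 0$; hence $x_S$ is sensitive at $i$, and therefore on all of its $0$-bits.

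There is no real obstacle here; the only point that requires a moment's care is the bookkeeping observation that flipping a single critical bit of $x_S$ produces exactly the critical-type input associated with $S \setminus \{i\}$, which is what allows the minimality clause in the definition to be applied verbatim.
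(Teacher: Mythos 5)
Your proof is correct and is exactly the standard argument the paper leaves implicit (the observation is stated without proof, being immediate from the minimality clause in the definitions of minterm/maxterm plus monotonicity). Both the $M_1$ case and its dual are handled properly, so there is nothing to add.
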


The following result of Ehrenfeucht and Haussler~\cite[Lemmas 1 and 6]{EH89} implies that if a monotone Boolean function has large exact generalized sparsity or $\ell_1$-norm, then the number of minterms (i.e., critical 1-inputs) or maxterms (i.e., critical 0-inputs) must be large.

\begin{theorem}[Ehrenfeucht and Haussler~\cite{EH89}]\label{thm:dsize_upper_bound_by_cover_number}
For every monotone Boolean function \( f : \{0,1\}^n \to \{0,1\} \), the following hold:
\begin{enumerate}
    \item [(a)] \( \log \gspar(f) = O\left(\log^2 M(f) \cdot \log n\right) \),
    \item [(b)] \( \log \gdmnorm{f} = O\left(\log^2 M(f) \cdot \log n\right) \),
\end{enumerate}
where \( M(f) = |M_0(f)| + |M_1(f)| .\) 
\end{theorem}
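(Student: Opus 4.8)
\noindent The plan is to reduce both measures to the deterministic decision tree size of $f$ and then invoke the Ehrenfeucht--Haussler bounds. First I would record that for \emph{any} Boolean function one has $\gspar(f)\le \DSize(f)$ and $\gdmnorm{f}\le \DSize(f)$. Fix an optimal decision tree $T$ for $f$ with $\DSize(f)$ leaves. Each leaf $\ell$ is reached exactly on the subcube cut out by fixing the variables queried along the root-to-$\ell$ path, and the indicator of this subcube is the generalized monomial $\prod_i x_i \prod_j \bar x_j$, the first product over variables set to $1$ on the path and the second over variables set to $0$. Since the leaf-subcubes partition $\{0,1\}^n$ and $f$ is constant on each, $f$ equals the sum of these generalized monomials over the $1$-leaves, each with coefficient $+1$. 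This is an exact generalized representation with at most $\DSize(f)$ monomials and $\ell_1$-norm at most $\DSize(f)$, proving both inequalities. Hence it suffices to show $\log \DSize(f) = O(\log^2 M(f)\cdot \log n)$ for monotone $f$.

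For this I would use the decision-tree \emph{rank} framework. Two ingredients are needed. First, a function of rank $r$ admits a decision tree with $(n+1)^{O(r)}$ leaves (a quick induction on $n$: for a rank-$r$ function there is a variable $x_i$ for which one of $f|_{x_i=0},f|_{x_i=1}$ has rank $\le r$ and the other rank $\le r-1$), so $\log \DSize(f) = O(\operatorname{rank}(f)\cdot \log n)$. Second, and this is where monotonicity is used, a monotone $f$ with $|M_1(f)|$ minterms and $|M_0(f)|$ maxterms has $\operatorname{rank}(f) = O\big(\log(|M_1(f)|+2)\cdot \log(|M_0(f)|+2)\big) = O(\log^2 M(f))$. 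Combining the two yields $\log \DSize(f) = O(\log^2 M(f)\cdot \log n)$, and together with the first paragraph this gives $\log \gspar(f),\ \log \gdmnorm{f} = O(\log^2 M(f)\cdot \log n)$, establishing both (a) and (b).

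The rank estimate (Lemmas 1 and 6 of Ehrenfeucht--Haussler) rests on two structural facts I would spell out. (i) Every minterm of a monotone function meets every maxterm: otherwise the assignment that is $1$ on the minterm and $0$ on the maxterm would be forced simultaneously to $1$ and to $0$. (ii) Restriction never increases either certificate count, i.e. $|M_b(f|_{x_i=u})|\le |M_b(f)|$ for all $b,u\in\{0,1\}$; this follows by identifying $M_0(f)$ with the set of maximal points of the down-set $f^{-1}(0)$ and checking that a slice of a down-set cannot have more maximal points than the down-set itself (and dually for $M_1$). With these, one builds the tree adaptively: at each stage pick a branching variable so that one of the two restrictions makes geometric progress on the smaller of the two counts — fact (i) locates a variable lying in a constant fraction of the relevant certificates — while fact (ii) prevents the other count from growing; the rank accrues only at these balanced branch points, producing the product-of-logarithms bound. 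I would cite Ehrenfeucht--Haussler for the detailed bookkeeping rather than reproduce it.

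The main obstacle is exactly this last step: arranging the adaptive branching so that the minterm and maxterm counts shrink geometrically, so the rank stays $O(\log^2 M(f))$ rather than the trivial $O(M(f))$ one gets by decreasing a count one unit at a time. The passage from rank to tree size is where the unavoidable extra $\log n$ factor enters, as already witnessed by $\OR_n$, which has rank $1$ but $\DSize(\OR_n)=n+1$ while $M(\OR_n)=n+1$.
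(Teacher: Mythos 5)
Your proposal is correct and takes essentially the same route as the paper: both reduce (a) and (b) to decision tree size via the leaf-by-leaf conversion of a decision tree into a sum of generalized monomials (giving $\gspar(f)\le \DSize(f)$ and $\gdmnorm{f}\le \DSize(f)$), and then invoke Ehrenfeucht--Haussler for $\log \DSize(f) = O\left(\log^2 M(f)\cdot \log n\right)$. The only cosmetic difference is that you phrase the Ehrenfeucht--Haussler input via rank and minterm/maxterm (DNF/CNF) counts, whereas the paper's remark phrases it via the monochromatic subcube cover number with $\mathrm{Cover}(f)=M(f)$ for monotone $f$; these are the same underlying lemmas, and like the paper you legitimately defer their bookkeeping to the cited work.
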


\begin{remark}\label{remark:gspar-dsize}
Ehrenfeucht and Haussler~\cite{EH89} proved a more general result: for any Boolean function $f$, they showed
\[
\log \DSize(f) = O(\log^2 \mathrm{Cover}(f) \cdot \log n),
\]
where $\DSize(f)$ is the size (number of leaves) of the smallest ordinary decision tree computing $f$, and $\mathrm{Cover}(f)$ is the minimum number of monochromatic subcubes under $f$ that cover $\{0,1\}^n$.

For monotone functions, the minimal subcube cover corresponds exactly to subcubes defined by maxterms and minterms, so $\mathrm{Cover}(f) = M(f)$. Hence, the above bound specializes to:
\[
\log \DSize(f) = O(\log^2 M(f) \cdot \log n).
\]

Moreover, any decision tree of size \( s \) and depth \( d \) computing a function \( f \) can be converted into a generalized polynomial of degree at most \( d \), and sparsity and \( \ell_1 \)-norm at most \( s \), that also computes \( f \). Hence,
\[
\gspar(f) \le \DSize(f) \quad \text{and} \quad \gdmnorm{f} \le \DSize(f).
\]
and the bound in \cref{thm:dsize_upper_bound_by_cover_number} follows.
\end{remark}

\subsection{Separating Sets of Inputs}\label{subsec:separating}

For any \( B \subseteq [n] \), let \( 1_B \in \{0,1\}^n \) denote the string with 1s in coordinates indexed by \( B \) and 0s elsewhere. For \( x \in \{0,1\}^n \) and \( i \in [n] \), we say that \( i \) is a \emph{sensitive coordinate} of \( f \) at \( x \) if \( f(x \oplus 1_{\{i\}}) \ne f(x) \). Let \( S(f,x) \subseteq [n] \) denote the set of all such coordinates. The \emph{sensitivity} of \( f \) is defined as \( \s(f) = \max_{x \in \{0,1\}^n} |S(f,x)| \).

We now define the notion of a \emph{separating} set of inputs for a Boolean function \( f \). Informally, its a collections of inputs that differ on their sensitive coordinates. Formally,

\begin{definition}[Separating Set of Inputs]
Let \( f : \{0,1\}^n \to \{0,1\} \) be a Boolean function. A set \( F \subseteq \{0,1\}^n \) is said to be \emph{separating} (with respect to \( f \)) if for every distinct pair \( x, y \in F \), the projections of \( x \) and \( y \) to the union of their sensitive coordinates differ; that is, letting \( B = S(f,x) \cup S(f,y) \), we require \( x|_B \ne y|_B \). We refer to this condition as the \emph{separation property}.
\end{definition}

We record some basic properties of separating sets that will be used later. First, the separation property is preserved under restrictions.

\begin{claim}[Closure under restriction]\label{claim:cons-closure}
Let \( F \subseteq \{0,1\}^n \) be a separating set with respect to \( f \), and let \( \rho \) be any restriction. Then the restricted set \( F|_\rho \) is separating with respect to the restricted function \( f|_\rho \).
\end{claim}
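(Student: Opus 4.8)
The plan is to unwind the definitions and track precisely how sensitive coordinates transform under a restriction. The key lemma I would prove first is that sensitivity is \emph{local to the free variables}: if $w \in F$ is an input consistent with $\rho$ (i.e.\ $w_i = \rho(x_i)$ for every $x_i \in \mathsf{SetVars}(\rho)$) and $u = w|_{\mathsf{FreeVars}(\rho)}$ is its image in $F|_\rho$, then $S(f|_\rho, u) = S(f,w) \cap \mathsf{FreeVars}(\rho)$. This holds because flipping a free coordinate $i$ of $u$ corresponds to flipping the coordinate $i$ of $w$; the flipped input $w \oplus 1_{\{i\}}$ is still consistent with $\rho$ and projects to $u \oplus 1_{\{i\}}$, and $f|_\rho$ agrees with $f$ on all inputs consistent with $\rho$. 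Hence a free coordinate $i$ is sensitive for $f|_\rho$ at $u$ exactly when it is sensitive for $f$ at $w$, while set coordinates are by definition excluded from $S(f|_\rho,u)$.

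Given this, I would take two distinct elements $u \neq u'$ of $F|_\rho$ and lift each to an input of $F$ consistent with $\rho$, say $w, w'$. A small preliminary observation is that such a lift is unique: two lifts of the same element of $F|_\rho$ agree on $\mathsf{SetVars}(\rho)$ (both equal $\rho$ there) and on $\mathsf{FreeVars}(\rho)$ (both project to the same point), hence are equal; consequently $u \neq u'$ forces $w \neq w'$. Now I invoke the separation property of $F$ with respect to $f$: writing $B = S(f,w) \cup S(f,w')$, we have $w|_B \neq w'|_B$, so $w_j \neq w'_j$ for some $j \in B$. Such a $j$ cannot lie in $\mathsf{SetVars}(\rho)$, since there $w_j = \rho(x_j) = w'_j$; therefore $j \in \mathsf{FreeVars}(\rho)$. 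Combining with the lemma, $j \in B \cap \mathsf{FreeVars}(\rho) = S(f|_\rho,u) \cup S(f|_\rho,u')$, and $u_j = w_j \neq w'_j = u'_j$, which is exactly the separation property for the pair $u, u'$ with respect to $f|_\rho$. Since $u, u'$ were arbitrary, $F|_\rho$ is separating with respect to $f|_\rho$.

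I do not expect a genuine obstacle here — the statement is essentially definitional once the localization lemma is in place. The one point that needs care is that lemma itself: verifying both inclusions of $S(f|_\rho, u) = S(f,w) \cap \mathsf{FreeVars}(\rho)$, i.e.\ that restricting neither kills an existing sensitive free coordinate nor introduces a spurious one, and being careful that the lift $w$ of a point of $F|_\rho$ is well defined and consistent with $\rho$. Everything after that is a short chase through the definitions of $S(f,\cdot)$, $F|_\rho$, and the separation property, with the only real idea being that a coordinate on which two lifts disagree must necessarily be free.
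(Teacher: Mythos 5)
Your proposal is correct and matches the paper's argument in essence: the paper proves the contrapositive form (assuming a violating pair in \(F|_\rho\), lifting to \(F\), and noting \(S(f,x') \cup S(f,y') \subseteq S(f|_\rho,x)\cup S(f|_\rho,y)\cup \mathsf{SetVars}(\rho)\) while the lifts agree there), which is exactly your lift-and-localize-sensitivity argument run by contradiction rather than directly. Your explicit localization lemma \(S(f|_\rho,u)=S(f,w)\cap\mathsf{FreeVars}(\rho)\) is just a slightly more detailed version of the inclusion the paper uses implicitly.
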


\begin{proof}
Suppose not. Then there exist \( x, y \in F|_\rho \) such that
\[
x|_B = y|_B \quad \text{for } B := S(f|_\rho, x) \cup S(f|_\rho, y).
\]
Let \( x', y' \in F \) be extensions of \( x, y \) consistent with \( \rho \). Then 
\[
x'|_{B'} = y'|_{B'} \quad \text{for } B' := S(f, x') \cup S(f, y') \subseteq B \cup \mathsf{SetVars}(\rho),
\]
contradicting the the separation property of \( F \).
\end{proof}

The separation property also ensures that a nontrivial fraction of inputs ($\approx 1/n$) in any separating set share a common sensitive coordinate:

\begin{claim}\label{claim:sensitive-density}
Let \( F \subseteq \{0,1\}^n \) be a separating set for \( f \) with \( |F| \ge 2 \). Then there exists an index \( i \in [n] \) and a subset \( F' \subseteq F \) such that \( i \in S(f, x) \) for every \( x \in F' \), and \( |F'| \ge |F| / (2n) \).
\end{claim}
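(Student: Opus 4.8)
The plan is to prove this by a straightforward double-counting argument, in which the separation property is invoked exactly once — to guarantee that the inputs of $F$ collectively carry enough sensitivity.

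First I would show that at most one input of $F$ can have an empty sensitive set. Indeed, if $x \ne y$ were two inputs of $F$ with $S(f,x) = S(f,y) = \emptyset$, then for the pair $(x,y)$ we would have $B := S(f,x) \cup S(f,y) = \emptyset$, so $x|_B$ and $y|_B$ are both the (unique) empty string and hence equal, contradicting the separation property. Since $|F| \ge 2$, it follows that at least $|F| - 1 \ge |F|/2$ inputs $x \in F$ satisfy $|S(f,x)| \ge 1$, and therefore $\sum_{x \in F} |S(f,x)| \ge |F|/2$.

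Second, I would apply averaging over the $n$ coordinates. Exchanging the order of summation gives $\sum_{i \in [n]} \big|\{x \in F : i \in S(f,x)\}\big| = \sum_{x \in F} |S(f,x)| \ge |F|/2$, so some index $i \in [n]$ must satisfy $\big|\{x \in F : i \in S(f,x)\}\big| \ge |F|/(2n)$. Setting $F' := \{x \in F : i \in S(f,x)\}$ then yields a subset of the required size on which $i$ is a common sensitive coordinate.

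The argument is essentially routine; the only point requiring care is that this is the sole place the separation hypothesis enters, so the "at most one empty-sensitivity input" step must be stated correctly — for a function with all $S(f,x)$ empty (e.g.\ a constant function) no such $i$ exists, and it is precisely the separation property, via the trivial equality $x|_\emptyset = y|_\emptyset$, that rules this out. I would also sanity-check the integrality: since $|F|-1$ is an integer and $|F| \ge 2$, the bound $|F|-1 \ge |F|/2$ holds comfortably.
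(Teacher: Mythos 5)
Your proof is correct and follows essentially the same route as the paper: both arguments use the separation property only to rule out two inputs with empty sensitive sets (via $x|_\emptyset = y|_\emptyset$), and then average over the $n$ coordinates. The paper phrases this as a case split (all sensitive sets nonempty, versus exactly one empty) while you merge the cases into a single double-counting bound, but the content is the same.
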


\begin{proof}
If \( S(f,x) \neq \emptyset \) for all \( x \in F \), then by averaging, some \( i \in [n] \) appears in \( S(f,x) \) for at least \( |F|/n \) inputs; let \( F' := \{ x \in F : i \in S(f,x) \} \).

Otherwise, let \( x \in F \) have \( S(f,x) = \emptyset \). Separation property then forces every \( y \in F \setminus \{x\} \) to satisfy \( S(f,y) \neq \emptyset \). Averaging over \( F \setminus \{x\} \), some \( i \in [n] \) appears in \( S(f,y) \) for at least \( (|F|-1)/n \ge |F|/(2n) \) inputs. Let \( F' := \{ y \in F \setminus \{x\} : i \in S(f,y) \} \).
\end{proof}

We next observe that set of critical 1-inputs \( M_1(f) \) and the set of critical 0-inputs \( M_0(f) \) arising from monotone functions are separating:

\begin{claim}
Let \( f : \{0,1\}^n \to \{0,1\} \) be monotone. Then both the sets of critical 1-inputs \( M_1(f) \) and critical 0-inputs \( M_0(f) \) are separating.
\end{claim}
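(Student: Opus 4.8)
The claim is that for a monotone $f$, both $M_1(f)$ and $M_0(f)$ are separating sets. I would handle $M_1(f)$ in detail; the argument for $M_0(f)$ is fully symmetric (swap the roles of $0$ and $1$, and of minterms and maxterms). The key structural fact I would use is the observation recorded just above: every critical $1$-input $x\in M_1(f)$ is sensitive on exactly its set of $1$-coordinates, i.e.\ $S(f,x)=\{i : x_i=1\}=\mathrm{supp}(x)$. Indeed, if $x=x_S$ for a minterm $S$, then flipping any $1$ (i.e.\ removing an element of $S$) drops $f$ to $0$ by minimality of $S$, while flipping any $0$ to $1$ keeps $f=1$ by monotonicity; so $S(f,x)=S$. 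Thus for a critical $1$-input, knowing $S(f,x)$ is the same as knowing which coordinates are $1$, and the input is completely determined by its set of sensitive coordinates (outside it, all bits are $0$).

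**Main step.** Take two distinct $x,y\in M_1(f)$, arising from distinct minterms $S_x,S_y$, so $x=x_{S_x}$, $y=x_{S_y}$. Let $B=S(f,x)\cup S(f,y)=S_x\cup S_y$. I must show $x|_B\ne y|_B$. Since $S_x\ne S_y$, there is some coordinate $i$ in the symmetric difference, say $i\in S_x\setminus S_y$ (the other case is symmetric). Then $i\in B$, and $x_i=1$ (as $i\in S_x=\mathrm{supp}(x)$) while $y_i=0$ (as $i\notin S_y=\mathrm{supp}(y)$). Hence $x$ and $y$ already differ at coordinate $i\in B$, so $x|_B\ne y|_B$. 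This establishes the separation property for $M_1(f)$.

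**The $M_0$ case and conclusion.** For $M_0(f)$, the analogous observation is that a critical $0$-input $x=x_S$ for a maxterm $S$ has $S(f,x)=S=\{i:x_i=0\}$: flipping any $0$ in $S$ to $1$ raises $f$ to $1$ by minimality, while flipping any $1$ to $0$ keeps $f=0$ by monotonicity. The same two-line argument — pick a coordinate in the symmetric difference of the two maxterms, note the two inputs disagree there, and that coordinate lies in the union of their sensitive sets — gives the separation property for $M_0(f)$. This completes the proof.

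**Expected obstacle.** There is essentially no hard step here; the only thing to be careful about is correctly invoking monotonicity in both directions to pin down $S(f,x)$ exactly as the support (resp.\ zero-set) for critical inputs, rather than merely containing it — the ``minimality'' half of the minterm/maxterm definition is what rules out extra sensitive coordinates, and this is already packaged in the preceding observation in the text, so I would just cite it.
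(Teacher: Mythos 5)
Your proof is correct and follows essentially the same route as the paper: identify the sensitive coordinates of a critical input with its support (for $M_1(f)$) via minimality plus monotonicity, and then note that two distinct critical inputs differ on a coordinate lying in the union of their sensitive sets. One small misattribution in your closing aside: minimality is what makes the support coordinates sensitive, while monotonicity alone rules out extra sensitive coordinates --- and in fact only the containment $\mathrm{supp}(x)\subseteq S(f,x)$ (not the exact equality) is needed for the separation argument.
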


\begin{proof}
Consider distinct inputs \( x, y \in M_1(f) \). By the definition of critical 1-inputs, we have \( S(f,x) = \{ i \in [n] : x_i = 1 \} \), and \( S(f,y) = \{ i \in [n] : y_i = 1 \} \). Since \( x \ne y \), their sets of 1s differ, so there exists \( i \in S(f,x) \cup S(f,y) \) such that \( x_i \ne y_i \). Hence, \( x|_B \ne y|_B \) for \( B = S(f,x) \cup S(f,y) \), as required. A similar argument shows that \( M_0(f) \) is also separating.
\end{proof}

Combining this with \cref{thm:dsize_upper_bound_by_cover_number}, we conclude that any monotone function with large exact generalized sparsity or exact \( \ell_1 \)-norm must have a large separating set. Therefore, to relate the exact and approximate generalized sparsity and \( \ell_1 \)-norm of monotone functions, it suffices to show that large separating sets lead to large approximate sparsity and \( \ell_1 \)-norm. We establish this in the next section.

\subsection{Large Separating Set of Inputs Implies Large Approximate Generalized Sparsity and \(\ell_1\)-norm}\label{subsec:separating-to-approx}

In this subsection, we show that the existence of a large separating set for a function \( f \) implies that \( f \) must have large approximate generalized sparsity and \( \ell_1 \)-norm. We begin with an outline of the proof for the sparsity case; a similar argument also yields a lower bound on the approximate generalized \( \ell_1 \)-norm.

\paragraph{Proof Outline.}
The proof closely follows the template used earlier to relate sparsity and approximate sparsity for ordinary polynomials, with a key variation. This may be viewed as a second application of that general proof strategy.

Given a large separating set \( F \) with respect to a function \( f \), we sample a carefully designed random restriction \( \rho \), as described in Algorithm~\ref{algo2:restriction}, which satisfies the following key properties:
\begin{enumerate}
\item With high probability, \( \rho \) leaves a significant number of variables free—specifically, \( \ell = \Omega(\log |F| / \log n) \).
\item The restricted function \( f|_{\rho} \) has full sensitivity; that is, \( \s(f|_{\rho}) = \ell \).
\item For any generalized monomial, the probability that its degree under the restriction \( \rho \) exceeds \( t \) is at most \( 2^{-t} \).
\end{enumerate}

These properties lead us to conclude that any generalized polynomial approximating \( f \) must have large sparsity. Suppose, for contradiction, that there exists a polynomial \( Q \) that \( 1/3 \)-approximates \( f \) and has generalized sparsity at most \( 2^{c\sqrt{\ell}} \), for some constant \( c \). Then, using property (3) and a standard probabilistic argument, we can find a restriction \( \rho \) such that all generalized monomials of degree greater than \( c\sqrt{\ell} \) are eliminated. Consequently, the restricted polynomial \( Q|_{\rho} \) has degree at most \( c\sqrt{\ell} \) and approximate \( f|_{\rho} \). However, property (2) tells us that \( f|_{\rho} \) has sensitivity \( \ell \), and by the known relationship between sensitivity and approximate degree \cite{nisan1994degree}, this implies that any approximating polynomial for \( f|_{\rho} \) must have degree at least \( \Omega(\sqrt{\ell}) \). This contradicts the assumption that \( Q|_{\rho} \) has degree \( c\sqrt{\ell} \). Thus, any approximating generalized polynomial for \( f \) must have sparsity at least \( 2^{\Omega(\sqrt{\ell})} \), where \( \ell = \Omega(\log |F| / \log n) \).

Let us compare this restriction-based argument with the one from the previous section (Algorithm~\ref{algo:restriction}). Property (1) remains the same in spirit—the number of free variables is again \( \ell = \log |F| / \log n \), as opposed to \( \log \spar(f) / \log n \) in the earlier case. Property (2) differs: previously, the restricted function had full degree, while here it has full sensitivity. However, since we used large degree previously only to infer large approximate degree, we can make a similar inference here using the known connection between sensitivity and approximate degree. Hence, the role of property (2) in the argument remains essentially unchanged.

Property (3), on the other hand, is stronger than before. In the earlier argument, we could only reduce the degree of ordinary monomials, whereas here we are able to reduce the degree of even generalized monomials. This strengthens the conclusion of the argument.

As is evident, the overall structure of the argument closely mirrors the earlier one, with suitable adjustments to handle generalized monomials.

The following definition abstracts the essential properties of the random restriction process.

\begin{definition}[\( \ell \)-Variable Max-Sensitivity Distribution]\label{defn:harddist-sens}
Let \( f : \{0,1\}^n \to \{0,1\} \), and let \( \mathcal{D} \) be a distribution over restrictions \( \rho : \{x_1, \dots, x_n\} \to \{0,1,*\} \). We say that \( \mathcal{D} \) is an \emph{\( \ell \)-variable max-sensitivity distribution} for \( f \) if:
\begin{enumerate}
    \item With probability at least \( 0.9 \), \( \rho \) leaves at least \( \ell \) variables free.
    \item For every \( \rho \) in the support of \( \mathcal{D} \), we have \( \s(f|_\rho) = \srho{\rho} \).
    \item For any generalized monomial \( M \) and any \( t \in \mathbb{N} \), \( \Pr_{\rho \sim \mathcal{D}}[ \deg(M|_\rho) \geq t ] \leq 2^{-t} \).
\end{enumerate}
\end{definition}

We will show that a large separating set \( F \) implies the existence of such a distribution, which in turn implies that any generalized polynomial approximating \( f \) must have large sparsity and \( \ell_1 \)-norm.

\paragraph{Properties of the Restriction Process.}

\begin{algorithm}
\caption{\samplerhog}
\label{algo2:restriction}

\begin{algorithmic}[1]
\State \textbf{Input:} A set \( \supp \subseteq \{x_1, \dots, x_n\} \);  \( f:\{0,1\}^{\supp} \to \{0,1\} \); a nonempty set \( F \subseteq \{0,1\}^{\supp} \) separating w.r.t \( f \)
\State \textbf{Output:} A restriction \( \rho: \supp \to \{0,1,*\} \).

\If{\( |F| \leq 2 \)} \label{algo2:line:terminate} 
   \State Let \( w \) be an input in \( F \)
    \State For each \( x_i \in \supp \), set \( \rho(x_i) \gets w_i \) 
\Else
    \If{there exists \( x_i \in \supp \), \( u \in \{0,1\} \) such that \( |F|_{x_i = u}| \geq \left(1 - \frac{1}{n} \right) \cdot |F| \)} \label{algo2:line:cond2}
        \State \( \rho' \gets \samplerhog(\supp \setminus \{x_i\}, f|_{x_i = u}, F|_{x_i = u} ) \) \label{algo2:line:recursivecall1}
        \State Set \( \rho(x_i) \gets u \), and for all \( x_j \in \supp \setminus \{x_i\} \), set \( \rho(x_j) \gets \rho'(x_j) \)
    \Else \label{algo2:line:cond3}
        \State Choose \( x_i \in \supp \) and \( u \in \{0,1\} \) such that $\left| \left\{ w \in F \,\middle|\, i \in S(f, w) \text{ and } w_i = u \right\} \right| \geq \frac{|F|}{4n}$  \label{algo2:line:chosenVar}
        \State With probability \( 1/3 \): \label{algo2:line:rand0}
            \State \hspace{1.5em} \( \rho_0 \gets \samplerhog(\supp \setminus \{x_i\}, f|_{x_i = 0}, F|_{x_i = 0} ) \) \label{algo2:line:recursivecall2}
            \State \hspace{1.5em} Set \( \rho(x_i) \gets 0 \), and for all \( x_j \in \supp \setminus \{x_i\} \), set \( \rho(x_j) \gets \rho_0(x_j) \)
        \State With probability \( 1/3 \): \label{algo2:line:rand1}
            \State \hspace{1.5em} \( \rho_1 \gets \samplerhog(\supp \setminus \{x_i\}, f|_{x_i = 1}, F|_{x_i = 1} ) \) \label{algo2:line:recursivecall3}
            \State \hspace{1.5em} Set \( \rho(x_i) \gets 1 \), and for all \( x_j \in \supp \setminus \{x_i\} \), set \( \rho(x_j) \gets \rho_1(x_j) \)
        \State Otherwise: \label{algo2:line:rand-free}
            \State \hspace{1.5em} $F' \gets  \left\{ w \in F \,\middle|\, i \in S(f, w) \right\} $ 
            \State \hspace{1.5em} \( \rho_* \gets \samplerhog(\supp \setminus \{x_i\}, f|_{x_i = u}, F'|_{x_i = u} ) \) \label{algo2:line:recursivecall4}
            \State \hspace{1.5em} Set \( \rho(x_i) \gets * \), and for all \( x_j \in \supp \setminus \{x_i\} \), set \( \rho(x_j) \gets \rho_*(x_j) \)
    \EndIf
\EndIf
\State \Return \( \rho \)
\end{algorithmic}
\end{algorithm}

Algorithm~\ref{algo2:restriction} describes how to sample random restrictions for a given separating set \( F \) w.r.t \( f \). We will show that the resulting distribution over restrictions is \( \ell \)-variable max-sesitivity for \( f \), where \( \ell = \Omega\left(\frac{\log |F|}{\log n}\right) \).

We begin with some observations concerning the validity of \cref{algo2:restriction}.
The step at line~\ref{algo2:line:chosenVar}, which selects an $x_i\in \supp$, is justified by the following observation.

\begin{observation}\label{algo2:obs2}
By \cref{claim:sensitive-density}, at line~\ref{algo2:line:chosenVar}, Algorithm~\ref{algo2:restriction} is guaranteed to find \( x_i \in \supp \) such that
\[
\left| \left\{ w \in F \,\middle|\, i \in S(f, w) \right\} \right| \geq \frac{|F|}{2n}.
\]
Moreover, choosing \( u \in \{0,1\} \) to be the more frequent value of the \( i \)-th coordinate among these inputs ensures that
\[
\left| \left\{ w \in F \,\middle|\, i \in S(f, w) \text{ and } w_i = u \right\} \right| \geq \frac{|F|}{4n}.
\]
\end{observation}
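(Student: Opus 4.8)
The plan is to obtain this observation directly from \cref{claim:sensitive-density} together with a one-line averaging argument over the value of a single coordinate. First I would check that the hypotheses of \cref{claim:sensitive-density} hold whenever line~\ref{algo2:line:chosenVar} is reached. The top-level call of \samplerhog\ receives, by its input specification, a nonempty set $F$ that is separating with respect to $f$; and \cref{claim:cons-closure}, together with the trivial fact that a subset of a separating set is again separating, shows that every recursive call is likewise made on a pair whose second component is separating with respect to its first — in particular this covers the call on line~\ref{algo2:line:recursivecall4}, where the set is first shrunk to $F' = \{w \in F : i \in S(f,w)\}$ and then restricted. Moreover, line~\ref{algo2:line:chosenVar} lies in the \texttt{else} branch entered only when $|F| > 2$ (line~\ref{algo2:line:cond3}), so in particular $|F| \ge 2$, as \cref{claim:sensitive-density} requires.

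Applying \cref{claim:sensitive-density} to the current pair $(f,F)$ then produces an index $i$ (with $x_i \in \supp$) and a subset $F' \subseteq F$ with $i \in S(f,w)$ for every $w \in F'$ and $|F'| \ge |F|/(2n)$. Since $\{w \in F : i \in S(f,w)\} \supseteq F'$, its cardinality is at least $|F|/(2n)$, which is the first displayed inequality. For the second, I would split $\{w \in F : i \in S(f,w)\}$ into the two classes according to whether $w_i = 0$ or $w_i = 1$; by pigeonhole the larger class has size at least half of $|F|/(2n)$, i.e.\ at least $|F|/(4n)$, and letting $u$ be the corresponding majority value gives exactly the bound in the statement.

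There is no real obstacle here: the observation is essentially a repackaging of \cref{claim:sensitive-density} plus a two-case pigeonhole. The only point deserving a sentence of care is the bookkeeping that the separation property is preserved all the way down the recursion — handled by \cref{claim:cons-closure} and closure of the separation property under taking subsets — since this is precisely what licenses invoking \cref{claim:sensitive-density} at line~\ref{algo2:line:chosenVar} in the first place.
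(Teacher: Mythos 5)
Your proposal is correct and is essentially the argument the paper intends: invoke \cref{claim:sensitive-density} (whose hypotheses hold at line~\ref{algo2:line:chosenVar} since the separation property is preserved down the recursion by \cref{claim:cons-closure} and under taking subsets, and $|F|>2$ in that branch), then take the majority value of the $i$-th coordinate to lose only a factor of $2$. Nothing is missing.
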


We next address the correctness of the recursive structure. Since the separation property is preserved under restrictions, the recursive calls made on lines~\ref{algo2:line:recursivecall1}, \ref{algo2:line:recursivecall2}, \ref{algo2:line:recursivecall3}, and~\ref{algo2:line:recursivecall4} satisfy the preconditions of Algorithm~\ref{algo2:restriction}.

We now observe a key structural property of the restriction \( \rho \) returned by the algorithm, which shows property~(2) of the max-sensitivity distribution.

\begin{claim}\label{claim:rhoform}
Let \( \rho \) be the final restriction returned by Algorithm~\ref{algo2:restriction} on input \( (\supp, f, F) \). Then there exists an input \( w \in F \) such that:
\[
\forall x_i \in \mathsf{SetVars}(\rho),\ \rho(x_i) = w_i, \quad \text{and} \quad \forall x_i \in \mathsf{FreeVars}(\rho),\ i \in S(f, w).
\]
\end{claim}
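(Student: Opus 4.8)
The plan is to prove the claim by induction on $|\supp|$, using that every recursive call in Algorithm~\ref{algo2:restriction} operates on a variable set that is strictly smaller (so the recursion terminates, and when $|\supp|=0$ the algorithm is necessarily in the base branch at line~\ref{algo2:line:terminate}). The one structural fact I will use repeatedly is a \emph{transfer observation}: for $j\neq i$, $u\in\{0,1\}$, and $w\in\{0,1\}^{\supp}$ with $w_i=u$, writing $w'=w|_{\supp\setminus\{x_i\}}$, we have $f(w)=f|_{x_i=u}(w')$ and $f(w\oplus 1_{\{j\}})=f|_{x_i=u}(w'\oplus 1_{\{j\}})$, hence $j\in S(f|_{x_i=u},w')$ iff $j\in S(f,w)$. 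This is immediate from the definitions and I would just record it as a one-line observation before the induction. The preconditions of every recursive call have already been checked in the text preceding the claim, so I do not re-verify them.

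For the base case (the algorithm enters the branch at line~\ref{algo2:line:terminate}), it picks some $w\in F$ and sets $\rho(x_i)=w_i$ for all $x_i\in\supp$; thus $\mathsf{SetVars}(\rho)=\supp$ and $\mathsf{FreeVars}(\rho)=\emptyset$, and both requirements hold for this $w$ — the first by construction, the second vacuously. For the inductive step, first consider the case where $x_i$ is fixed to some value $v\in\{0,1\}$ (this covers line~\ref{algo2:line:cond2} via line~\ref{algo2:line:recursivecall1}, and the two sub-cases at lines~\ref{algo2:line:recursivecall2},~\ref{algo2:line:recursivecall3}): the recursion runs on $f|_{x_i=v}$ and $F|_{x_i=v}$, returning a restriction $\rho'$ on $\supp\setminus\{x_i\}$ with $\rho|_{\supp\setminus\{x_i\}}=\rho'$. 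By the inductive hypothesis there is a witness $w'\in F|_{x_i=v}$ for $\rho'$; by definition of $F|_{x_i=v}$ lift it to the unique $w\in F$ with $w_i=v$ and $w|_{\supp\setminus\{x_i\}}=w'$. Then $\rho(x_i)=v=w_i$ and $\rho(x_j)=\rho'(x_j)=w'_j=w_j$ for all other set variables, giving the first requirement; and for $x_j\in\mathsf{FreeVars}(\rho)=\mathsf{FreeVars}(\rho')$ the inductive hypothesis gives $j\in S(f|_{x_i=v},w')$, which by the transfer observation gives $j\in S(f,w)$, giving the second.

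The remaining case is when $x_i$ is left free (line~\ref{algo2:line:recursivecall4}): here the recursion runs on $f|_{x_i=u}$ and $F'|_{x_i=u}$, where $F'=\{w\in F: i\in S(f,w)\}$ and $u$ is the value chosen in line~\ref{algo2:line:chosenVar}, returning $\rho_*$ with $\rho|_{\supp\setminus\{x_i\}}=\rho_*$. Apply the inductive hypothesis to get a witness $w'\in F'|_{x_i=u}$ for $\rho_*$, and lift it to $w\in F'$ with $w_i=u$ and $w|_{\supp\setminus\{x_i\}}=w'$. Since $w\in F'\subseteq F$ we have $w\in F$, and $w\in F'$ gives $i\in S(f,w)$, so the newly freed coordinate $x_i$ meets the second requirement; the coordinates of $\mathsf{FreeVars}(\rho_*)$ meet it via the transfer observation exactly as above, and the set variables agree with $w$ as above. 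The only point that takes genuine care — and the main obstacle — is precisely this last case: one must recurse on $F'$ rather than $F$, so that the witness inherited from the recursion is guaranteed to be sensitive at the coordinate $i$ being freed; that is exactly what the algorithm arranges when it takes the ``otherwise'' branch. Everything else is routine bookkeeping, matching $\mathsf{SetVars}(\rho)$ and $\mathsf{FreeVars}(\rho)$ with those of the recursively returned restriction and shuttling sensitivity statements between $f$ and $f|_{x_i=u}$ through the transfer observation.
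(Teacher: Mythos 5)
Your proposal is correct and follows essentially the same route as the paper: induction on $|\supp|$, a case split according to which branch the algorithm takes, lifting the witness $w'$ supplied by the inductive hypothesis to an extension $w$ (drawn from $F'$ in the free-variable branch so that the newly freed coordinate $i$ is sensitive), and transferring sensitivity of $f|_{x_i=u}$ at $w'$ to sensitivity of $f$ at $w$ — a step the paper uses implicitly and you merely make explicit as your ``transfer observation.'' No gaps; the only cosmetic difference is that you treat the terminating branch ($|F|\le 2$) uniformly as the non-recursing case rather than singling out $|\supp|=1$ as the paper does.
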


\begin{proof}
We prove the claim by induction on the size of the \(\supp\) set, which reduces with each recursive call. 

\paragraph{Base case (\( |\supp| = 1 \)).}
In this case, the input set \( F \) must have size at most 2. The algorithm simply fixes the lone variable in \(\supp\) to either 0 or 1 based on some \( w \in F \), and returns the corresponding restriction. The claim then holds trivially for this choice of \( w \).

\paragraph{Inductive step (\( |\supp| > 1 \)).}
We consider the possible return paths in the algorithm depending on the branching conditions and the randomness involved (lines~\ref{algo2:line:terminate}, \ref{algo2:line:cond2}, \ref{algo2:line:rand0}, \ref{algo2:line:rand1}, and~\ref{algo2:line:rand-free}):

\begin{enumerate}
    \item \textbf{Case \(|F| \leq 2\):} This is similar to the base case. The algorithm returns a restriction that sets all variables in \(\supp\) to match some \( w \in F \), and the claim follows directly.

    \item \textbf{Case: branch taken via line~\ref{algo2:line:cond2}, \ref{algo2:line:rand0}, or \ref{algo2:line:rand1}.}
    In these branches, the algorithm chooses some \( x_i \in \supp \) and \( u \in \{0,1\} \), and returns a restriction of the form $\rho = \rho' \cup \{x_i \gets u\}$, where
    \[
    \rho' \gets \samplerhog(\supp \setminus \{x_i\}, f|_{x_i = u}, F|_{x_i = u}).
    \]
    By the inductive hypothesis applied to \(\rho'\), there exists \( w' \in F|_{x_i = u} \) such that
    \[
    \forall x_j \in \mathsf{SetVars}(\rho'),\ \rho'(x_j) = w'_j, \quad \text{and} \quad \forall x_j \in \mathsf{FreeVars}(\rho'),\ j \in S(f|_{x_i = u}, w').
    \]
    Let \( w \in F \) be an extension of \( w' \) with \( x_i = u \). Then \( \rho(x_j) = w_j \) for all set variables \( x_j \), and for all free variables \( x_j \), we have \( j \in S(f, w) \) since \( j \in S(f|_{x_i = u}, w') \). Thus, the claim holds for this \( w \).

    \item \textbf{Case: branch taken via line~\ref{algo2:line:rand-free}.}
    In this case, for \( x_i \in \supp \) and \( u \in \{0,1\} \) selected in line~\ref{algo:line:chosenVar}, the $\rho$ returned is $\rho_* \cup \{x_i \gets *\}$ where 
    \[
    \rho_* \gets \samplerhog(\supp \setminus \{x_i\}, f|_{x_i = u}, F'|_{x_i = u}).
    \]
    for \(F' \gets \left\{ w \in F \mid i \in S(f, w) \right\} \). By the inductive hypothesis applied to \( \rho_* \), there exists \( w' \in F'|_{x_i = u} \) such that
    \[
    \forall x_j \in \mathsf{SetVars}(\rho_*),\ \rho_*(x_j) = w'_j, \quad \text{and} \quad \forall x_j \in \mathsf{FreeVars}(\rho_*),\ j \in S(f|_{x_i = u}, w').
    \]
    Let \( w \in F' \) be an extension of \( w' \) with \( x_i = u \). Since \( i \in S(f, w) \), by definition of \(F'\), and since \( x_i \) is free in \( \rho \), the required conditions are satisfied by \( w \) for the final restriction \( \rho = \rho_* \cup \{x_i \gets *\} \).
\end{enumerate}

This completes the inductive proof.
\end{proof}

As a consequence, for any restriction \( \rho \) in the support of the distribution induced by Algorithm~\ref{algo2:restriction}, the restricted function \( f|_\rho \) has sensitivity exactly \( \srho{\rho} \), thereby satisfying property~(2) of the max-sensitivity distribution.

We next show that the restriction \( \rho \) produced by Algorithm~\ref{algo2:restriction} leaves a significant number of variables free, establishing property~(1) of the max-sensitivity distribution. The argument closely mirrors that of the earlier restriction algorithm.

Algorithm~\ref{algo2:restriction} proceeds recursively and follows one of the following three branches:
\begin{itemize}
    \item If \( |F| \leq 2 \), the algorithm terminates and returns \( \rho \) immediately (line~\ref{algo2:line:terminate}).
    \item If the condition on line~\ref{algo2:line:cond2} is satisfied, the algorithm makes a single recursive call (line~\ref{algo2:line:recursivecall1}).
    \item Otherwise (line~\ref{algo2:line:cond3}), the algorithm selects one of three recursive calls (lines~\ref{algo2:line:recursivecall2}, \ref{algo2:line:recursivecall3}, \ref{algo2:line:recursivecall4}) uniformly at random, each with probability \( 1/3 \).
\end{itemize}

The recursion continues as long as \( |F| > 2 \), and halts when \( |F| \leq 2 \), at which point the algorithm backtracks to construct the final restriction.

As in earlier analyses, we classify recursive calls as either \emph{active} (when the condition on line~\ref{algo2:line:cond3} holds) or \emph{passive} (when the condition on line~\ref{algo2:line:cond2} holds). We argue that a substantial fraction of the calls must be active, and in each such call, a variable is left free with probability \( 1/3 \). Moreover, these choices are made independently across calls. Therefore, by standard concentration bounds, the number of free variables in the final restriction is close to its expectation, which is 1/3 the number of active calls. This leads to the following claim:

\begin{claim}\label{claim:sizeofrho2}
Let \( (\supp, f, F) \) satisfy the input requirements of Algorithm~\ref{algo2:restriction}, and suppose that \( |F| \geq 20(4n)^{60} \). Then, with probability at least \( 0.9 \), the restriction \( \rho = \samplerhog(\supp, f, F) \) produced by Algorithm~\ref{algo2:restriction} leaves at least \( \Omega\left(\frac{\log |F|}{\log n}\right) \) variables free.
\end{claim}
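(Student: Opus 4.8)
The plan is to follow the proof of \cref{claim:sizeofrho} essentially verbatim, replacing the potential function \( \dmnorm{Q} \) by the cardinality \( |F| \). First I would record how \( |F| \) evolves along a run of \cref{algo2:restriction}. In a \emph{passive} call (the test on line~\ref{algo2:line:cond2} holds) the recursion descends to \( F|_{x_i = u} \) with \( |F|_{x_i = u}| \ge (1 - \tfrac{1}{n})|F| \) directly from the branch condition. In an \emph{active} call (line~\ref{algo2:line:cond3}) every one of the three recursive branches retains at least a \( \tfrac{1}{4n} \)-fraction of \( F \): for the branches fixing \( x_i \) to \( 0 \) or \( 1 \), being in the active case means the line~\ref{algo2:line:cond2} test \emph{failed} for every coordinate and value, so in particular \( |F|_{x_i = 1-u}| < (1 - \tfrac{1}{n})|F| \), hence \( |F|_{x_i = u}| > |F|/n \ge |F|/(4n) \); for the free branch at line~\ref{algo2:line:recursivecall4}, the bound \( |F'|_{x_i = u}| \ge |F|/(4n) \) is exactly \cref{algo2:obs2}, which rests on \cref{claim:sensitive-density}.

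Next I would lower bound the number of active calls. Since each recursive call removes one variable from \( \supp \) and the recursion halts once \( |F| \le 2 \) (forced by the time \( |\supp| \le 1 \)), the total number of calls is \( t \le n \). If a run makes \( \ell \) active calls, then combining the decay factors along the run yields \( 2 \ge |F_{\mathrm{final}}| \ge |F| \, (1 - \tfrac{1}{n})^{t - \ell} (\tfrac{1}{4n})^{\ell} \ge \tfrac{1}{10} |F| (\tfrac{1}{4n})^{\ell} \), using \( (1 - \tfrac{1}{n})^{n} \ge \tfrac{1}{10} \) and \( t - \ell \le n \). Taking logarithms gives \( \ell \ge \ell^* := \log(|F|/20)/\log(4n) \), so every run contains at least \( \ell^* \) active calls, and the hypothesis \( |F| \ge 20(4n)^{60} \) gives \( \ell^* \ge 60 \). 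In each active call the chosen variable is left free (via line~\ref{algo2:line:rand-free}) with probability \( 1/3 \), and these coin tosses are independent across calls, so letting \( X_1, \dots, X_{\ell^*} \) be the free-indicators of the first \( \ell^* \) active calls, they are i.i.d.\ \( \mathrm{Bernoulli}(1/3) \) with mean \( \ell^*/3 \). A standard multiplicative Chernoff bound then gives \( \Pr[\sum_i X_i \le \ell^*/6] \le e^{-\ell^*/24} \le 0.1 \), and since \( \rho \) leaves at least \( \sum_{i=1}^{\ell^*} X_i \) variables free, with probability at least \( 0.9 \) it leaves at least \( \ell^*/6 = \Omega(\log|F|/\log n) \) free, proving the claim.

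The one place where this differs from \cref{claim:sizeofrho}, and hence the step to handle carefully, is the per-active-call lower bound on \( |F| \): there we invoked \cref{claim:balanced} to see the \( \ell_1 \)-norm drops by at most a factor \( 1/(2n) \) in an active step, whereas here there are three active sub-branches and I need a \emph{uniform} \( 1/(4n) \) guarantee across all of them. The free branch is handed to us by \cref{algo2:obs2}, but the \( x_i \gets 0 \) and \( x_i \gets 1 \) branches require the observation that the active case is entered precisely because no single coordinate fixing keeps a \( (1 - \tfrac{1}{n}) \)-fraction, which forces \emph{both} halves \( F|_{x_i=0} \) and \( F|_{x_i=1} \) to exceed \( |F|/n \). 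Everything else — the telescoping of the decay factors, the arithmetic that makes \( \ell^* \ge 60 \), and the concentration step (now with free-probability \( 1/3 \) rather than \( 1/2 \), which is why the exponent constant and the threshold \( 20(4n)^{60} \) are a bit larger than in \cref{claim:sizeofrho}) — is routine and parallels the earlier argument.
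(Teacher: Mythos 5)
Your proposal is correct and follows essentially the same route as the paper: passive calls lose at most a $(1-\tfrac1n)$ factor of $|F|$, every active branch retains at least a $\tfrac{1}{4n}$ fraction (the free branch via \cref{algo2:obs2}, the $0/1$ branches via failure of the line~\ref{algo2:line:cond2} test), telescoping against the terminal condition $|F|\le 2$ gives at least $\ell^* = \log(|F|/20)/\log(4n)$ active calls, and a Chernoff bound on the independent probability-$1/3$ free coins finishes the argument. Your explicit complement argument $|F|_{x_i=u}| = |F| - |F|_{x_i=1-u}| > |F|/n$ for the fixing branches is just a spelled-out version of the "balancedness" step the paper asserts, so there is no substantive difference.
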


\begin{proof}
We begin by showing that every execution of the algorithm encounters a substantial number of \emph{active} recursive calls. Suppose the algorithm makes \( t \) recursive calls in total, of which \( \ell \) are active. Since the size of \( \supp \) decreases by 1 with each recursive call, and the algorithm halts when \( |\supp| = 1 \) (which corresponds to \( |F| \leq 2 \)), we have \( t \leq n \).

We observe how the size of the separating set \( F \) evolves during recursion:
\begin{itemize}
    \item In a \emph{passive} call (i.e., when line~\ref{algo2:line:cond2} is satisfied), the size of \( F \) decreases by at most a factor of \( 1 - \frac{1}{n} \).
    
    \item In an \emph{active} call (i.e., when line~\ref{algo2:line:cond3} is satisfied), the reduction in the size of \( F \) depends on the specific recursive branch taken:
    \begin{itemize}
        \item For recursive calls on lines~\ref{algo2:line:recursivecall2} and \ref{algo2:line:recursivecall3}, the size of \( F \) decreases by at most a factor of \( 1/n \), owing to the balancedness condition enforced by line~\ref{algo2:line:cond3}.
        
        \item For the recursive call on line~\ref{algo2:line:recursivecall4}, the size of \( F \) decreases by at most a factor of \( 1/(4n) \), due to the choice of index \( i \) in line~\ref{algo2:line:chosenVar} (see \cref{algo2:obs2}).
    \end{itemize}
    Therefore, in any \emph{active} call, the size of \( F \) decreases by at most a factor of \( 1/(4n) \), regardless of which of the three recursive branches is chosen.
\end{itemize}

Since the algorithm halts and backtracks when the $|F|\leq 2$ , we obtain the following inequality:
\begin{align*}
2 &\geq (1 - 1/n)^{t - \ell} \cdot (1/4n)^{\ell} \cdot |F| \\
  &\geq (1 - 1/n)^n \cdot (1/4n)^{\ell} \cdot |F|
  && \text{(since \( t - \ell \leq n \))} \\
  &\geq \frac{1}{10} \cdot \left(\frac{1}{4n}\right)^{\ell} \cdot |F|
  && \text{(using \( (1 - 1/n)^n \geq 1/10 \) for \( n \geq 2 \))}.
\end{align*}

Taking logarithms and rearranging, we obtain:
\[
\ell \geq \frac{\log(|F|/20)}{\log(4n)}.
\]
Define \( \ell^* := \frac{\log(|F|/20)}{\log(4n)} \). Thus, every run of the algorithm contains at least \( \ell^* \) active recursive calls.

Let \( X_1, X_2, \dots, X_{\ell^*} \) be indicator random variables, where \( X_i = 1 \) if the variable chosen in the \( i \)-th active call is left free (which occurs with probability \( 1/3 \)), and \( 0 \) otherwise. These variables are independent by construction, and \( \mathbb{E}\left[\sum_{i=1}^{\ell^*} X_i\right] = \frac{\ell^*}{3}.\) By a standard Chernoff bound, we get:
\[
\Pr\left(\sum_{i=1}^{\ell^*} X_i \leq \frac{\ell^*}{6}\right) \leq e^{-\ell^*/24} \leq 0.1,
\]
where the last inequality follows from the assumption \( |F| \geq 20(4n)^{60} \).

Since the number of variables left free in the final restriction is at least \( \sum_{i=1}^{\ell^*} X_i \), we conclude that with probability at least \( 0.9 \), the algorithm leaves at least \( \ell^*/6 = \Omega\left(\frac{\log |F|}{\log n}\right) \) variables free.\qedhere

\end{proof}

Finally, we establish property~(3) of the max-sensitivity distribution, showing that for any generalized monomial \( M \), the degree of \( M|_\rho \) under the sampled restriction \( \rho \) exhibits exponential tail decay.

\begin{claim}\label{claim:rhoProp2}
Let \( \rho = \samplerhog(\supp, f, F) \) be the restriction returned by Algorithm~\ref{algo:restriction}. Then, for any generalized monomial \( M \) with \( \vars(M) \subseteq \{x_i \mid x_i \in \supp\} \), and any \( t \in \mathbb{N} \), we have:
\[
\Pr_\rho\left( \deg(M|_\rho) \geq t \right) \leq 2^{-t}.
\]
\end{claim}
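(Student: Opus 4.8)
The plan is to follow the template of the proof of \cref{claim:rhoProp}(b): an induction on $|\supp|$, now adapted to the two new features of \cref{algo2:restriction}, namely generalized monomials — whose literal at a coordinate $i$ may be either $x_i$ or $\bar x_i$ — and the three-way (rather than two-way) random branch taken in the active case. As there, the statement is trivial for $t=0$, so assume $t\ge 1$. In the base case $|\supp|=0$, and more generally whenever the algorithm takes the terminating branch at line~\ref{algo2:line:terminate} (i.e. $|F|\le 2$), every remaining variable gets fixed, so $M|_\rho$ contains no free variable, $\deg(M|_\rho)\le 0<t$, and the bound holds vacuously.

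For the inductive step we may thus assume $|\supp|\ge 1$ and $|F|>2$, and we split on the branch the algorithm takes. If the \emph{passive} branch at line~\ref{algo2:line:cond2} fires with chosen pair $(x_i,u)$, then $M|_\rho=(M|_{x_i=u})|_{\rho'}$ with $\rho'$ from the recursive call on line~\ref{algo2:line:recursivecall1}; since $M|_{x_i=u}$ is either $0$ or a generalized monomial over $\supp\setminus\{x_i\}$, the induction hypothesis applied to that call gives $\Pr_\rho[\deg(M|_\rho)\ge t]\le 2^{-t}$ immediately. If the \emph{active} branch at line~\ref{algo2:line:cond3} fires with chosen variable $x_i$, there are two sub-cases. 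When $x_i\notin\vars(M)$, each of the three equiprobable recursive branches leaves $M$ unchanged, and averaging the three induction-hypothesis bounds gives $\tfrac13(2^{-t}+2^{-t}+2^{-t})=2^{-t}$. When $x_i\in\vars(M)$, write $M=L_i\cdot M'$ where $L_i\in\{x_i,\bar x_i\}$ is the literal of $M$ at coordinate $i$ and $M'$ is a generalized monomial over $\supp\setminus\{x_i\}$ with $\deg(M')=\deg(M)-1$. Exactly one of the two fixing branches sets $x_i$ to the value that makes $L_i=0$: this kills $M$, so $\deg(M|_\rho)=0<t$ and this branch contributes $0$. The other fixing branch makes $L_i=1$, so $M|_\rho=M'|_{\rho'}$ and the induction hypothesis bounds its contribution by $\tfrac13\cdot 2^{-t}$. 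The free branch keeps $x_i$ alive, so $M|_\rho=L_i\cdot M'|_{\rho_*}$, whence $\deg(M|_\rho)\ge t$ exactly when $\deg(M'|_{\rho_*})\ge t-1$; by induction this has probability at most $2^{-(t-1)}$, contributing at most $\tfrac13\cdot 2^{-(t-1)}$. Summing the three branches,
\[
\Pr_\rho\bigl[\deg(M|_\rho)\ge t\bigr]\;\le\;\tfrac13\cdot 0+\tfrac13\cdot 2^{-t}+\tfrac13\cdot 2^{-(t-1)}\;=\;\tfrac13\bigl(2^{-t}+2\cdot 2^{-t}\bigr)\;=\;2^{-t},
\]
which completes the induction.

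A couple of points should be checked at the outset: the recursion is well-founded since $|\supp|$ strictly decreases at each call, and, as noted after \cref{algo2:obs2} (using \cref{claim:cons-closure}, closure of the separation property under restriction), each recursive call receives a nonempty set that is separating with respect to the correspondingly restricted function, so the induction hypothesis genuinely applies — and $\vars(M)\subseteq\supp$ is preserved along the recursion. The one place I expect to require care is the active $x_i\in\vars(M)$ sub-case: one must check that precisely one of the two fixing branches annihilates the generalized monomial — regardless of whether its literal at $i$ is $x_i$ or $\bar x_i$ — while the remaining fixing branch and the free branch lower the free-variable degree exactly as claimed, and that the $\tfrac13$–$\tfrac13$–$\tfrac13$ split is just tight enough to recover $2^{-t}$ (a two-way split would not close the recursion, which is exactly why \cref{algo2:restriction} branches three ways). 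Everything else is a faithful transcription of the argument for \cref{claim:rhoProp}(b).
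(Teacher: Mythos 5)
Your proof is correct and follows essentially the same route as the paper: induction on $|\supp|$, with the terminating/passive branches handled by the induction hypothesis directly, and the active branch split according to whether the monomial's literal at $x_i$ is killed, fixed to $1$, or kept alive, yielding $\tfrac13\cdot 0+\tfrac13\cdot 2^{-t}+\tfrac13\cdot 2^{-(t-1)}=2^{-t}$. The only cosmetic difference is that you treat the two possible literals $x_i,\bar x_i$ uniformly via $L_i$, where the paper argues for $x_i\in M$ and notes the $\bar x_i$ case is analogous.
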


\begin{proof}
We prove the claim by induction on \( |\supp| \).

\paragraph{Base case (\( |\supp| = 1 \)).}
Here, we must have \( |F| \leq 2 \), the restriction $\rho$ returned by the algorithm  sets all variables in \( \supp \) to either 0 or 1, based on some \(w\in F\). Thus, any monomial \( M \) becomes a constant under \( \rho \). The claim follows.

\paragraph{Inductive step (\( |\supp| > 1 \)).}
We consider the behavior of the algorithm based on the three possible branches (lines~\ref{algo2:line:terminate}, \ref{algo2:line:cond2}, and \ref{algo2:line:cond3}):

\begin{enumerate}
    \item \textbf{Case \(|F| \leq 2\):} Similar to the base case, \(\rho\) returned by the algorithm  sets all variables in \( \supp \) to either 0 or 1, based on some \(w\in F\), and hence \( \deg(M|_\rho) = 0 \).

    \item \textbf{Case where the ``if'' condition (line~\ref{algo2:line:cond2}) is satisfied:} 
    Suppose the condition is satisfied for some \( x_i \in \supp \) and some value \( u \in \{0,1\} \). The algorithm then returns the restriction \( \rho = \rho' \cup \{x_i \gets u\} \),  where \( \rho' \) is obtained from a recursive call with a strictly smaller support set. By the inductive hypothesis applied to \( \rho' \), for any generalized monomial \( M \), we have: 
    \[
    \Pr_\rho\left( \deg(M|_\rho) \geq t \right) \leq \Pr_{\rho'}\left( \deg(M|_{\rho'}) \geq t \right) \leq 2^{-t}.
    \]

    \item \textbf{Case: the \texttt{else} clause at line~\ref{algo:line:cond3} is executed.}  
Let \( x_i \in \supp \) be the variable selected in line~\ref{algo:line:chosenVar}, and let \( \rho_0, \rho_1, \rho_* \) be restrictions sampled from the recursive calls at lines~\ref{algo2:line:recursivecall2}, \ref{algo2:line:recursivecall3}, and \ref{algo2:line:recursivecall4}, respectively. For a generalized monomial \( M \), we consider two cases:

\begin{itemize}
  
  \item \textbf{\( M \) does not contain \( x_i \) or \( \bar{x}_i \):}  
  In this case, the monomial is unaffected by the assignment to \( x_i \). By applying induction hypothesis, we have:
  \[
  \Pr_{\rho}(\deg(M|_{\rho}) \geq t) = \tfrac{1}{3} \Pr_{\rho_0}(\deg(M|_{\rho_0}) \geq t) + \tfrac{1}{3} \Pr_{\rho_1}(\deg(M|_{\rho_1}) \geq t) + \tfrac{1}{3} \Pr_{\rho_*}(\deg(M|_{\rho_*}) \geq t) \leq 2^{-t} .
  \]
  
  \item \textbf{\( M \) contains \( x_i \) or \( \bar{x}_i \):}  
  Without loss of generality, suppose \( x_i \in M \) (the case \( \bar{x}_i \in M \) is analogous). Let \( M' = M / x_i \). Then:
        \[
        \Pr_\rho(\deg(M|_\rho) \geq t) = \frac{1}{3} \cdot 0 + \frac{1}{3} \Pr_{\rho_1}(\deg(M'|_{\rho_1}) \geq t) + \frac{1}{3} \Pr_{\rho_*}(\deg(M'|_{\rho_*}) \geq t-1),
        \]
        where the first term is zero because setting \( x_i \gets 0 \) kills the monomial. Using the inductive hypothesis:
        \[
        \Pr_\rho(\deg(M|_\rho) \geq t) \leq \frac{1}{3} \cdot 2^{-t} + \frac{1}{3} \cdot 2^{-(t-1)} = 2^{-t}.
        \]
\end{itemize}

\end{enumerate}
This completes the inductive proof.
\end{proof}

Applying Algorithm~\ref{algo2:restriction} to a separating set \( F \) for \( f \), and combining \cref{claim:rhoform}, \cref{claim:sizeofrho2}, and \cref{claim:rhoProp2}, we conclude that the existence of a large separating set implies the existence of a max-sensitivity distribution. This yields the following theorem.

\begin{theorem}\label{thm:separating-harddist}
Let \( f : \{0,1\}^n \to \{0,1\} \) be a Boolean function, and let \( F \subseteq \{0,1\}^n \) be a separating set for \( f \) with \( |F| \ge 20(4n)^{60} \). Then there exists an \( \ell \)-variable max-sensitivity distribution for \( f \), where \( \ell = \Omega\left(\frac{\log |F|}{\log n}\right) \).
\end{theorem}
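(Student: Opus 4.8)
The plan is to show that the distribution $\mathcal{D}$ over restrictions obtained by running $\samplerhog(\{x_1,\dots,x_n\}, f, F)$ (Algorithm~\ref{algo2:restriction}) is an $\ell$-variable max-sensitivity distribution for $f$ with $\ell = \Omega(\log |F|/\log n)$, by checking the three conditions of \cref{defn:harddist-sens} one at a time, each time simply invoking a structural claim already proved in this subsection. This mirrors exactly how \cref{thm:sparsity-harddist} was assembled from \cref{claim:sizeofrho} and \cref{claim:rhoProp}.

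For condition~(1), I would quote \cref{claim:sizeofrho2} verbatim: since the hypothesis gives $|F| \ge 20(4n)^{60}$, with probability at least $0.9$ the sampled $\rho$ leaves at least $\ell^*/6 = \Omega(\log |F|/\log n)$ variables free, and we set $\ell$ to be this quantity. For condition~(2), I would use \cref{claim:rhoform}, which for every $\rho$ in the support of $\mathcal{D}$ produces an input $w \in F$ with $\rho(x_i) = w_i$ on $\mathsf{SetVars}(\rho)$ and $i \in S(f,w)$ for every $x_i \in \mathsf{FreeVars}(\rho)$. The one extra line of argument is that $w|_{\mathsf{FreeVars}(\rho)}$ is then an input of $f|_\rho$ on which every free coordinate is sensitive: flipping a free coordinate $i$ at this input corresponds to evaluating $f$ at $w \oplus 1_{\{i\}}$, which differs from $f(w)$ precisely because $i \in S(f,w)$; hence $\s(f|_\rho) \ge \srho{\rho}$, and since $f|_\rho$ depends on only $\srho{\rho}$ variables, equality holds. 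For condition~(3), I would cite \cref{claim:rhoProp2}, which already gives the required bound $\Pr_\rho[\deg(M|_\rho) \ge t] \le 2^{-t}$ for every generalized monomial $M$ and every $t \in \mathbb{N}$.

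The only point that needs a word of care — the nearest thing to an obstacle — is ensuring the recursion is well defined, i.e.\ that the separation invariant is maintained so that \cref{claim:rhoform}, \cref{claim:sizeofrho2}, and \cref{claim:rhoProp2} actually apply at the top-level call and at every recursive call. This is covered by the hypothesis that $F$ is separating w.r.t.\ $f$, by \cref{claim:cons-closure} (the separation property is preserved under restriction, so each recursive call receives a valid separating set for the restricted function), and by \cref{algo2:obs2} (line~\ref{algo2:line:chosenVar} always finds a suitable $x_i$ and $u$). Once these are noted, the three claims chain together and \cref{thm:separating-harddist} follows immediately, with the same $\ell = \Omega(\log|F|/\log n)$ appearing in all three conditions.
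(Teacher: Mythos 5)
Your proposal is correct and follows essentially the same route as the paper, which proves \cref{thm:separating-harddist} precisely by running \samplerhog\ on $(\{x_1,\dots,x_n\},f,F)$ and combining \cref{claim:rhoform}, \cref{claim:sizeofrho2}, and \cref{claim:rhoProp2}, with \cref{claim:cons-closure} and \cref{algo2:obs2} guaranteeing the recursion is well defined. Your extra sentence deriving $\s(f|_\rho)=\srho{\rho}$ from the witness $w$ of \cref{claim:rhoform} is exactly the consequence the paper records immediately after that claim.
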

\paragraph{Putting Everything Together.}
We first show that the existence of a max-sensitivity distribution forces any generalized polynomial approximating \( f \) to have large sparsity and \( \ell_1 \)-norm. Combined with \cref{thm:separating-harddist}, this implies that a large separating set yields large approximate generalized sparsity and large approximate generalized \( \ell_1 \)-norm.

We will use the following classical result of Nisan and Szegedy~\cite{nisan1994degree}, which relates the approximate degree of a Boolean function to its sensitivity.

\begin{theorem}[Nisan and Szegedy~\cite{nisan1994degree}]\label{thm:sens-adegree}
Let \( f : \{0,1\}^n \to \{0,1\} \) be a Boolean function. Then,
\[
\adeg(f) \geq \sqrt{\frac{\s(f)}{6}}.
\]
\end{theorem}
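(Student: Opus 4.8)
The plan is to reduce the arbitrary Boolean function $f$ to one that is maximally sensitive at a single point, and then lower bound the approximate degree of such a function by univariate symmetrization together with a classical inequality on univariate polynomials (this is exactly the route Nisan and Szegedy take). First I would fix an input $x \in \bool^n$ with $|S(f,x)| = \s(f) =: s$ and let $B = S(f,x)$. Fixing every coordinate outside $B$ to its value in $x$, and replacing $x_i$ by $1-x_i$ for each $i \in B$ with $x_i = 1$, yields a function $g : \bool^s \to \bool$ with $g(0^s) = f(x)$ and $g(e_i) \ne g(0^s)$ for every standard basis vector $e_i$. Since restricting variables and complementing variables neither increases degree nor destroys the pointwise-approximation property, $\adeg(g) \le \adeg(f)$, so it suffices to prove $\adeg(g) \ge \sqrt{s/6}$; replacing $g$ by $1-g$ if needed, assume $g(0^s)=0$ and hence $g(e_i)=1$ for all $i$.

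Next I would symmetrize. Let $p$ be a multilinear polynomial of degree $d := \adeg(g)$ with $|p(y)-g(y)| \le 1/3$ for all $y\in\bool^s$, so $p(y) \in [-1/3,4/3]$ on the whole cube. Define $q(t) := \binom{s}{t}^{-1}\sum_{|y|=t} p(y)$; by the standard symmetrization lemma, $q$ agrees on $\{0,1,\dots,s\}$ with a real univariate polynomial, also called $q$, of degree at most $d$. Averaging gives $q(i)\in[-1/3,4/3]$ for every integer $0\le i\le s$, while $q(0)=p(0^s)\le 1/3$ and $q(1)=\tfrac1s\sum_{i=1}^s p(e_i)\ge 2/3$, so $q$ is bounded at integer points but jumps by at least $1/3$ between the consecutive integers $0$ and $1$. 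It then remains to show any such polynomial has degree $\Omega(\sqrt s)$: by the mean value theorem there is $\xi\in[0,1]\subseteq[0,s]$ with $|q'(\xi)|\ge 1/3$; either $d\ge\sqrt{s/6}$ already, or $d$ is small enough that a Coppersmith–Rivlin-type bound controls $\max_{[0,s]}|q|$ by an absolute constant $M$, and then A.\,A.\ Markov's inequality $\max_{[0,s]}|q'|\le \tfrac{2d^2}{s}\max_{[0,s]}|q|$ yields $\tfrac13 \le \tfrac{2d^2M}{s}$, i.e.\ $d=\Omega(\sqrt s)$. Tracking the constants carefully through this argument (the univariate lemma of Nisan and Szegedy) gives the stated $d \ge \sqrt{s/6}$.

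The main obstacle is the last step: a polynomial bounded at the integer points $\{0,\dots,s\}$ need not be bounded on the interior of $[0,s]$ once its degree exceeds roughly $\sqrt s$, so one cannot naively apply Markov's inequality to $q$. The fix is the case split / bootstrapping indicated above — equivalently, Nisan and Szegedy's quantitative form of the Coppersmith–Rivlin estimate — and it is precisely this interplay between the integer-point bound and the interval bound that produces the $\sqrt s$ scaling and pins down the constant.
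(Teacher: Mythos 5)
The paper does not prove this statement itself---it is cited verbatim from Nisan and Szegedy---and your argument (restrict to the sensitive coordinates and normalize so the sensitive point is $0^s$, symmetrize the approximating polynomial, then bound the degree of the resulting univariate polynomial via the Ehlich--Zeller/Rivlin--Cheney-style bootstrapping of Markov's inequality) is exactly the standard proof from that source. The constants do come out as claimed: the symmetrized polynomial takes values in $[-1/3,4/3]$ at the integers $0,\dots,s$ and has derivative at least $1/3$ somewhere in $[0,1]$, which yields $d \ge \sqrt{\tfrac{(1/3)s}{1/3+5/3}} = \sqrt{s/6}$.
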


\begin{claim}\label{claim:harddist-gsparsity}
Let \( f : \{0,1\}^n \to \{0,1\} \), and suppose there exists an \( \ell \)-variable max-sensitivity distribution \( \mathcal{D} \) for \( f \). Then,
\[
    \log \agspar(f) = \Omega(\sqrt{\ell}).
\]
\end{claim}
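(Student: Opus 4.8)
The plan is to run exactly the contradiction argument used for \cref{claim:harddist-sparsity}, with two substitutions: ordinary monomials and polynomials are replaced by generalized ones, and the degree-versus-approximate-degree relation (\cref{thm:degree-apxdegree}) is replaced by the sensitivity-versus-approximate-degree relation of Nisan and Szegedy (\cref{thm:sens-adegree}). The key point that makes this work is that property~(3) of an $\ell$-variable max-sensitivity distribution (\cref{defn:harddist-sens}) already controls the degree of \emph{generalized} monomials under $\rho$, so the union-bound step goes through unchanged for generalized polynomials.

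Concretely, I would fix $k := \lfloor \sqrt{\ell}/3 \rfloor$ (any integer that is $\Theta(\sqrt{\ell})$ and lies strictly below $\sqrt{\ell/6}$ works) and assume for contradiction that $f$ is $1/3$-approximated by a generalized polynomial $Q$ with $\gspar(Q) \le \tfrac{1}{10}\, 2^{k}$. Sampling $\rho \sim \mathcal{D}$ and applying property~(3) to each generalized monomial of $Q$, a union bound gives $\Pr_\rho[\deg(Q|_\rho) \ge k] \le \gspar(Q)\cdot 2^{-k} \le \tfrac{1}{10}$; property~(1) gives $\Pr_\rho[\srho{\rho} \ge \ell] \ge 0.9$. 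Hence with probability at least $0.8$ there is a restriction $\rho$ with $\srho{\rho} \ge \ell$ and $\deg(Q|_\rho) < k$; I would fix such a $\rho$.

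Since $Q$ approximates $f$ pointwise within $1/3$, so does $Q|_\rho$ for $f|_\rho$; and substituting $\bar{x}_i = 1 - x_i$ turns a generalized polynomial of degree $<k$ into an ordinary polynomial of the same degree, so $\adeg(f|_\rho) < k$ (using $\adeg = \gadeg$, \cref{remark:gpoly}). On the other hand, property~(2) yields $\s(f|_\rho) = \srho{\rho} \ge \ell$, so \cref{thm:sens-adegree} gives $\adeg(f|_\rho) \ge \sqrt{\s(f|_\rho)/6} \ge \sqrt{\ell/6} > k$, a contradiction. Therefore every generalized polynomial $1/3$-approximating $f$ has sparsity more than $\tfrac{1}{10}\, 2^{\lfloor \sqrt{\ell}/3\rfloor}$, i.e.\ $\log \agspar(f) = \Omega(\sqrt{\ell})$.

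I do not expect a genuine obstacle: the heavy lifting—building the max-sensitivity distribution from a large separating set—is already done in \cref{algo2:restriction} and \cref{thm:separating-harddist}. The only things needing care are bookkeeping of constants (choosing $k$ strictly below $\sqrt{\ell/6}$, and noting the bound is vacuous when $\ell$ is below an absolute constant, where the $\Omega(\cdot)$ absorbs the additive $\log 10$ loss) and making sure property~(3) is invoked for generalized monomials rather than ordinary ones. The proof of the $\ell_1$-norm analog will be the same, with Markov's inequality and \cref{thm:error-reduction} inserted exactly as in \cref{claim:harddist-approxl1}.
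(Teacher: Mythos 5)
Your proposal is correct and follows essentially the same argument as the paper: a union bound over generalized monomials via property~(3), property~(1) to get many free variables, and property~(2) combined with the Nisan--Szegedy sensitivity bound (\cref{thm:sens-adegree}) to derive the contradiction, with $k = \Theta(\sqrt{\ell})$ chosen just below $\sqrt{\ell/6}$ exactly as the paper does. The extra remark about replacing $\bar{x}_i$ by $1-x_i$ to pass from generalized to ordinary degree is a minor detail the paper handles via \cref{remark:gpoly}, and does not change the approach.
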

\begin{proof}
Let \( \mathcal{D} \) be an \( \ell \)-variable max-sensitvity distribution for \( f \). Suppose, for contradiction, that the claimed bound does not hold. Let \( k = \sqrt{\ell / 6} \) and assume there exists a real generalized polynomial  \( Q \) that \( 1/3 \)-approximates \( f \) and has sparsity
\[
\spar(Q) \leq \frac{1}{10} \cdot 2^{k}.
\] 
We will argue that such a polynomial cannot exist, thereby proving the claim.

Sample a restriction \( \rho \sim \mathcal{D} \), and consider the restricted polynomial \( Q|_\rho \). By property (3) of \( \mathcal{D} \), the probability that any fixed generalized monomial in \( Q \) has degree at least \( k \) under \( \rho \) is at most \( 2^{-k} \). Applying a union bound over all monomials in \( Q \), we have
\[
\Pr_\rho\left(\deg(Q|_\rho) \geq k \right) \leq \spar(Q) \cdot 2^{-k} \leq \frac{1}{10}.
\]

By property (1) of \( \mathcal{D} \), with probability at least \( 0.9 \), \( \rho \) leaves at least \( \ell \) variables free. Thus, with probability at least \( 0.8 \), both of the following hold:
\[
\srho{\rho} \geq \ell \quad \text{and} \quad \deg(Q|_\rho) < k.
\]
Fix such a restriction \( \rho \). Then \( Q|_\rho \) is a polynomial of degree less than \( k \) that \( 1/3 \)-approximates \( f|_\rho \). By property (2) of \( \mathcal{D} \), we have:
\[
\s(f|_\rho) = \srho{\rho} \geq \ell = 6 \cdot k^2 > 6 \cdot \left( \deg(Q|_\rho) \right)^2 \geq 6 \cdot \adeg(f|_\rho)^2.
\]
This contradicts the relationship between approximate degree and sensitivity \cref{thm:sens-adegree}. Hence, our assumption was false, and the claim follows.
\end{proof}

Combining the above with Grolmusz’s theorem~\cite{grolmusz1997power,zhang2014efficient} (see \cref{rmk:grolmusz}), which gives
\[
\log \agspar(f) = O\left( \log \gdmnormapprox{f} + \log n \right),
\]
we also obtain a lower bound on the approximate generalized \( \ell_1 \)-norm from the existence of a max-sensitivity distribution.  While this approach incurs an extra additive \( \log n \) loss compared to the bound above, it can be avoided by directly arguing as in \cref{claim:harddist-approxl1}. Since the proof involves no new ideas, we omit the details for brevity and state the resulting optimal bound:

\begin{claim}\label{claim:harddist-glnorm}
Let \( f : \{0,1\}^n \to \{0,1\} \), and suppose there exists an \( \ell \)-variable max-sensitivity distribution \( \mathcal{D} \) for \( f \). Then,
\[
    \log \gdmnormapprox{f} = \Omega(\sqrt{\ell}).
\]
\end{claim}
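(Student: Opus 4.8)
The plan is to reprise the argument of \cref{claim:harddist-approxl1} almost verbatim, swapping ordinary monomials for generalized ones and the degree-versus-approximate-degree relation for the sensitivity-versus-approximate-degree relation of \cref{thm:sens-adegree}. Suppose for contradiction that there is a generalized polynomial \( Q = \sum_S q_S M_S \) (the \( M_S \) being generalized monomials) that \( 1/3 \)-approximates \( f \) with \( \gdmnorm{Q} \leq \tfrac{1}{100} \cdot 2^{k} \), where \( k := \tfrac{1}{c_1}\sqrt{\ell/6} \) and \( c_1 \) is the constant from the error-reduction bound (\cref{thm:error-reduction}).

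First I would sample \( \rho \sim \mathcal{D} \) and control the \( \ell_1 \)-weight carried by the high-degree part of \( Q|_\rho \). Writing \( \gdmnorm{P}^{\geq d} \) for the sum of \( |a_S| \) over generalized monomials of degree at least \( d \) in \( P \), I note that a generalized monomial of degree \( < k \) can never reach degree \( \geq k \), while one of degree \( \geq k \) does so with probability at most \( 2^{-k} \) by property~(3) of a max-sensitivity distribution (\cref{defn:harddist-sens}). Hence \( \mathbb{E}_\rho\bigl[\gdmnorm{Q|_\rho}^{\geq k}\bigr] \leq \gdmnorm{Q}\cdot 2^{-k} \leq 1/100 \), so by Markov, with probability at least \( 0.9 \) we have \( \gdmnorm{Q|_\rho}^{\geq k} < 0.1 \); combining with property~(1), with probability at least \( 0.8 \) a random \( \rho \sim \mathcal{D} \) satisfies both \( \srho{\rho} \geq \ell \) and \( \gdmnorm{Q|_\rho}^{\geq k} < 0.1 \). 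Fix such a \( \rho \). Deleting from \( Q|_\rho \) all generalized monomials of degree at least \( k \) yields \( \bar{Q} \) with \( \deg(\bar{Q}) < k \) that \( 0.44 \)-approximates \( f|_\rho \); substituting \( \bar{x}_i \mapsto 1-x_i \) turns \( \bar{Q} \) into an ordinary polynomial of the same degree (\cref{remark:gpoly}), so \( \adeg_{0.44}(f|_\rho) < k \), and by \cref{thm:error-reduction} we get \( \adeg(f|_\rho) < c_1 k = \sqrt{\ell/6} \). On the other hand, property~(2) gives \( \s(f|_\rho) = \srho{\rho} \geq \ell \), so \cref{thm:sens-adegree} forces \( \adeg(f|_\rho) \geq \sqrt{\s(f|_\rho)/6} \geq \sqrt{\ell/6} \), a contradiction. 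Therefore no such \( Q \) exists, and \( \log \gdmnormapprox{f} \geq k - O(1) = \Omega(\sqrt{\ell}) \).

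I do not expect a genuine obstacle here: the one place requiring care is that property~(3) of \cref{defn:harddist-sens} is phrased for \emph{generalized} monomials, which is exactly what makes the union-bound/Markov step valid for generalized polynomials, and that the error-reduction bound must be applied only after passing from the truncated generalized polynomial to an ordinary one—harmless since \( \gadeg = \adeg \) (\cref{remark:gpoly}). (One could instead derive a weaker bound by combining \cref{claim:harddist-gsparsity} with Grolmusz's theorem, but that loses an additive \( \log n \) factor, which is why the direct argument above is preferable.)
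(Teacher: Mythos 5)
Your proof is correct and matches what the paper intends: the paper explicitly omits the proof of this claim, saying it follows by ``directly arguing as in \cref{claim:harddist-approxl1}'' with sensitivity (via \cref{thm:sens-adegree}) playing the role that degree played there, which is precisely your truncation-plus-error-reduction argument applied to generalized monomials using property~(3) of \cref{defn:harddist-sens}. No issues.
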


Combining \cref{thm:separating-harddist} with \cref{claim:harddist-gsparsity} and \cref{claim:harddist-glnorm}, we obtain the following consequences of the existence of a large separating set:

\begin{theorem}\label{thm:cons-apxgspar}
Let \( f : \{0,1\}^n \to \{0,1\} \) be a Boolean function, and let \( F \subseteq \{0,1\}^n \) be a separating set with respect to \( f \). Then,
\[
\log \agspar(f) = \Omega\left( \left( \frac{\log |F|}{\log n} \right)^{1/2} \right).
\]
\end{theorem}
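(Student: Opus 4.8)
The statement is obtained by chaining the two structural results proved just above. \cref{thm:separating-harddist} converts a large separating set into an $\ell$-variable max-sensitivity distribution with $\ell = \Omega(\log|F|/\log n)$, and \cref{claim:harddist-gsparsity} converts any such distribution into the lower bound $\log\agspar(f) = \Omega(\sqrt{\ell})$. Composing the two gives $\log\agspar(f) = \Omega\big((\log|F|/\log n)^{1/2}\big)$, which is exactly the claim. So the proof is essentially an assembly step.

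In more detail, I would first dispose of the small regime. If $|F| < 20(4n)^{60}$, then $\log|F| = O(\log n)$, so $(\log|F|/\log n)^{1/2} = O(1)$ and the asserted $\Omega$-bound carries no content (its right-hand side is bounded); hence we may assume $|F| \ge 20(4n)^{60}$. In this regime \cref{thm:separating-harddist} applies and yields an $\ell$-variable max-sensitivity distribution $\mathcal{D}$ for $f$ with $\ell = \Omega(\log|F|/\log n)$. Feeding $\mathcal{D}$ into \cref{claim:harddist-gsparsity} gives $\log\agspar(f) = \Omega(\sqrt{\ell}) = \Omega\big((\log|F|/\log n)^{1/2}\big)$, completing the argument. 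Running \cref{claim:harddist-glnorm} in place of \cref{claim:harddist-gsparsity} along the same lines produces the companion estimate $\log\gdmnormapprox{f} = \Omega\big((\log|F|/\log n)^{1/2}\big)$, which will be needed for the $\ell_1$-norm part of \cref{thm:mon-gen-sparsity}.

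I do not expect a genuine obstacle at this step: all of the difficulty has already been discharged inside \cref{algo2:restriction} and its analysis — \cref{claim:rhoform} (property~(2): $f|_\rho$ retains full sensitivity), \cref{claim:sizeofrho2} (property~(1): enough free variables survive), and \cref{claim:rhoProp2} (property~(3): exponential degree tail for \emph{generalized} monomials) — together with the Nisan--Szegedy sensitivity-to-approximate-degree inequality \cref{thm:sens-adegree} invoked inside \cref{claim:harddist-gsparsity}. The one bookkeeping point worth double-checking is that the polynomial threshold $20(4n)^{60}$ appearing in \cref{thm:separating-harddist} is harmless; it is, since below that threshold the target quantity is itself $O(1)$ and can be absorbed into the hidden constant of the $\Omega$.
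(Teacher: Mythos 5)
Your proposal is correct and matches the paper's own (implicit) argument exactly: the paper obtains \cref{thm:cons-apxgspar} precisely by combining \cref{thm:separating-harddist} with \cref{claim:harddist-gsparsity}, and your handling of the small-$|F|$ regime (absorbing the case $|F| < 20(4n)^{60}$ into the constant of the $\Omega$) is the natural way to reconcile the threshold in \cref{thm:separating-harddist} with the unconditional statement of the theorem.
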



\begin{theorem}\label{thm:cons-apxgl1}
Let \( f : \{0,1\}^n \to \{0,1\} \) be a Boolean function, and let \( F \subseteq \{0,1\}^n \) be a separating set with respect to \( f \). Then,
\[
\log \gdmnormapprox{f} = \Omega\left( \left( \frac{\log |F|}{\log n} \right)^{1/2} \right).
\]
\end{theorem}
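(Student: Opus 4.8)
The plan is to derive this as an immediate corollary of \cref{thm:separating-harddist} and \cref{claim:harddist-glnorm}, in exact parallel with the proof of \cref{thm:cons-apxgspar}. First I would dispose of the degenerate regime: if $|F| < 20(4n)^{60}$ then $\log |F|/\log n = O(1)$, so the right-hand side is an absolute constant and the bound holds once the hidden constant is chosen small enough. (Here one uses that $f$ is non-constant whenever $|F| \ge 2$ --- the separation property applied to two distinct inputs of $F$ forces a sensitive coordinate --- and that any generalized polynomial $1/3$-approximating a non-constant Boolean function has $\ell_1$-norm bounded below by an absolute constant.)

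So assume $|F| \ge 20(4n)^{60}$. By \cref{thm:separating-harddist}, $f$ admits an $\ell$-variable max-sensitivity distribution $\mathcal{D}$ with $\ell = \Omega(\log |F|/\log n)$. Feeding $\mathcal{D}$ into \cref{claim:harddist-glnorm} gives
\[
\log \gdmnormapprox{f} = \Omega(\sqrt{\ell}) = \Omega\!\left(\left(\frac{\log |F|}{\log n}\right)^{1/2}\right),
\]
which is exactly the desired conclusion.

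Since \cref{claim:harddist-glnorm} is the only ingredient used here that is stated without proof in the excerpt, the real work --- and the step I expect to be the main obstacle --- is establishing it; this is a near-verbatim adaptation of \cref{claim:harddist-approxl1} with the role of degree replaced by sensitivity. Concretely: suppose some generalized polynomial $Q$ with $\gdmnorm{Q} \le \tfrac{1}{100} 2^{k}$ and $k = \Theta(\sqrt{\ell})$ (with a carefully chosen constant) $1/3$-approximates $f$. Using property~(3) of $\mathcal{D}$ one bounds $\mathbb{E}_\rho\!\left[\gdmnorm{Q|_\rho}^{\ge k}\right] \le \gdmnorm{Q}\cdot 2^{-k} \le \tfrac{1}{100}$, so by Markov together with property~(1) there is a restriction $\rho$ in the support of $\mathcal{D}$ that simultaneously leaves $\srho{\rho} \ge \ell$ variables free and has $\gdmnorm{Q|_\rho}^{\ge k} < 0.1$. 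Discarding the degree-$\ge k$ tail of $Q|_\rho$ yields a generalized polynomial of degree $< k$ that $0.44$-approximates $f|_\rho$; error reduction (\cref{thm:error-reduction}) boosts it to a $1/3$-approximation of degree $O(k) = O(\sqrt{\ell})$, hence $\adeg(f|_\rho) = O(\sqrt{\ell})$. But property~(2) of $\mathcal{D}$ gives $\s(f|_\rho) = \srho{\rho} \ge \ell$, so \cref{thm:sens-adegree} forces $\adeg(f|_\rho) = \Omega(\sqrt{\ell})$ with a strictly larger constant --- a contradiction once the constant hidden in $k$ is fixed appropriately. The only mildly delicate point is the tail-truncation step, but it is clean because the error it introduces is additive and bounded by the discarded $\ell_1$-mass; generalized monomials restrict just like ordinary ones (each becomes $0$ or a lower-degree generalized monomial), so I anticipate no genuinely new difficulty.
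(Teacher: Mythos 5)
Your proposal is correct and follows essentially the same route as the paper: the theorem is obtained by combining \cref{thm:separating-harddist} with \cref{claim:harddist-glnorm}, and the paper likewise proves the latter (implicitly, omitting details) by repeating the argument of \cref{claim:harddist-approxl1} with sensitivity and \cref{thm:sens-adegree} playing the role of degree and \cref{thm:degree-apxdegree}, exactly as in your sketch of the tail-truncation and error-reduction step.
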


\subsection{Exact vs Approximate Generalized Measures for Monotone Functions}\label{subsec:sparg-asparg-monotone}

We now relate the generalized sparsity and generalized \( \ell_1 \)-norm of a monotone function to their approximate counterparts. This is done by combining \cref{thm:dsize_upper_bound_by_cover_number} with the results from the previous section.

\thmgspar*

\begin{proof}
From \cref{thm:dsize_upper_bound_by_cover_number}, we have:
\[
\log M(f) = \Omega\left(\sqrt{\frac{\log \gspar(f)}{\log n}}\right)
\quad \text{and} \quad
\log M(f) = \Omega\left(\sqrt{\frac{\log \gdmnorm{f}}{\log n}}\right).
\]
This implies that either the number of critical 1-inputs or the number of critical 0-inputs is large. Without loss of generality, assume \( |M_1(f)| \geq M(f)/2 \). 

Since both the sets of critical 1-inputs and critical 0-inputs are separating with respect to \( f \), we can apply \cref{thm:cons-apxgspar} and \cref{thm:cons-apxgl1} to the set \( M_1(f) \), yielding:
\[
\log \gspar(f) = O\left(\log^2 |M_1(f)| \cdot \log n\right)
= O\left((\log \agspar(f))^4 \cdot (\log n)^3\right),
\]
\[
\log \gdmnorm{f} = O\left(\log^2 |M_1(f)| \cdot \log n\right)
= O\left((\log \gdmnormapprox{f})^4 \cdot (\log n)^3\right),
\]
as claimed.
\end{proof}

\subsection{Implications for Decision Tree Size in the Ordinary Query Model} \label{subsec:gspar-impl}

The measures \( \log \gspar(f) \) and \( \log \gdmnorm{f} \) are related to the decision tree size \( \log \DSize(f) \) in the ordinary query model, as noted in \cref{remark:gspar-dsize}. Specifically, any decision tree of size \( s \) computing \( f \) can be converted into a generalized polynomial for \( f \) with sparsity and \( \ell_1 \)-norm at most \( s \).

For monotone functions, applying our result \cref{thm:mon-gen-sparsity} together with \cref{remark:gspar-dsize}, we obtain:
\corrgenspar*
\begin{proof}
For (1) of both (a) and (b), assume $\RSize(f) = s$, i.e., there exists a distribution $\mathcal{D}$ over decision trees, each of size at most $s$, such that for every input $x \in \{0,1\}^n$, sampling a decision tree $T$ from $\mathcal{D}$ and evaluating it on $x$ yields:
\[
\Pr_{T \sim \mathcal{D}}[T(x) = f(x)] \geq 2/3.
\]
By a standard Chernoff bound argument, we can assume $\mathcal{D}$ is supported on $O(n)$ trees, as such a distribution always exists. Each decision tree in the support of $\mathcal{D}$ can be converted into a generalized polynomial of sparsity and \( \ell_1 \)-norm at most \( s \) that agrees with the tree.  Taking a convex combination of these polynomials (weighted by \( \mathcal{D} \)) gives a generalized polynomial that \( 1/3 \)-approximates \( f \) with sparsity \( O(sn) \) and \( \ell_1 \)-norm \( \leq s \), implying $\agspar(f) = O(s n)$ and $\gdmnormapprox{f} \leq s$. Taking logarithms gives the desired bound (1) of (a) and (b).

For (2), we apply \cref{thm:mon-gen-sparsity} together with \cref{remark:gspar-dsize}, obtaining:
\[
\log \DSize(f) = O(\log^2 M(f) \cdot \log n) = O\left((\log \agspar(f))^4 \cdot (\log n)^3\right),
\]
\[
\log \DSize(f) = O(\log^2 M(f) \cdot \log n) = O\left((\log \gdmnormapprox{f})^4 \cdot (\log n)^3\right).\qedhere
\]
\end{proof}

Thus, for monotone functions, the complexity measures \( \gspar(f) \), \( \agspar(f) \), \( \gdmnorm{f} \), \( \gdmnormapprox{f} \), \( \DSize(f) \), and \( \RSize(f) \) are all polynomially related on the logarithmic scale, up to polylogarithmic factors in \( n \). In contrast, such a relationship fails for general functions; see \cref{rmk:sinkgspar}. Specifically, there exists a function \( f \) on \( n \) bits with \( \gspar(f) = O(\sqrt{n}) \) but \( \RSize(f) = 2^{\Omega(\sqrt{n})} \).

\bibliographystyle{plain}
\bibliography{BoolFnrefs}
\end{document}